\newcommand{\dfree}[3][\delta]{%
    \ifthenelse{\equal{#2}{} }
    {\cD_{#1}}
    {\cD^{(#2,#3)}_{#1}}
}
\newcommand{\dreach}[4][\delta]{%
    \ifthenelse{\isempty{#2}}%
    {\ifthenelse{\isempty{#3}}%
        {\cR_{#1}}%
        {\cR_{#1}^{(#3,#4)}}%
    }%
    {\ifthenelse{\isempty{#3}}%
        {\cR_{#1,#2}}%
        {\cR_{#1,#2}^{(#3,#4)}}%
    }%
}
\renewcommand{\Comment}[1]{\State //\textit{#1}}
\newcommand{\disk}[2][\delta]{
    \ifthenelse{\isempty{#2}}
    {\mathrm{b}_{#1}}
    {\mathrm{b}_{#1}(#2)}
}
\newcommand{\nei}[3][s]{ N^{#1}_{#2,#3}}
\newcommand{\dia}[3][s]{ D^{#1}_{#2,#3}}
\newcommand{\ver}[3][s]{ V^{#1}_{#2,#3}}
\newcommand{\eps}{\varepsilon}
\newtheorem{theorem}{Theorem}
\newtheorem{lemma}[theorem]{Lemma}
\newtheorem{observation}[theorem]{Observation}
\newtheorem{definition}[theorem]{Definition}
\newtheorem{corollary}[theorem]{Corollary}
\let\originalleft\left
\let\originalright\right
\def\left#1{\mathopen{}\originalleft#1}
\def\right#1{\originalright#1\mathclose{}}
\title{On Computing the $k$-Shortcut Fr\'echet Distance}
\author{Jacobus Conradi\thanks{Jacobus Conradi is supported by the Deutsche Forschungsgemeinschaft (DFG, German Research Foundation) under grant number AA 1111/2-2 (FOR 2535 Anticipating Human Behavior).}~}
\author{Anne Driemel\thanks{Anne Driemel is supported by the Hausdorff Center for Mathematics (DFG grant number EXC 2047).}}
\affil{Institute of Computer Science, University of Bonn, Germany}
\date{\today}
\begin{document}

\maketitle
\pagestyle{plain}

\begin{abstract}
	The Fréchet distance is a popular measure of dissimilarity for polygonal curves. It is defined as a min-max formulation that considers all direction-preserving continuous bijections of the two curves. Because of its susceptibility to noise, Driemel and Har-Peled introduced the shortcut Fréchet distance in 2012, where one is allowed to take shortcuts along one of the curves, similar to the edit distance for sequences. 
	We analyse the parameterized version of this problem, where the number of shortcuts is bounded by a parameter $k$. The corresponding decision problem can be stated as follows: Given two polygonal curves $T$ and $B$ of at most $n$ vertices, a parameter $k$ and a distance threshold $\delta$, is it possible to introduce $k$ shortcuts along $B$ such that the Fr\'echet distance of the resulting curve and the curve $T$ is at most $\delta$?
	We study this problem for polygonal curves in the plane. We provide a complexity analysis for this problem with the following results: (i) assuming the exponential-time-hypothesis (ETH), there exists no algorithm with running time bounded by $n^{o(k)}$; (ii)  there exists a decision algorithm with running time in $\cO(kn^{2k+2}\log n)$.
	In contrast, we also show that efficient approximate decider algorithms are possible, even when $k$ is large. We present a \mbox{$(3+\eps)$-approximate} decider algorithm with running time in $\cO(k n^2 \log^2 n)$ for fixed $\eps$. In addition, we can show that, if $k$ is a constant and the two curves are $c$-packed for some constant $c$, then the approximate decider algorithm runs in near-linear time.
\end{abstract}

\thispagestyle{empty}
\setcounter{page}{1}

\section{Introduction}
\pagenumbering{arabic}

With the prevalence of geographical data collection and usage, the need to process and compare polygonal curves stemming from this data arises.
A popular versatile distance measure for polygonal curves is the Fr\'echet distance~\cite{su2020survey}. The distance measure is very similar to the well-known Hausdorff distance for geometric sets, except that it takes the ordering of points along the curves into account by minimizing over all possible direction-preserving continuous bijections between the two curves. 
Intuitively, the distance measure can be defined as follows. Imagine two agents independently traversing the two curves with varying speeds. Let $\delta$ be an upper bound on the (Euclidean) distance of the two agents that holds at any point in time during the traversal. The Fr\'echet distance corresponds to the minimum value of $\delta$ that can be attained over all possible traversals. 

In practice, the distance measure may be distorted by outliers and measurement errors. As a remedy, partial similarity and distance measures have been introduced which are thought to be more robust.
Buchin, Buchin and Wang define a \emph{partial Fr\'echet distance} \cite{Buchin2009ExactAfPCMvtFD} which maximizes the portions of the two curves matched to one-another within some given distance threshold.
Driemel and Har-Peled suggested the \emph{shortcut Fr\'echet distance}~\cite{Driemel2012JaywalkingYD} in the spirit of the well-known edit distance for strings: a set of non-overlapping subcurves can be replaced by straight edges connecting the endpoints (so-called shortcuts) to minimize the Fr\'echet distance of the resulting curves. 
Akitaya, Buchin, Ryvkin and Urhausen~\cite{Akitaya2019TheKD} introduced a variant of the Fr\'echet distance, where a certain number of  ``jumps'' (backwards and forwards) are allowed during the traversal of the two curves. We note that it has been acknowledged in the literature that partial dissimilarity measures generally do not satisfy metric properties~\cite{bronstein2009partial, veltkamp2001shape, jacobs2000class}. 

It is conceivable that computing a partial dissimilarity based on the Fr\'echet distance should be more difficult than the standard Fr\'echet distance because of the structure of the optimization problems involved. 
While the (discrete or continuous) Fr\'echet distance can be computed in roughly $\Theta(n^{2\pm \eps})$ time for two polygonal curves of $n$ vertices and any $\eps > 0$~\cite{Alt1995ComputingtFdbTPC, aronov2006frechet,  buchin2017four,  Bringmann14,BringmannM16, BuchinOS19}, the overall picture on the computational complexity of the partial variants is very heterogeneous. 

De Carufel, Gheibi, Maheshwari, and Sack~\cite{DECARUFEL2014625} showed that the problem of computing the partial Fr\'echet distance is not solvable by radicals over $\bQ$ and that the degree of the polynomial equations involved is unbounded in general. On the other hand, some variants of the partial Fr\'echet distance can be computed exactly in polynomial time~\cite{Buchin2009ExactAfPCMvtFD}.
Computing the shortcut Fréchet distance was shown to be NP-hard \cite{Driemel2012JaywalkingYD} when shortcuts are allowed anywhere along the curve. On the other hand, the \emph{discrete Fr\'echet distance with shortcuts} was shown to be computable in strictly subquadratic time by Avraham, Filtser, Kaplan, Katz, and Sharir~\cite{AvrahamFKKS15}, which is even faster than computing the discrete Fr\'echet distance without shortcuts. 
The variant defined by Akitaya, Buchin, Ryvkin, and Urhausen~\cite{Akitaya2019TheKD} turns out to be NP-hard, but allows for fixed-parameter tractable algorithms.

\subparagraph{Our contribution}
In this paper, we study the computational complexity of a parameterized version of the shortcut Fréchet distance, where the maximum number of shortcuts that may be introduced on the curve is restricted by a parameter $k$. We show that assuming the Exponential-Time-Hypothesis (ETH), no fixed-parameter tractable running time is possible with $k$ being the parameter. For polygonal curves in the plane, we present an exponential-time exact algorithm and we show that near-linear time approximation algorithms are possible using certain realistic input assumptions on the two curves.

\subparagraph{Previous work}
Driemel and Har-Peled~\cite{Driemel2012JaywalkingYD} introduced the shortcut Fr\'echet distance and described a near-linear time $(3+\eps)$-approximation algorithm for the class of $c$-packed curves. However, they only allowed shortcuts that start and end at \emph{vertices} of the base curve.
Buchin, Driemel and Speckmann~\cite{Buchin2013ComputingTF} showed that, if shortcuts are allowed \emph{anywhere along the curve}, then the problem of computing the shortcut Fr\'echet distance exactly is NP-hard via reduction from SUBSET-SUM. They also describe a 
$3$-approximation algorithm for the decision problem with running time in $\cO(n^3\log n)$ for the case that shortcuts may start and end in the middle of edges.
Prior to our work, there has been no study of exact algorithms for either variant of the shortcut Fr\'echet distance. Our analysis of the exact problem therefore closes an important gap in the literature. Obtaining the exact algorithm was surprisingly simple, once the relevant techniques were combined in the right way.

\subsection{Basic definitions}
\begin{definition}[curve]
	A curve $T$ is a continuous map from $[0,1]$ to $\bR^d$, where $T(t)$ denotes the point on the curve parameterized by $t\in[0,1]$.
	For $0\leq s < t\leq 1$ we denote the subcurve of $T$ from $T(s)$ to $T(t)$ by $T[s,t]$.
	A polygonal curve of complexity $n$ is given by a sequence of $n$ points in $\bR^d$.
	The curve is then defined as the piecewise linear interpolation between consecutive points.
\end{definition}

\begin{definition}[Fréchet distance]\label{def:frechet} 
	Given two curves $T$ and $B$ in $\bR^d$, their Fréchet distance  is defined as
	\[ d_\mathcal{F}(T,B) = \inf_{f,g:[0,1]\rightarrow[0,1]} ~ \max_{t\in[0,1]}\|T(f(t)) - B(g(t))\|,\]
	where $f$ and $g$ are monotone, continuous, increasing and surjective.
	We call a pair of such functions $(f,g)$ a traversal. Any such traversal has the cost $\max_{\alpha\in[0,1]}\|T(f(\alpha)) - B(g(\alpha))\|$ associated to it.
\end{definition}

In our definition of the Fr\'echet distance given above, we follow Alt and Godau~\cite{alt1995approximate}. Strictly speaking, this definition does not use bijections as for the sake of convenience the strict monotonicity of $f$ and $g$ is relaxed. 

\begin{definition}[$k$-shortcut curve]
	We call a line segment between two arbitrary points $B(s)$ and $B(t)$ of a curve $B$ a shortcut on $B$, where $s<t$ and denote it by $\overline{B}[s,t]$.
	A $k$-shortcut curve of $B$ is the result of replacing $k$ subcurves $B[s_i,t_i]$ of $B$  for $1\leq i\leq k$ by shortcuts $\overline{B}[s_i,t_i]$ connecting their start and endpoint, with $t_i\leq s_{i+1}$ for $1\leq i\leq k-1$ .
\end{definition}

\begin{definition}[$k$-shortcut Fréchet Distance]
	Given two polygonal curves $T$ and $B$, their $k$-shortcut Fréchet distance $d^k_\mathcal{S}(T,B)$ is defined as the minimum Fréchet distance between $T$ and any $k'$-shortcut curve of $B$ for some $0 \leq k' \leq k$. In this context, we call $B$ the base curve (where we take shortcuts) and $T$ the target curve (which we want to minimize the Fr\'echet distance to).
\end{definition}

\subsection{Overview of this paper}\label{our-results}

In Section \ref{exactchapter} we present an exact algorithm for deciding if the $k$-shortcut Fréchet distance is smaller than a given threshold $\delta$. The algorithm can also be used for the non-parameterized variant by setting $k=n$. Our first main result is the following theorem.

\begin{restatable}{theorem}{kexact}
\label{k-exact}
\label{thm:k-exact}
Let $T$ and $B$ be two polygonal curves in the plane with overall complexity $n$, together with a value $\delta>0$. There exists an algorithm with running time in $\cO\left(kn^{2k+2}\log n\right)$ and space in $\cO\left(kn^{2k+2}\right)$ that  decides whether $d^k_\mathcal{S}(T,B) \leq \delta$.
\end{restatable}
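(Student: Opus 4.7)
My plan is to reduce the $k$-shortcut Fr\'echet decision problem to a polynomial number of instances of the standard Fr\'echet decision, by means of a discretization argument for the shortcut endpoints.

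As a starting point, I construct the classical free-space diagram of $T$ and $B$ at threshold $\delta$ in $\cO(n^2)$ time. This serves as the combinatorial basis for the subsequent analysis and will be reused, in modified form, for each candidate shortcut configuration.

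The core of the argument is a combinatorial discretization lemma: there exists a set $P$ of $\cO(n)$ candidate positions on $B$ such that whenever $d^k_\mathcal{S}(T,B)\leq\delta$, this is witnessed by a $k'$-shortcut curve with $k' \leq k$ whose shortcut endpoints all lie in $P$. Natural candidates for $P$ are the vertices of $B$ together with a family of critical positions determined by $T$ and $\delta$---for instance, intersections of edges of $B$ with disks of radius $\delta$ around vertices of $T$, or endpoints of maximal free intervals along edges of $B$. The proof would proceed by a perturbation argument: starting from an arbitrary feasible $k$-shortcut curve, slide each shortcut endpoint along $B$ in a direction consistent with the underlying Fr\'echet matching until it snaps to a position in $P$, and verify at each step that the matching can be locally repaired so that the modified curve remains within Fr\'echet distance $\delta$ of $T$.

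Given the lemma, the algorithm enumerates all ordered $2k$-tuples $(p_1 \leq p_2 \leq \cdots \leq p_{2k})$ of elements of $P$, obtaining $\cO(n^{2k})$ candidate $k$-shortcut curves in which the $i$-th shortcut is $\overline{B}[p_{2i-1},p_{2i}]$. For each candidate $B'$, I construct $B'$ (using binary search to locate the endpoints inside the edges of $B$) and run the standard Fr\'echet decision on $(T,B')$ in $\cO(n^2 \log n)$ time. The algorithm returns yes if and only if some candidate passes. A careful accounting of construction, free-space setup, and per-candidate decision costs then yields the claimed running time $\cO(kn^{2k+2}\log n)$ and space $\cO(kn^{2k+2})$.

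The main obstacle is the discretization lemma. A priori, shortcut endpoints can lie anywhere on $B$, and moving a single endpoint simultaneously deforms the shortcut line segment and shifts how the surrounding portions of $B$ are matched to $T$; multiple shortcuts can also interact with one another through the matching. Keeping $|P| = \cO(n)$ while absorbing all such local changes, and verifying that distances along the modified traversal stay bounded by $\delta$, is the delicate technical step on which the whole reduction rests.
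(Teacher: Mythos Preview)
Your proposal hinges entirely on the discretization lemma, and you correctly flag it as the obstacle---but it is not just delicate, it is the wrong hope. First, your own candidate set is already too large: the intersections of the $n$ edges of $B$ with the boundaries of the $n$ disks $\disk{v_i}$ number $\Theta(n^2)$ in general, not $\cO(n)$, so even accepting the lemma your enumeration would yield $\cO(n^{4k})$ tuples and miss the target bound. More importantly, the lemma itself does not hold for any natural local candidate set. The feasibility of a shortcut $\overline{B}[s,t]$ (Observation~\ref{obs:stabber}) is a \emph{joint} constraint on the pair $(s,t)$: for each fixed $s$ the set of admissible $t$ is an interval whose endpoints vary continuously with $s$. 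When $k$ shortcuts are chained through monotone paths in the free space, the set of feasible $2k$-tuples is a region in $2k$ dimensions whose extreme configurations need not project, coordinate by coordinate, onto a set of size $\cO(n)$ or even $\cO(n^2)$ that is independent of the other coordinates. The hardness construction in Section~\ref{non-fpt} makes this concrete: the touching points on the mirror edges encode partial sums and are determined by cascades of projections through focal points; they are not free-space boundary points or edge/disk intersection points, and perturbing any one of them destroys feasibility at the terminal gadget.

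The paper therefore takes a different route that avoids discretization altogether. It maintains, for each cell $C_{i,j}$ and each round $s\le k$, the \emph{continuous} set $P^s_{i,j}\subseteq\dfree{i}{j}$ of points reachable by monotone paths with exactly $s$ proper tunnels. The propagation from round $s-1$ to $s$ is done cell by cell: for diagonal tunnels one projects $P^{s-1}_{a,b}$ onto the edge $e_b$, feeds the resulting intervals into the line-stabbing wedge of Guibas et al.\ through the disks $\disk{v_{a+1}},\dots,\disk{v_i}$, and intersects with $e_j$. This yields the set $\dia[s]{i}{j}$ exactly, with no snapping. The price is that the description complexity of $P^s_{i,j}$ may grow by a factor of $n^2$ per round (each of the $\cO(n^2)$ cells below-left can contribute), giving the $n^{2k}$ in the bound; the $\log n$ comes from the stabbing-wedge computation. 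So the $n^{2k}$ factor reflects genuine fragmentation of the reachable set across rounds, not an enumeration over a product of per-shortcut candidates.
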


Our algorithm for Theorem~\ref{k-exact} iterates over the free-space diagram by Alt and Godau~\cite{Alt1995ComputingtFdbTPC} in $k$ rounds. Within the free-space diagram, a direction-preserving continuous bijection between two curves corresponds to a monotone path starting at $(0,0)$ and ending at $(1,1)$.
In each round, we compute the set of points in the parametric space of the two curves that are reachable by using one additional shortcut. For computing the set of eligible shortcuts spanning a fixed set of edges, we make use of the so-called line-stabbing wedge introduced by Guibas, Hershberger, Mitchell and Snoeyink~\cite{Guibas1994ApproximatingPSM}. Line-stabbing wedges were also used in the approximation algorithm by Buchin, Driemel, and Speckmann~\cite{Buchin2013ComputingTF}. In our case, since we perform exact computations, the  reachable space may be fragmented into a number of components, and this number may grow exponentially with the number of rounds.

In Section \ref{non-fpt} we give some evidence that this high complexity due to fragmentation is not an artifact of our algorithm, but may be inherent in the problem itself.
For this, we assume that the exponential time hypothesis (ETH) holds. The ETH states that $3$-SAT in $n$ variables cannot be solved in $2^{o(n)}$ time \cite{Impagliazzo1999}. Our second main result is the following conditional lower bound.

\begin{restatable}{theorem}{hardness}
\label{thm:ethhardness}
Unless ETH fails, there is no algorithm for the $k$-shortcut Fréchet distance decision problem in $\bR^d$ for $d\geq 2$, with running time $n^{o(k)}$.
\end{restatable}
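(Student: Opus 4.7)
My plan is to reduce from the \textsc{$k$-Clique} problem. By the classical lower bound of Chen, Huang, Kanj, and Xia, unless ETH fails there is no algorithm for \textsc{$k$-Clique} on $n$-vertex graphs running in time $n^{o(k)}$. Given an instance $(G,k)$ with $|V(G)|=n$, I would produce, in polynomial time, two polygonal curves $T, B$ in $\mathbb{R}^2$ of complexity $N = \mathrm{poly}(n)$, a threshold $\delta > 0$, and a shortcut budget $k' = O(k)$ such that $d^{k'}_{\mathcal{S}}(T,B) \leq \delta$ if and only if $G$ contains a $k$-clique. Since $N = \mathrm{poly}(n)$ and $k' = O(k)$, an algorithm with running time $N^{o(k')}$ would decide \textsc{$k$-Clique} in $n^{o(k)}$ time, contradicting ETH. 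The same construction lifts trivially to any $d \geq 2$.

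\textbf{Gadget design.} The base curve $B$ would consist of $k$ consecutive \emph{vertex-selection blocks} $B^1, \dots, B^k$, each listing small sub-gadgets $B^j_1, \dots, B^j_n$, one per vertex of $G$, separated by geometric ``decoys'' whose presence forces a shortcut if the decoy is to be skipped. By design, a constant number of shortcuts per block suffices to isolate a single sub-gadget $B^j_{i(j)}$, and shortcuts straddling block boundaries can be merged, keeping the total within $k' = O(k)$. The target curve $T$ starts with a \emph{selection phase} that matches the surviving sub-gadget in each block, thereby \emph{reading off} the choice $i(j)$, and continues with an \emph{edge-verification suffix} designed so that the Fréchet matching can be completed within $\delta$ exactly when every pair $\{v_{i(j)}, v_{i(\ell)}\}$ is an edge of $G$.

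\textbf{Main obstacle.} The core difficulty is performing $\binom{k}{2}$ pairwise adjacency checks without spending $\Omega(k^2)$ shortcuts. I would address this by letting the geometric position of each shortcut's endpoint carry the identity of the vertex it selects, using the line-stabbing-wedge technology of Guibas, Hershberger, Mitchell, and Snoeyink already employed in this paper. The edge-verification suffix of $T$ would then revisit the selected vertex positions without requiring further shortcuts: by placing the vertex coordinates on a curve of large distance dispersion (for instance on a moment curve with strictly increasing coordinates), I can ensure that a Fréchet matching within the verification suffix is realisable only when every verified pair corresponds to a true edge of $G$, while ruling out the spurious matchings that typically plague such reductions. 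The bulk of the technical work is this geometric encoding, and I expect to follow the template of the NP-hardness reduction of Buchin, Driemel, and Speckmann, upgraded to track $k$ selected vertex identities simultaneously via their shortcut endpoints rather than a single subset-sum value.
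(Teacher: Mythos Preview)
Your proposal has a genuine gap at precisely the point you flag as the main obstacle. You propose to reduce from \textsc{$k$-Clique} and to verify all $\binom{k}{2}$ adjacency constraints by having the target curve ``revisit the selected vertex positions without requiring further shortcuts.'' But the Fr\'echet matching is monotone on both curves: once the traversal has passed the selection blocks on $B$, those positions are gone. Any information about the $k$ choices must be carried forward as geometric state --- in this setting, as the position of the current point on the current mirror edge of $B$. Encoding $k$ independent vertex identities in a single real coordinate and then extracting $\binom{k}{2}$ \emph{arbitrary} binary relations (the edges of $G$) from that coordinate is the whole difficulty, and neither the moment-curve remark nor the line-stabbing wedge (which the paper uses on the \emph{algorithmic} side, not in the reduction) supplies a mechanism for it. As written, your verification phase is a hope, not a construction.

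The paper sidesteps this obstacle by reducing from $k$-\textsc{Table-SUM} rather than \textsc{$k$-Clique}. There the state to propagate is a single partial sum, and verification is a single equality test $\sigma_k=\sigma$ at the very end --- exactly one arithmetic constraint, which the geometry can enforce. Concretely, the target curve carries \emph{twists} that force every shortcut through a fixed focal point, so a shortcut acts as a projective map on edge positions; each of the $k$ encoding gadgets offers $n$ parallel mirror edges (one per list entry $s_{i,j}$) placed so that choosing edge $j$ adds $\gamma s_{i,j}$ to the position encoding the running sum. This costs four shortcuts per gadget, $4k+2$ total, and a terminal gadget checks the final position against $\sigma$. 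The choice of source problem is thus not incidental: $k$-\textsc{Table-SUM} matches the ``one real number of state, one final check'' shape that the shortcut-Fr\'echet geometry naturally supports, while \textsc{$k$-Clique} does not.
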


Our conditional lower bound of Theorem~\ref{thm:ethhardness} is obtained via reduction from a variant of the $k$-SUM problem, which is called $k$-Table-SUM. In particular, we construct a $(4k+2)$-shortcut Fréchet distance decision instance for a given $k$-Table-SUM instance. Our construction is based on the NP-hardness reduction by Buchin, Driemel and Speckmann~\cite{Buchin2013ComputingTF}. Their reduction was from SUBSET-SUM and could not be directly applied to obtain our result. 
The construction implicitly encodes partial solutions for the SUBSET-SUM instance as reachable intervals on the edges of one of the curves. In this way, each shortcut taken by the optimal solution implements a choice for an element to be included in the sum. The reduction by Buchin et al.\ implemented this in the form of a binary choice, thereby leading to a number of shortcuts that is linear in $n$. In our case, the number of shortcuts taken should only depend on  $k$ and not $n$. Therefore, we give a new construction for a choice gadget, that allows to choose an element from a set to be included in a partial solution while using only a constant number of shortcuts for this choice.

In light of the above results, it is interesting to consider approximation algorithms and realistic input assumptions for this problem.  In Section \ref{c-packed} we show that there is an efficient approximation algorithm for this problem. If we can assume that the input curves are well-behaved, we even obtain a near-linear time algorithm for constant $k$. To formalize this, we consider the class of $c$-packed curves, see also~\cite{Driemel2010ApproximatingTF}.

\begin{restatable}[$c$-packed curves]{definition}{cpacked}
For $c>0$, a curve $X$ is called $c$-packed if the total length of $X$ inside any ball is bounded by $c$ times the radius of the ball.
\end{restatable}

The following is our third main result. Since any polygonal curve of complexity $n$ is $c$-packed for some $c \leq 2n$, the theorem also implies a running time of $\cO\left(kn^2\eps^{-5} (\log^2\left( n \eps^{-1} \right)\right)$ for polygonal curves in the plane---without any input assumptions.

\begin{restatable}{theorem}{ckapproximation}
\label{thm:ckapproximation}
Let $T$ and $B$ be two $c$-packed polygonal curves in the plane with overall complexity $n$, together with values $0<\eps\leq1$ and $\delta>0$. There exists an algorithm with running time in 
$ \cO\left(kcn\eps^{-5}\log^2\left( n\eps^{-1} \right)\right) $
and space in $\cO\left(kcn\eps^{-4}\log^2\left(\eps^{-1}\right)\right)$ which outputs one of the following: (i) $d^k_\mathcal{S}(T,B) \leq (3+\eps)\delta$ or (ii) $d^k_\mathcal{S}(T,B) > \delta$. In any case, the output is correct.
\end{restatable}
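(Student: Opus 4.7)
The plan is to combine the $k$-round reachability framework underlying Theorem~\ref{k-exact} with the simplification- and packedness-based tools developed for the non-parameterized shortcut Fréchet problem in~\cite{Driemel2012JaywalkingYD, Buchin2013ComputingTF}. The overall structure is to approximate the input at scale $\eps\delta$, run $k$ rounds of approximate reachability propagation against the \emph{fixed} inflated threshold $(3+\eps)\delta$, and then read off the answer by testing membership of $(1,1)$ in the final reachable set.

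As a preprocessing step, I would $\mu$-simplify both $T$ and $B$ with $\mu = \Theta(\eps\delta)$. For $c$-packed input the simplified curves have complexity $\cO(cn/\eps)$ on any portion that interacts with the $(3+\eps)\delta$-sublevel set of the free-space, while the Fréchet distance, and hence the $k$-shortcut distance, changes by at most $\cO(\eps\delta)$. Consequently the relevant part of the free-space diagram contains only $\cO(cn/\eps)$ cells, bounding the per-round work.

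I would then mimic the exact algorithm's $k$ rounds of reachability propagation while maintaining only a compact approximate descriptor of the reachable set: for each edge of $T$ and each edge of $B$, I keep a union of $\cO(1/\eps)$ canonical intervals snapped to an $\eps\delta$-grid, marking the parameters attainable using at most $i$ shortcuts. The round-$i$ update adds those parameters reachable by taking one additional shortcut from a point already in the round-$(i-1)$ set. The eligible shortcuts spanning a fixed pair of edges of $B$ are captured by the line-stabbing wedge of Guibas, Hershberger, Mitchell, and Snoeyink, as used in~\cite{Buchin2013ComputingTF}; $c$-packedness caps the number of candidate endpoint edges within distance $\cO(\delta)$ of any given ball, and a two-level range/priority-search-tree data structure should let me locate, for each cell, the shortcut that maximally extends the reachable set in $\cO(\log^2(n\eps^{-1}))$ amortised time. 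The per-round cost should then work out to $\cO(cn\eps^{-5}\log^2(n\eps^{-1}))$, which summed over $k$ rounds yields the bound claimed in the theorem; the space bound follows by reusing the data structures between rounds.

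The main obstacle will be correctness, since a naive composition of a per-round $(3+\eps)$-approximation would seem to produce a $(3+\eps)^k$ blow-up of the threshold. The clean resolution is to hold the inflated threshold $(3+\eps)\delta$ \emph{fixed} across all rounds and to verify just once that the per-round approximate extension operator is sandwiched between the true round-$i$ extension at threshold $\delta$ and the true round-$i$ extension at threshold $(3+\eps)\delta$; this is the same sandwich that underlies the $3$-approximation of~\cite{Buchin2013ComputingTF}, with the extra factor of $(1+\eps)$ absorbed into the grid resolution and the simplification error. Monotonicity of the reachability operator in the threshold then immediately delivers the two-sided decider guarantee after any number of rounds, so the approximation error does not compound with $k$.
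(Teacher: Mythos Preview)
Your high-level architecture is right: simplify at scale $\Theta(\eps\delta)$, iterate $k$ rounds against a fixed inflated threshold, and argue a sandwich $\dreach[\delta]{s}{}{}\subset P^{\leq s}\subset\dreach[(3+\eps)\delta]{s}{}{}$ that does not compound with $k$. But two concrete pieces of the mechanism are missing, and without them the proof does not go through at the stated running time.

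First, you never invoke the \emph{monotonicity of tunnels} lemma of Driemel and Har-Peled (Lemma~\ref{central} in the paper). This is what makes the sandwich work and what lets the per-cell state collapse to constant size. The paper does not maintain ``$\cO(1/\eps)$ canonical intervals'' per edge; it keeps, for each cell and each round, only the \emph{rightmost} reachable point in the lower-left quadrant (and the leftmost in the column below). The lemma says that if the tunnel from that rightmost point to a target $q$ costs more than $3\delta$, then \emph{every} tunnel from the reachable set to $q$ costs more than $\delta$. That is the entire reason the factor is $3$ rather than $3^k$, and also the reason one diagonal-tunnel evaluation per cell suffices. Your proposal alludes to ``the shortcut that maximally extends the reachable set'' but does not name the lemma that justifies looking only at one starting point; without it you are back to the fragmentation that drives the exponential cost in Theorem~\ref{thm:k-exact}.

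Second, your plan to evaluate diagonal tunnels via the Guibas--Hershberger--Mitchell--Snoeyink line-stabbing wedge is too expensive here. A single wedge through the disks $\disk{v_{a+1}},\dots,\disk{v_i}$ costs $\Theta((i-a)\log(i-a))$, so summed over the $\cO(cn/\eps)$ relevant cells you pick up an extra factor of $n$. The paper explicitly avoids this: it precomputes the Fr\'echet distance oracle of Lemma~\ref{distance-oracle} on $T$ and, in the \textsc{apxDiagonalTunnel} procedure, tests $\cO(\eps^{-2})$ grid points in $\disk[(1+\eps)\delta]{v_i}$ against the oracle, then takes the convex hull of the eligible ones to approximate the wedge. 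Each call is $\cO(\eps^{-4}\log n\log\log n)$, independent of the span $i-a$, and that is where the $\eps^{-5}$ and the $\log^2$ in the theorem come from. Your ``two-level range/priority-search-tree'' does appear in the paper, but only for retrieving the rightmost/leftmost points across sparse non-empty cells, not for the tunnel evaluation itself.
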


The main ideas that go into the proof of Theorem~\ref{thm:ckapproximation} can be sketched as follows. The first observation is that a highly fragmented reachable space that leads to the high running time of the exact algorithm of Theorem~\ref{thm:k-exact} can be approximated by limiting the number of shortcuts that the algorithm may take. To show that the algorithm still takes the right decisions (within the approximation bounds), we make use of a property of shortcut prices that was first observed by Driemel and Har-Peled~\cite{Driemel2012JaywalkingYD}. Namely,  the price of a shortcut is approximately monotone and it suffices in each round to take the `shortest` feasible shortcut among all shortcuts that are available.
Now, the main challenge as compared to the algorithm in~\cite{Driemel2012JaywalkingYD} is that this shortcut may still start in the middle of an edge. Evaluating the cost of this shortcut using line-stabbing wedges would be too expensive.
Instead, we use a data structure by Driemel and Har-Peled~\cite{Driemel2012JaywalkingYD} that allows to query the Fréchet distance of a line segment to a subcurve. We use this to implicitly approximate the line-stabbing wedge using a convex hull of a set of grid points.
However, this is still not enough, as the free-space may have quadratic complexity. To obtain a near-linear running time for small $c$, we make use of the property of $c$-packed curves as observed by Driemel, Har-Peled and Wenk~\cite{Driemel2010ApproximatingTF}, that the complexity of the free-space diagram of two $c$-packed curves is only linear in $c\cdot n$ when the curves are appropriately simplified.

\section{Preliminaries}

\begin{definition}[Free-space diagram]
	Let $T$ and $B:[0,1]$ be two polygonal curves in $\bR^2$.
	The free-space diagram of $T$ and $B$ is the joint parametric space $[0,1]^2$ together with a not necessarily uniform grid, where each vertical line corresponds to a vertex of $T$ and each horizontal line to a vertex of $B$ (refer to Figure~\ref{fig-fsd}).
	We call the cell of the parametric space corresponding to the $i$th edge of the target curve and the $j$th edge of the base curve $C_{i,j}$.
	The $\delta$-free-space of $T$ and $B$ is defined as \[ \dfree{}{}(T,B) = \left\{(x,y)\in[0,1]^2 \mid \|T(x) -B(y)\|\leq\delta \right\} \] 
	This is the set of points in the parametric space whose corresponding points on $B$ and $T$ are at a distance at most $\delta$.
	Denote by $\dfree{i}{j}(T,B) = \dfree{}{}(T,B) \cap C_{i,j}$ the $\delta$-free-space inside the cell $C_{i,j}$. 
\end{definition}%

In the following $T$ and $B$ will often be fixed, thus we will simply write $\dfree{}{}$.
It is known that $\dfree{i}{j}$ is convex and has constant complexity.
More precisely, it is an ellipse intersected with the cell $C_{i,j}$.
Furthermore the Fréchet distance between two curves is less than or equal to $\delta$ if and only if there exists a monotone path (in $x$ and $y$) in the free-space that starts in the lower left corner $(0,0)$ and ends in the upper right corner $(1,1)$ cf. \cite{Alt1995ComputingtFdbTPC}. 
In the case of the $k$-shortcut Fréchet distance we need to also consider shortcuts when traversing the parametric space.
When considering any $k$-shortcut curve $B'$ of $B$ and any traversal $(f,g)$ of $B'$ and $T$ with associated cost $\delta$, then $(f,g)$ induces traversals $(f',g')$ with associated cost at most $\delta$ on every shortcut $\overline{B}[s,t]$ and some corresponding subcurve $T[u,v]$ of $T$.
To capture this, we use the notion of \textit{tunnels} which was introduced in~\cite{Driemel2012JaywalkingYD} and is defined as follows.

 \begin{figure}
 \begin{center}
 	\includegraphics{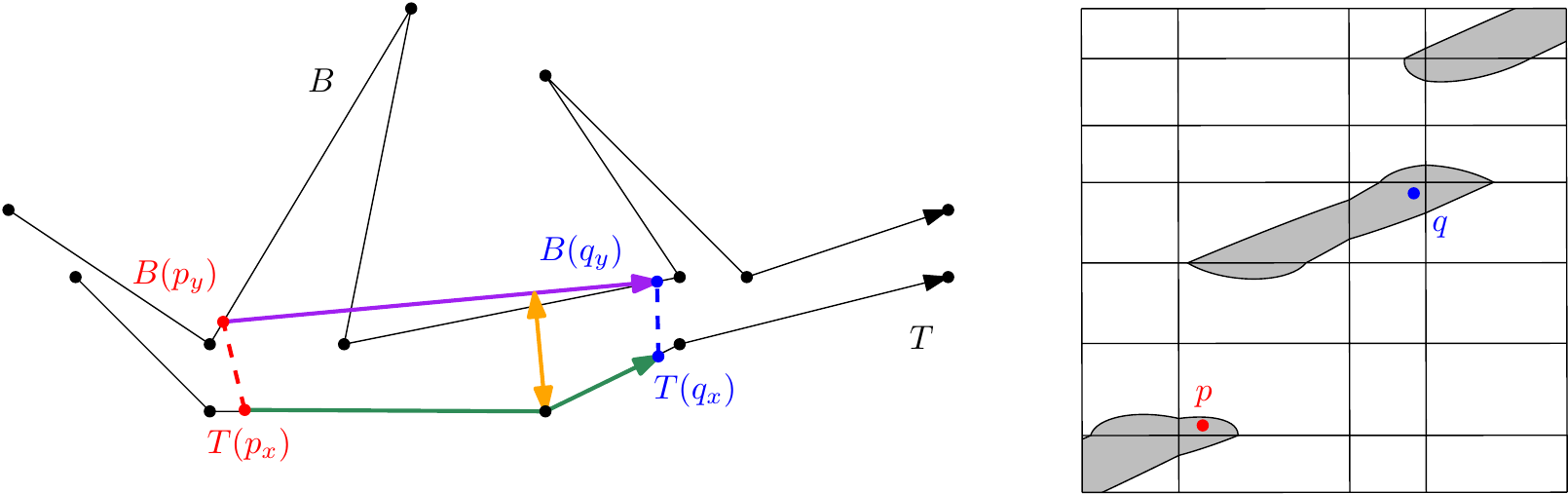}
 \end{center}
 	\caption{Free-space diagram for a base curve $B$ and a target curve $T$, with the $\delta$-free-space in gray. The figure shows a feasible proper tunnel $\tau(p,q)$. The shortcut $\overline{B}[p_y,q_y]$ is shown in purple, and the subcurve $T[p_x,p_y]$ in green. The price of $\tau(p,q)$ is the Fréchet distance of the shortcut and the subcurve. }
 	\label{fig-fsd}
 \end{figure}

\begin{definition}[Tunnel]
	A tunnel $\tau(p,q)$ is a pair of points $p=(x_p,y_p)$ and $q=(x_q,y_q)$ in the parametric space of $B$ and $T$, with $x_p\leq x_q$ and $y_p\leq y_q$. 
    $\tau(p,q)$ is called feasible if $p$ and $q$ are in $\dfree{}{}$. We say that a tunnel is proper, if the endpoints of the shortcut do not lie on the same edge of $B$.
	 We say a tunnel has a price $\prc(\tau(p,q)) = d_\cF(T[x_p,x_q],\overline{B}[y_p,y_q])$, refer to Figure~\ref{fig-fsd}.
\end{definition}

\begin{definition}[Reachable space]
	We define the $(\delta,s)$-reachable free-space of $T$ and $B$
	\[\dreach{s}{}{}(T,B) = \{(x_p,y_p)\in[0,1]^2\;|\;d^s_\cS(T[0,x_p],B[0,y_p])\leq\delta\}\]
	and again $\dreach{s}{i}{j}(T,B) = \dreach{s}{}{}(T,B)\cap C_{i,j}$.
	We call the intersection
	$\dreach{s}{i}{j}(T,B) \cap C_{a,b}$ for any $ (a,b) \in \{ (i-1,j), (i,j-1), (i+1,j), (i,j+1) \} $  a reachability interval of the cell $C_{i,j}$. In particular for $(a,b) \in \{ (i-1,j), (i,j-1) \} $ we call them incoming reachability intervals and  for $(a,b) \in \{ (i+1,j), (i,j+1) \} $ we call them outgoing reachability intervals.
\end{definition} 

We will simply write $\dreach{s}{}{}$ and $\dreach{s}{i}{j}$ whenever $T$ and $B$ are fixed.
Observe that the reachability intervals for every cell  $C_{i,j}$ and $s$ are contained in the boundary set  $\partial C_{i,j}$, and each reachability interval is described by a (possibly empty) single interval, since any two points in the reachability interval can be connected via a monotone path that stays inside the $\delta$-free-space. Furthermore, any tunnel $\tau(p,q)$ with $p=(x_p,y_p)$ and $q=(x_q,y_q)$, that is not proper, induces a traversal of $B[y_p,y_q]=\overline{B}[y_p,y_q]$ and $T[x_p,x_q]$. Thus we can omit the tunnel and replace it with a monotone path from $p$ to $q$ in $\dfree{}{}$. Therefore, in the following, we only consider monotone paths with proper tunnels. 

The $k$-shortcut Fréchet distance of $T$ and $B$ is at most $\delta$ if and only if $(1,1)\in\dreach{k}{}{}(T,B)$. We want to reduce the problem of deciding the $k$-shortcut Fr\'echet distance to the problem of finding a monotone path in the free-space diagram.

\begin{definition}[Monotone path with tunnels]
    A monotone path with $k$ proper tunnels in the $\delta$-free-space of two curves consists of $k+1$ monotone (in $x$ and $y$) paths in the $\delta$-free-space from $s_i$ to $t_i$ for $1\leq i \leq k+1$, with $s_1 = (0,0)$, such that $t_i$ lies to the left and below $s_{i+1}$, for $1\leq i \leq k$. The $k$ proper tunnels have the form $\tau(t_i,s_{i+1})$ for $1\leq i \leq k$. 
\end{definition}

\begin{observation}\label{obs-proper}
    Let $T$ and $B$ be two polygonal curves. The set $\dreach{s}{}{}(T,B)$ is exactly the set of points $p \in \dfree{}{}(T,B)$ such that there exists a monotone path  ending in  $p$ with at most $s$ proper tunnels, each of price at most  $\delta$. (By definition, these paths have to start at $(0,0)$).
\end{observation}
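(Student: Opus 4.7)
The plan is to prove both inclusions by directly translating between the definition of the reachable space (in terms of shortcut curves) and the existence of a monotone path with proper tunnels in $\dfree{}{}$.

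For the forward inclusion, I would fix $p = (x_p, y_p) \in \dreach{s}{}{}(T,B)$. By definition, there exist $s' \leq s$ and an $s'$-shortcut curve $B'$ of $B[0, y_p]$ with $d_\cF(T[0, x_p], B') \leq \delta$. Any Fréchet traversal witnessing this can be traced in the parametric space $[0,1]^2$ of $T$ and $B$; it decomposes into at most $s'+1$ monotone sub-paths inside $\dfree{}{}(T,B)$ (corresponding to the non-shortcut portions of $B'$) separated by $s'$ tunnels, one per shortcut. The restriction of the traversal to the $i$th shortcut witnesses that the associated tunnel $\tau(t_i, s_{i+1})$ has price at most $\delta$, since by definition the price is the Fréchet distance of the two matched pieces. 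To conclude, I invoke the paragraph immediately preceding the observation: any non-proper tunnel can be replaced by a monotone path inside $\dfree{}{}$ without increasing the cost, so we are left with at most $s$ proper tunnels of price $\leq\delta$ on a monotone path that ends at $p$.

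For the reverse inclusion, I would start from a monotone path ending at $p$ with $k \leq s$ proper tunnels $\tau(t_i, s_{i+1})$ of price at most $\delta$. The $y$-coordinates of the tunnel endpoints are monotone in $i$ (since $t_i$ lies left-below $s_{i+1}$), so replacing each $B[y_{t_i}, y_{s_{i+1}}]$ by the shortcut $\overline{B}[y_{t_i}, y_{s_{i+1}}]$ yields a well-defined $k$-shortcut curve $B''$ of $B[0, y_p]$. I would then concatenate traversals to obtain a traversal of $T[0, x_p]$ and $B''$ of cost at most $\delta$: each monotone sub-path from $s_i$ to $t_i$ supplies a traversal of $T[x_{s_i}, x_{t_i}]$ and $B[y_{s_i}, y_{t_i}]$ of cost $\leq\delta$, while the price bound on the $i$th tunnel supplies a traversal of $T[x_{t_i}, x_{s_{i+1}}]$ and $\overline{B}[y_{t_i}, y_{s_{i+1}}]$ of cost $\leq\delta$. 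This yields $d^k_\cS(T[0,x_p], B[0,y_p]) \leq \delta$, and since the $k$-shortcut Fréchet distance is non-increasing in $k$ (the minimum is taken over all $k' \leq k$), we obtain $p \in \dreach{s}{}{}(T,B)$.

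The argument is essentially a bookkeeping exercise at the interface between the two definitions, with no step that I expect to be a real obstacle. The only mildly delicate point is the removal of non-proper tunnels in the forward direction, but this reduction is already prepared by the text preceding the observation; everything else follows from monotonicity of $d^k_\cS$ in $k$ and from concatenating traversals of matching subcurves.
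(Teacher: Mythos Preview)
The paper states this result as an \emph{observation} and does not supply a proof; it relies on the surrounding discussion (in particular the remark that non-proper tunnels can be replaced by monotone paths in $\dfree{}{}$) to make the claim evident. Your proposal is correct and simply spells out the two-inclusion argument that the paper leaves implicit, so there is nothing substantive to compare.
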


Note the following observation, used throughout the paper.

\begin{observation}\label{obs-trivial}
	Given line segments $\overline{ab}$ and $\overline{cd}$ in $\bR^d$, then for the Fréchet distance we have $d_\cF(\overline{ab},\overline{cd}) = \max(||a-c||,||b-d||)$.
\end{observation}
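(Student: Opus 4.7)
The plan is to prove the equality by matching upper and lower bounds, both of which follow directly from the definition of the Fréchet distance.

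For the upper bound, I would exhibit the \emph{linear} (synchronous) traversal $f(t)=g(t)=t$, which parameterises $\overline{ab}$ as $t\mapsto(1-t)a+tb$ and $\overline{cd}$ as $t\mapsto(1-t)c+td$. At any time $t\in[0,1]$ the distance between the two moving points is
\[
\left\|(1-t)(a-c)+t(b-d)\right\| \;\leq\; (1-t)\|a-c\|+t\|b-d\| \;\leq\; \max(\|a-c\|,\|b-d\|),
\]
using the triangle inequality and the fact that a convex combination of two nonnegative reals is bounded by their maximum. Taking the maximum over $t$ gives that this traversal achieves cost at most $\max(\|a-c\|,\|b-d\|)$, so the infimum over all traversals is at most this quantity.

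For the lower bound, I would invoke the boundary conditions on any traversal $(f,g)$: since $f$ and $g$ are monotone, continuous, and surjective onto $[0,1]$, we must have $f(0)=g(0)=0$ and $f(1)=g(1)=1$. Thus at $t=0$ the traversal matches the points $a$ and $c$, and at $t=1$ it matches $b$ and $d$, so its cost is at least $\max(\|a-c\|,\|b-d\|)$. Taking the infimum over $(f,g)$ yields the matching lower bound, concluding the proof.

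There is no real obstacle: the only subtlety is making sure the endpoint-matching argument is phrased carefully, since Definition~\ref{def:frechet} uses relaxed (weakly) monotone surjections rather than strict bijections; nonetheless, surjectivity combined with the monotonicity constraints still forces $f(0)=g(0)=0$ and $f(1)=g(1)=1$, so the lower bound goes through unchanged.
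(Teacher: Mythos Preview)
Your argument is correct and is exactly the standard proof of this fact. The paper itself does not supply a proof for this observation; it is stated as a folklore fact and used without justification, so there is no alternative approach to compare against.
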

To decide whether a cell is reachable by a tunnel, we need to check if there exists a shortcut edge that stabs through an ordered set of disks centered at a subset of the vertices of the other curve. To formalize this, we use the notion of ordered stabbers and line-stabbing wedges as defined by Guibas, Hershberger, Mitchell and Snoeyink~\cite{Guibas1994ApproximatingPSM}.
 
 \begin{definition}[Line-stabbing wedge]\label{def:stabber}
    Given a sequence of $n$ convex objects $  O_1,\ldots,O_n$, an ordered stabber of this sequence is a line segment $l(x) = (1-x)s + xt$ from $s$ to $t$, such that points $0\leq x_1\leq x_2 \leq\ldots\leq x_n\leq 1$ exist with $p_i = l(x_i)\in O_i$. We call $p_i$ the realising points of $l$. We say that $l$ {stabs through} $O_1, \ldots, O_n$.
    We call the set of points $t$ that are endpoints of ordered stabbers of $O_1, \ldots, O_n$ the {line-stabbing wedge} of this sequence.   
 \end{definition}
 
In their paper, Guibas et al.\ give an algorithm to compute the line-stabbing wedge for a sequence of $n$ objects, with running time $\cO(n\log n)$. This line-stabbing wedge is described by $\cO(n)$ circular arcs, and two tangents that go to infinity (see Figure \ref{fig:stabber_example}).

\begin{observation}\label{obs:stabber}
    Let $T,B$ and $\delta$ be given. Denote by $v_k$ the vertices of $T$. For any feasible tunnel $\tau(p,q)$ with $p=(x_p,y_p)\in C_{a,b}$ and $q=(x_q,y_q)\in C_{i,j}$, it holds that $\overline{B}[y_p,y_q]$ stabs through the ordered set $\{\disk[\delta]{v_{a+1}},\ldots,\disk[\delta]{v_{i}}\}$, if and only if the price of $\tau(p,q)$ is at most $\delta$.
\end{observation}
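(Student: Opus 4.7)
The plan is to prove both directions by unpacking the definition $\prc(\tau(p,q)) = d_\cF(T[x_p,x_q], \overline{B}[y_p,y_q])$ and exploiting the fact that $T[x_p,x_q]$ is a polyline whose intermediate vertices are exactly $v_{a+1},\ldots,v_i$ (since $p \in C_{a,b}$ and $q \in C_{i,j}$), while $\overline{B}[y_p,y_q]$ is a single line segment. Once this structural observation is fixed, both directions reduce to bookkeeping about segment-to-segment Fréchet matchings, for which Observation~\ref{obs-trivial} is the workhorse.

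First I would handle the ``only if'' direction. Given a traversal $(f,g)$ of $T[x_p,x_q]$ and $\overline{B}[y_p,y_q]$ of cost at most $\delta$, I pick parameters $\alpha_{a+1} < \cdots < \alpha_i$ with $T(f(\alpha_j)) = v_j$ for each $a+1 \le j \le i$, and define the realizing points $p_j := \overline{B}[y_p,y_q](g(\alpha_j))$. Monotonicity of $g$ places the $p_j$ along the shortcut in order, and the $\delta$ cost bound yields $\|p_j - v_j\| \le \delta$, i.e.\ $p_j \in \disk[\delta]{v_j}$. Thus $\overline{B}[y_p,y_q]$ is an ordered stabber of $\disk[\delta]{v_{a+1}},\ldots,\disk[\delta]{v_i}$.

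For the converse, given ordered realizing points $p_{a+1},\ldots,p_i$ on the shortcut, I would augment the sequence with $p_a := B(y_p)$ and $p_{i+1} := B(y_q)$. Feasibility of $\tau(p,q)$ means $p,q \in \dfree{}{}$, so $\|p_a - T(x_p)\|\le\delta$ and $\|p_{i+1} - T(x_q)\|\le\delta$. I then construct a traversal piecewise: match the initial partial edge of $T$ from $T(x_p)$ to $v_{a+1}$ against the portion of $\overline{B}[y_p,y_q]$ from $p_a$ to $p_{a+1}$; for $a+1 \le j < i$, match the edge $\overline{v_j v_{j+1}}$ against the subsegment from $p_j$ to $p_{j+1}$; finally match the partial edge of $T$ from $v_i$ to $T(x_q)$ against the subsegment from $p_i$ to $p_{i+1}$. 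Each piece is a pair of line segments whose corresponding endpoints lie within distance $\delta$, so by Observation~\ref{obs-trivial} it admits a traversal of cost at most $\delta$. Concatenating all pieces gives a global traversal of cost at most $\delta$, so $\prc(\tau(p,q)) \le \delta$.

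The only (minor) obstacle is checking that the piecewise traversals glue into one valid monotone traversal. This is automatic since consecutive pieces share both a common vertex $v_j$ on $T$ and a common endpoint $p_j$ on the shortcut, so the reparametrizations agree at the splice points and monotonicity is preserved throughout.
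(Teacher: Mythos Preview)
Your argument is correct and is exactly the natural justification one would give for this statement. The paper itself states this as an observation without proof, so there is no alternative argument to compare against; your write-up simply makes explicit the two-direction unpacking (extracting realizing points from a $\delta$-traversal, and conversely gluing segment-to-segment matchings via Observation~\ref{obs-trivial}) that the authors evidently considered routine.
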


\begin{figure}
    \centering
    \includegraphics[width=0.7\textwidth]{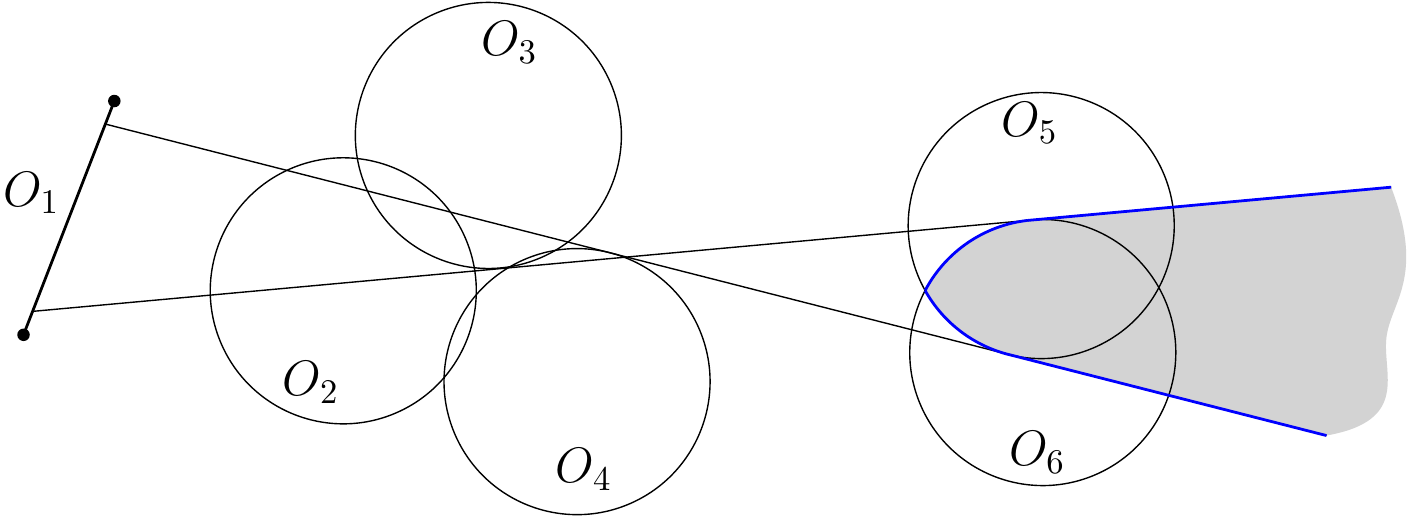}
    \caption{Example for the line-stabbing wedge for a line segment $O_1$ and disks $O_2,\ldots,O_6$. The line-stabbing wedge is shown in gray, with its boundary in blue.}
    \label{fig:stabber_example}
\end{figure}

\section{Exact decider algorithm}\label{exactchapter}

In this section we describe an exact decider algorithm for the $k$-shortcut Fréchet distance for two polygonal curves. The algorithm can also be used to solve the decision problem of the (unparameterized) shortcut Fr\'echet distance by setting $k=n$. We first describe the algorithm in Section~\ref{exact-alg-subsection} and then analyse its correctness and running time in Section~\ref{sec:analysis}.

\subsection{Description of the algorithm}\label{exact-alg-subsection}

We are given a parameter $k$, a value $\delta$ and the two polygonal curves $T$ and $B$ in the plane. 
Our algorithm iterates over the $\delta$-free-space diagram of $T$ and $B$ in $k$ rounds. In each round, based on the computation of the previous round, we compute the set of points that are reachable by using one additional shortcut. The goal is to compute the $(\delta,s)$-reachable space $\dreach{s}{}{}(T,B)$ in round $s$.
In each round, we handle the cells of the free-space diagram in a row-by-row order, and within each row from left to right. For every cell $C_{i,j}$ we consider three possible ways that a monotone path with proper tunnels can enter. 
\begin{enumerate}
\item The monotone path could enter the cell $C_{i,j}$ from the neighboring cell $C_{i-1,j}$ to the left or from the neighboring cell $C_{i,j-1}$ below. This does not directly involve a tunnel.
\item The monotone path could reach $C_{i,j}$ with a proper tunnel (only for $s \geq 1$). We distinguish between vertical and diagonal tunnels (compare~\cite{Buchin2013ComputingTF,Driemel2012JaywalkingYD} for a similar distinction).
\begin{enumerate}
    \item[(i)] The tunnel may start in any cell $C_{a,b}$ with $a<i$ and $b<j$.  We call this a diagonal tunnel.
    \item[(ii)] The tunnel may start in any cell $C_{i,b}$ for $b<j$. We call this a vertical tunnel.
\end{enumerate}
\end{enumerate}

Note that we do not consider (horizontal) tunnels starting in a cell $C_{a,j}$ with $a<i$, since we only consider proper tunnels.
Using this distinction, we will describe how to compute the set of points reachable by a monotone path with $s$ proper tunnels, for each cell of the diagram.  We denote the set computed by the algorithm for cell $C_{i,j}$ in round $s$ with $P^s_{i,j}$. The $(\delta,s)$-reachable space is then obtained by taking the union of these sets over all rounds
$\dreach{s}{i}{j}=\bigcup_{0\leq s'\leq s}P^{s'}_{i,j}$.

After $k$ rounds, the algorithm tests whether the point $(1,1)$ is contained in our computed set of reachable points. If this is the case, then the algorithm returns "$d^k_\cS(T,B)\leq\delta$", otherwise the algorithm returns "$d^k_\cS(T,B) > \delta$".

\begin{figure}
    \centering
    \includegraphics{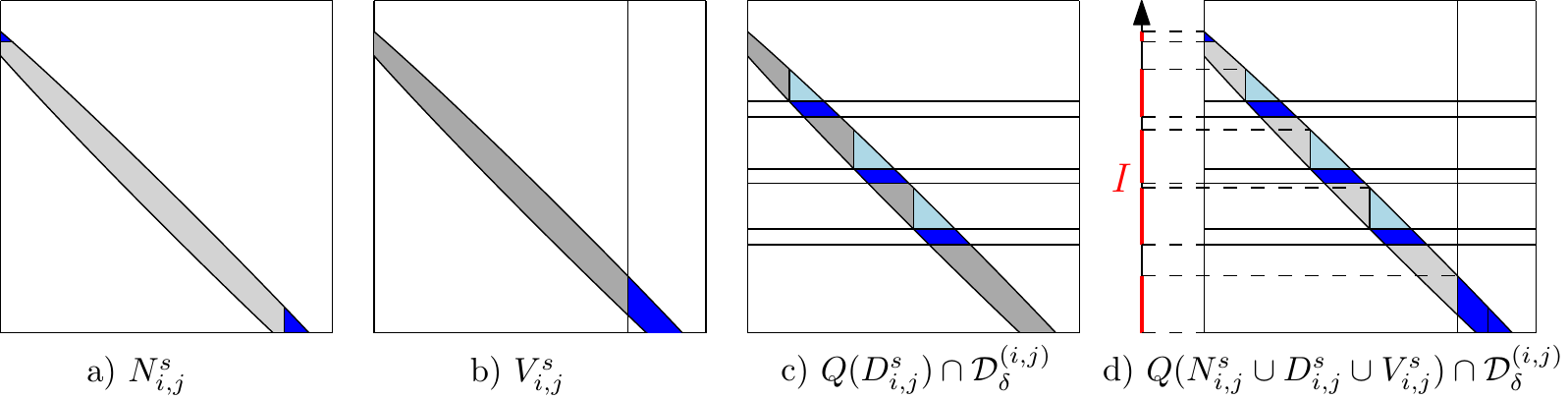}
    \caption{Example of the composition of the reachable space within a free-space cell. The fragmentation of the reachable space in this cell $P^s_{i,j}=Q(\nei[s]{i}{j}\cup\dia[s]{i}{j}\cup\ver[s]{i}{j})\cap\dfree{i}{j}$ results in a large family of intervals $I$ on the edge of $B$.}
    \label{fig:reachable_examples}
\end{figure}

\subparagraph{Propagating reachability within a cell}
To simplify the description of the algorithm, we use the following set function which receives a set $P \subseteq C_{i,j}$ for some cell $C_{i,j}$ and which 
extends $P$ to all points above and to the right of it.
\[ Q(P)=\{(x,y)\in [0,1]^2 \mid \exists (a,b)\in P \text{ such that }a\leq x\text{ and } b\leq y\}\]
We will usually intersect this set with $\dfree{i}{j}$ to obtain all points that are reachable from a point of $P$ by a monotone path that stays inside the $\delta$-free-space of this cell.  Figure~\ref{fig:reachable_examples} c) shows an example of the resulting set. Note that the boundary of the resulting set can be described by pieces of the boundary of $\dfree{i}{j}$, pieces of the boundary of $P$, and horizontal and vertical line segments. 

\subparagraph{Step 1: Neighbouring cells}\label{sec:neighbouring-cell}
Since we traverse the cells of the diagram in a lexicographical order, we have already computed the (possibly empty) sets $P_{i-1,j}^s$ and $P_{i,j-1}^s$, by the time we handle cell $C_{i,j}$ in round $s$. Therefore, we can compute the incoming reachability intervals  by intersecting $P_{i-1,j}^s$ and $P_{i,j-1}^s$ with $C_{i,j}$. 
Now we apply the function $Q$ to these sets and denote the result with $\nei[s]{i}{j}$:
\[ \nei[s]{i}{j} = \left( Q(P_{i-1,j}^s \cap C_{i,j}) \cup Q(P_{i,j-1}^s \cap C_{i,j})\right) \cap \dfree{i}{j} \]
Refer to Figure~\ref{fig:reachable_examples} a).

\subparagraph{Step 2 (i): Diagonal tunnels}
(only for $s \geq 1$) 
We invoke the following procedure for every $a<i$ and $b<j$ with $P^{s-1}_{a,b}$. We denote the union of resulting sets of points in $\dfree{i}{j}$ computed in this step with $\dia[s]{i}{j}$.

The procedure is given a set of points $P^{s-1}_{a,b}$ in the $\delta$-free-space $\dfree{a}{b}$ and computes all points in $\dfree{i}{j}$ that are endpoints of tunnels starting in $P^{s-1}_{a,b}$ with price at most $\delta$.
The procedure first projects $P^{s-1}_{a,b}$ onto the edge $e_b$ of the base curve. The resulting set consists of disjoint line segments $I=\{\overline{s_1\,t_1},\ldots\}$ along $e_b$ (refer to Figure \ref{fig:reachable_examples} d)\,).
The procedure then computes the line-stabbing wedge $W$ through $\overline{s_1\,t_1}$ and disks $\disk{v_{a+1}},\ldots,\disk{v_i}$ centered at vertices of $T$.
$W$ is then intersected with the edge $e_j$, resulting in a set $J$ on $e_j$ corresponding to a horizontal slab in $C_{i,j}$ (compare Figure \ref{fig:reachable_examples} c) and Figure \ref{fig:diagonalTunnel_example}\,). This resulting set is then intersected with $\dfree{i}{j}$ to obtain all endpoints of feasible shortcuts with price at most $\delta$ starting in $\overline{s_1\,t_1}$. 
The procedure performs the above steps for every line segment $\overline{s\,t}\in I$ and returns the union of these sets. The resulting set may look as illustrated in Figure \ref{fig:reachable_examples} c).

\subparagraph{Step 2 (ii): Vertical tunnels}\label{sec:vert-tunnel}
(only for $s \geq 1$) 
Let $p$ denote a point in $\bigcup_{l\leq j-1}P^{s-1}_{i,l}$ with minimal $x$-coordinate, i.e., a leftmost point in this set. A feasible vertical tunnel always has price at most $\delta$. Therefore, we simply take all points in the $\delta$-free-space to the right of $p$ in the cell $C_{i,j}$.
To do this, we compute the intersection of a halfplane that lies to the right of the vertical line at $p$ with the $\delta$-free-space in $C_{i,j}$. We denote this set with  $\ver[s]{i}{j}$. Refer to Figure~\ref{fig:reachable_examples} b) for an example.

\begin{figure}
    \centering
    \includegraphics[width=0.8\textwidth]{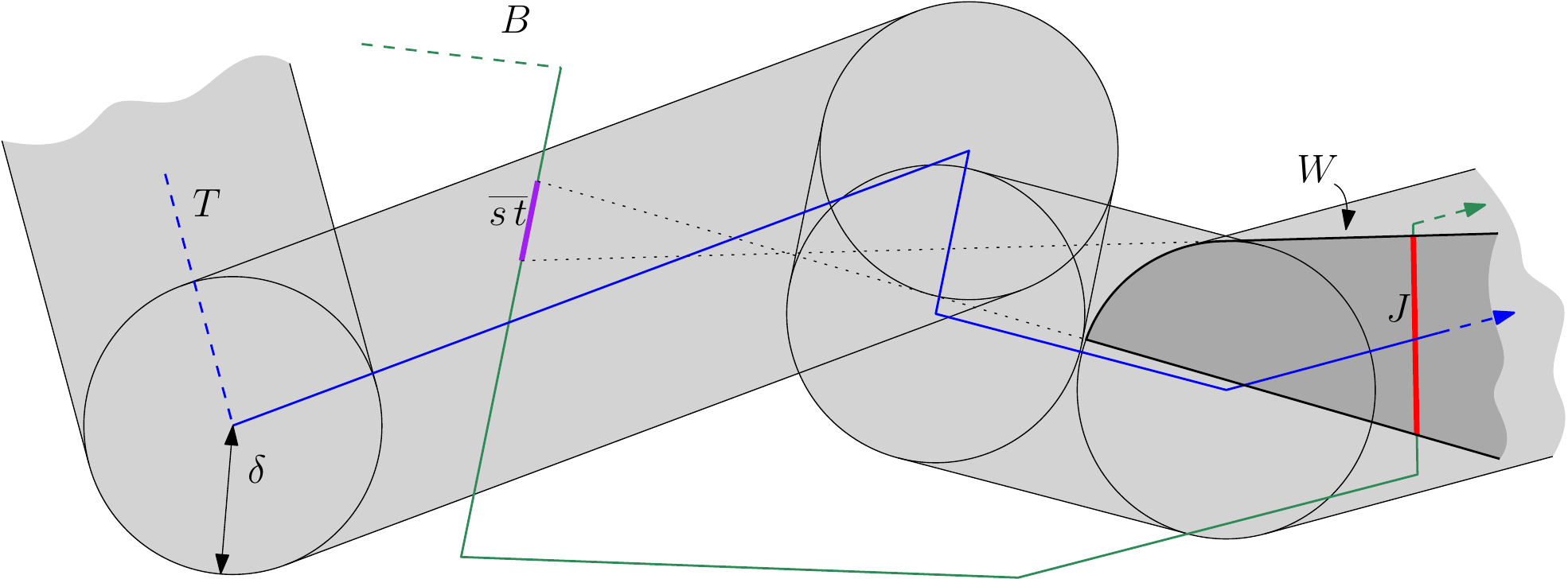}
    \caption{Example of the set $J$ (in red) computed by the \textsc{diagonalTunnel} procedure.}
    \label{fig:diagonalTunnel_example}
\end{figure}
\subparagraph{Putting things together}
Finally, we compute the set $P^s_{i,j}$ by taking the union of the computed sets and extending this set by using the function $Q$ defined above: 
\[ P^s_{i,j}=Q(\nei[s]{i}{j}\cup\dia[s]{i}{j}\cup\ver[s]{i}{j})\cap\dfree{i}{j}\]
It remains to specify the initialization: 
We define $P^{0}_{1,1}=\dfree{1}{1}$, if $(0,0) \in \dfree{}{}$, and otherwise $P^0_{1,1}=\emptyset$. Starting from this, we can compute the sets $P^{0}_{i,j}$ for $i,j \neq 1$ in a row-by-row fashion. For $s > 0 $ we continue in rounds, as described above.

\subsection{Analysis}\label{sec:analysis}
We now analyse the described algorithm.

\subsubsection{Correctness}

We argue that the structure of $P^s_{i,j}$ as computed by the algorithm is indeed as claimed. Namely for all $i,j$ and $s$ it holds that $\dreach{s}{i}{j} = \bigcup_{0\leq s'\leq s}P^{s'}_{i,j}$. We begin with two lemmas, before we prove this statement.

\begin{lemma}\label{exact:exact-diag}
    Let $T$ and $B$ be two polygonal curves with $n_1$ and $n_2$ edges respectively. For any $1\leq i\leq n_1$, $1\leq j\leq n_2$ and $1\leq s\leq k$ let $\dia{i}{j}$ be the set of endpoints of diagonal tunnels, as computed in the algorithm described in Section \ref{exact-alg-subsection}, and let $R=\bigcup_{a=1}^{i-1}\bigcup_{b=1}^{j-1}P_{a,b}^{s-1}$ be the set of reachable points by exactly $s-1$ proper tunnels in the lower-left quadrant of $C_{i,j}$. For any $q\in C_{i,j}$ the tunnel $\tau(p,q)$ has price $\prc(\tau(p,q))\leq\delta$ for some $p\in R$ if and only if $q\in \dia{i}{j}$.
\end{lemma}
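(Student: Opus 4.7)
The plan is to prove the equivalence in two directions, each leveraging Observation~\ref{obs:stabber}, which characterizes a tunnel price of at most $\delta$ as the existence of an ordered stabbing of the $\delta$-disks around the intermediate vertices of $T$. The algorithm's diagonal-tunnel procedure constructs $\dia{i}{j}$ by enumerating every pair $(a,b)$ with $a<i$, $b<j$ and every segment $\overline{s_1\,t_1}$ in the projection of $P^{s-1}_{a,b}$ onto $e_b$, so the correctness statement reduces to showing that the line-stabbing wedge of $\overline{s_1\,t_1}$ and the disks $\disk{v_{a+1}},\ldots,\disk{v_i}$ exactly captures the ``good'' tunnel endpoints arising from each such triple.

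For the ``only if'' direction, the plan is to start from $p\in R$ satisfying $\prc(\tau(p,q))\leq\delta$. The price bound forces feasibility, hence $q\in\dfree{i}{j}$, and $p\in P^{s-1}_{a,b}$ for some $a<i$, $b<j$, so $y_p$ lies in the projection of $P^{s-1}_{a,b}$ onto $e_b$ and in particular in some segment $\overline{s_1\,t_1}$ enumerated by the algorithm. Observation~\ref{obs:stabber} then says $\overline{B}[y_p,y_q]$ stabs $\disk{v_{a+1}},\ldots,\disk{v_i}$ in order, which by Definition~\ref{def:stabber} places $B(y_q)$ in the wedge $W$ associated with $(\overline{s_1\,t_1},\disk{v_{a+1}},\ldots,\disk{v_i})$. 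Restricting to $e_j$ yields $y_q\in J$, and intersecting with $\dfree{i}{j}$ yields $q\in\dia{i}{j}$.

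For the ``if'' direction, the plan is to start from $q\in\dia{i}{j}$, identify the generating triple $(a,b,\overline{s_1\,t_1})$, and apply Definition~\ref{def:stabber} to obtain a point $y_p^\ast\in\overline{s_1\,t_1}$ together with an ordered stabbing of the disks by the segment $\overline{B}[y_p^\ast,y_q]$. Since $y_p^\ast$ lies in the projection of $P^{s-1}_{a,b}$ onto $e_b$, it can be lifted to some $p^\ast=(x_p^\ast,y_p^\ast)\in P^{s-1}_{a,b}\subseteq R$. The tunnel $\tau(p^\ast,q)$ is feasible, because $p^\ast\in P^{s-1}_{a,b}\subseteq\dfree{a}{b}$ and $q\in\dfree{i}{j}$, and Observation~\ref{obs:stabber} converts the stabbing back into $\prc(\tau(p^\ast,q))\leq\delta$.

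The most delicate point expected is the compatibility between the algorithm's wedge, which treats $\overline{s_1\,t_1}$ as the first object $O_1$, and the purely disk-based characterization in Observation~\ref{obs:stabber}. The plan is to resolve this by remarking that an ordered stabber of $(O_1=\overline{s_1\,t_1},\disk{v_{a+1}},\ldots,\disk{v_i})$ with starting endpoint $B(y_p)\in\overline{s_1\,t_1}$ is, upon unfolding Definition~\ref{def:stabber}, exactly a segment $\overline{B}[y_p,y_q]$ that stabs the disks in order. Once this correspondence is in place, the remaining content of the proof is routine bookkeeping of cell and vertex indices.
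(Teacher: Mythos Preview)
Your proposal is correct and follows essentially the same approach as the paper: both arguments reduce to Observation~\ref{obs:stabber} together with the correctness of the Guibas et al.\ line-stabbing wedge, iterated over all cells $(a,b)$ and all segments in the projection of $P^{s-1}_{a,b}$. Your write-up is more explicit than the paper's in separating the two directions and in spelling out why a stabber of $(\overline{s_1\,t_1},\disk{v_{a+1}},\ldots,\disk{v_i})$ may be assumed to start inside $\overline{s_1\,t_1}$, which the paper handles in one sentence.
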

\begin{proof}
    First let $a$ and $b$ be fixed and look at $P=P_{a,b}^{s-1}$.
    The diagonal tunnel procedure begins by projecting $P$ onto the edge $e_{b}$ of $B$, resulting in $I$.
    By the correctness of the procedure presented by Guibas et al.\, the diagonal tunnel procedure computes among other things the set of points in $\bR^2$ that are \emph{all} endpoints of stabbers through $I$ and $\disk{v_{a+1}},\ldots,\disk{v_{i}}$ centered at vertices of $T$.
    Intersecting this set with $e_{j}$ results in \emph{all} endpoints of stabbers through the ordered set ending on $e_{j}$, call this set $J$. 
    For every point $B(y_q)$ in $J$ there is at least one point $B(y_p)$ in $I$, such that $\overline{B}[y_p,y_q]$ stabs through $\{\disk{v_{a+1}},\ldots,\disk{v_{i}}\}$. 
    Hence, by Observation \ref{obs:stabber}, every point $p\in \dfree{a}{b}$, with $y$-coordinate $y_p$ and every point $q\in\dfree{i}{j}$ with $y$-coordinate $y_q$ form a feasible tunnel $\tau(p,q)$ with price at most $\delta$.
    Since the line-stabbing algorithm correctly computes \emph{all} possible endpoints of stabbers starting in $I$ and ending on $e_j$, the claim follows for $P_{a,b}^{s-1}$ by Observation \ref{obs:stabber}. That is, any $q$ such that there exists $p\in R$ with $\prc(\tau(p,q))\leq\delta$ also must be in $\dia{i}{j}$.
    As the algorithm iterates over all cells in the lower-left quadrant of $C_{i,j}$ and in the end defines $\dia{i}{j}$ as the union of above computed sets, the claim follows.
\end{proof}

\begin{lemma}\label{exact:exact-vertical}
		Let $T$ and $B$ be two polygonal curves with $n_1$ and $n_2$ edges, respectively. For any $1\leq i\leq n_1$, $1\leq j\leq n_2$ and $1\leq s\leq k$ let $\ver{i}{j}$ be the points reachable by a vertical tunnel as computed in the algorithm and let $R=\bigcup_{b=1}^{j-1}P_{i,b}^{s-1}$ be the set of reachable points by exactly $s-1$ proper tunnels in the column below $C_{i,j}$. For any $q\in C_{i,j}$ the tunnel $\tau(p,q)$ has price $\prc(\tau(p,q))\leq\delta$ for some $p\in R$ if and only if $q\in \ver{i}{j}$.
\end{lemma}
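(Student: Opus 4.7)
The plan is to exploit the fact that for a vertical tunnel both endpoints lie in the same column $i$, so $T[x_p,x_q]$ is a sub-segment of the single edge $e_i$ of $T$. The price of $\tau(p,q)$ then reduces to the Fréchet distance between two line segments and, by Observation \ref{obs-trivial}, equals $\max(\|T(x_p)-B(y_p)\|,\,\|T(x_q)-B(y_q)\|)$. Hence $\prc(\tau(p,q))\leq\delta$ is equivalent to both $p$ and $q$ lying in $\dfree{}{}$. I would isolate this as the central structural observation; it justifies the algorithm's treatment of vertical tunnels as ``free'' whenever both endpoints are in the free-space, and reduces the rest of the argument to a bookkeeping check on coordinates.

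For the ``if'' direction, assume $q\in\ver{i}{j}$; by construction this already gives $q\in\dfree{i}{j}$. Let $p^\ast$ denote the leftmost point of $R$ selected by the algorithm. Since $p^\ast\in R\subseteq\dreach{s-1}{}{}\subseteq\dfree{}{}$, $p^\ast$ lies in $\dfree{}{}$. I would then verify that $\tau(p^\ast,q)$ is a valid tunnel: $x_{p^\ast}\leq x_q$ because $q$ lies in the halfplane to the right of the vertical line through $p^\ast$, and $y_{p^\ast}\leq y_q$ because $p^\ast$ lies in some row $b\leq j-1$ strictly below row $j$. Applying the first paragraph yields $\prc(\tau(p^\ast,q))\leq\delta$, so $p^\ast$ is a valid witness.

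For the converse, suppose some $p\in R$ satisfies $\prc(\tau(p,q))\leq\delta$. The first paragraph forces $q\in\dfree{i}{j}$. Since $p^\ast$ is leftmost in $R$ we have $x_{p^\ast}\leq x_p$, and tunnel validity gives $x_p\leq x_q$; hence $q$ lies in the halfplane through $p^\ast$, and together with $q\in\dfree{i}{j}$ this places $q\in\ver{i}{j}$. The edge case $R=\emptyset$ is vacuous in both directions. I do not anticipate a substantive obstacle: the segment-segment reduction in the first paragraph is the only real content, and everything afterwards mirrors the coordinate bookkeeping used for diagonal tunnels in Lemma \ref{exact:exact-diag}.
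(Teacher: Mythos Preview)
Your proposal is correct and follows essentially the same approach as the paper: both hinge on the observation (via Observation~\ref{obs-trivial}) that a feasible vertical tunnel automatically has price at most $\delta$, and then use the leftmost point of $R$ together with the halfplane construction to handle the coordinate bookkeeping. Your version is more explicit than the paper's---you spell out both directions and verify tunnel validity carefully---whereas the paper compresses the argument into a few lines; one minor remark is that your inclusion $R\subseteq\dreach{s-1}{}{}$ is not needed (and might look circular in context), since $R\subseteq\dfree{}{}$ follows directly from the algorithm intersecting each $P^{s-1}_{i,b}$ with $\dfree{i}{b}$.
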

\begin{proof}
	Note that any vertical tunnel costs at most $\delta$ if it is feasible by Observation \ref{obs-trivial}.
	Furthermore note that the leftmost point $p$ in $R$ is stored in $l^{s-1}_{i,j-1}$ hence, we can retrieve $p$.
	Now assume $\tau(r,q)$ is an arbitrary vertical tunnel with $r\in R$ and $q\in C_{i,j}$.
	Since a tunnel must be monotone $x_r\leq x_q$.
	Because $p$ is the leftmost point in $R$ we have $x_p\leq x_r\leq x_q$.
	From the way we constructed $\ver{i}{j}$ (intersecting a vertical closed halfplane to the right of $p$ with $\dfree{i}{j}$) it follows that $q\in \ver{i}{j}$.
\end{proof}

\begin{theorem}\label{exact:correctness}
Let $T$ and $B$ be two polygonal curves in the plane with overall complexity $n$ together with a value $\delta>0$. Let $P_{i,j}^s$ be the set of reachable points with exactly $s$ proper tunnels as computed in the algorithm for all $i,j$ and $s$. It holds that 
\[ \bigcup_{s'\leq s}P_{i,j}^{s'} = \dreach{s}{i}{j}(T,B).\]
Thus the algorithm correctly decides, whether the $k$-shortcut Fréchet distance of $T$ and $B$ is at most $\delta$.
\end{theorem}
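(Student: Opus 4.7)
The plan is to establish the equality $\bigcup_{s' \le s} P^{s'}_{i,j} = \dreach{s}{i}{j}$ by a double induction: an outer induction on the round $s$ and, within each round, an inner induction on the cell index $(i,j)$ taken in the lexicographic order used by the algorithm. The base case $s=0$ reduces to the classical correctness of Alt--Godau's reachability propagation: with no tunnels available, $P^0_{i,j}$ is computed solely from the neighbor step $\nei[0]{i}{j}$ applied to $P^0_{i-1,j}$ and $P^0_{i,j-1}$, and the convexity of each free-space cell $\dfree{i}{j}$ ensures that this recovers exactly the monotonically reachable portion of $\dfree{i}{j}$ from $(0,0)$.

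For the inductive step $s \ge 1$, I would prove the two inclusions separately. The inclusion ``$\subseteq$'' is the easier direction: any point in $P^s_{i,j}$ arises, by the algorithm's construction, from either $\nei[s]{i}{j}$ (handled by the inner IH on cells), $\dia[s]{i}{j}$, or $\ver[s]{i}{j}$ (handled by Lemmas~\ref{exact:exact-diag} and~\ref{exact:exact-vertical} together with the outer IH, which places the source of the relevant tunnel in $\dreach{s-1}{}{}$). The subsequent extension via $Q(\cdot) \cap \dfree{i}{j}$ corresponds to appending a tunnel-free monotone path inside a single cell, so the total number of proper tunnels used remains at most $s$.

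The inclusion ``$\supseteq$'' is the delicate direction. Given $q \in \dreach{s}{i}{j}$, fix a witnessing monotone path $\pi$ with $s^* \le s$ proper tunnels ending at $q$. If $s^* < s$, apply the outer IH directly. Otherwise, consider the last tunnel $\tau(p,p^*)$ together with the tunnel-free monotone tail of $\pi$ from $p^*$ to $q$. If $p^* \in C_{i,j}$, properness forces the tunnel to be diagonal or vertical (a same-row tunnel would place both shortcut endpoints on the same edge of $B$), and Lemmas~\ref{exact:exact-diag} and~\ref{exact:exact-vertical} insert $p^*$ into $\dia[s]{i}{j} \cup \ver[s]{i}{j}$; convexity of $\dfree{i}{j}$ then yields $q \in Q(\{p^*\}) \cap \dfree{i}{j} \subseteq P^s_{i,j}$. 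If instead $p^* \notin C_{i,j}$, the tail of $\pi$ enters $C_{i,j}$ through the left or bottom edge at some boundary point $q'$; by the inner IH, $q' \in P^{s''}_{i-1,j} \cup P^{s''}_{i,j-1}$ for some $s'' \le s$, and consequently $q \in \nei[s'']{i}{j} \subseteq P^{s''}_{i,j}$.

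The main subtlety I anticipate is reconciling the ``exactly $s-1$'' accounting used in Lemmas~\ref{exact:exact-diag} and~\ref{exact:exact-vertical} with the ``at most $s$'' union statement of the theorem: the tunnel source $p$ could be reachable by strictly fewer than $s^*-1$ proper tunnels, so one has to argue that, at the round $s_0+1$ corresponding to the \emph{minimum} number $s_0$ of proper tunnels needed to reach $p$, the endpoint $p^*$ is inserted into $\dia[s_0+1]{i}{j}$ or $\ver[s_0+1]{i}{j}$, and thereby $q$ lands in $P^{s_0+1}_{i,j}$ with $s_0+1 \le s$. Once this round-accounting is straightened out, the final claim of the theorem is immediate: evaluating at the top-right cell, reachability of $(1,1)$ by at most $k$ proper tunnels---equivalent to $d^k_\mathcal{S}(T,B) \le \delta$ by Observation~\ref{obs-proper}---is exactly the condition the algorithm tests.
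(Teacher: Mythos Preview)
Your proposal is correct and follows essentially the same approach as the paper: a joint induction on the round~$s$ and the cell~$(i,j)$ in processing order, with the two inclusions handled by the case distinction (neighbor entry / vertical tunnel / diagonal tunnel) and appeal to Lemmas~\ref{exact:exact-diag} and~\ref{exact:exact-vertical}. If anything, your write-up is more careful than the paper's on the ``exactly $s-1$ versus at most $s-1$'' round-accounting issue, which the paper glosses over; your resolution via the minimal~$s_0$ is exactly what is needed to make that step rigorous.
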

\begin{proof}
We show that the reachable space $\dreach{s}{i}{j}$ is correctly computed via induction in $i,j$ and $s$.
Note that $\dreach{s'}{1}{1}$ is computed correctly for all $s'\leq k$ since $\dfree{1}{1}$ is convex and the algorithm checks whether $(0,0)\in\dfree{1}{1}$.  Thus, if $(0,0)\in\dfree{i}{j}$, $\dreach{0}{1}{1}=\dfree{1}{1}=P^0_{1,1}$ is computed in the first step, by convexity of $\dfree{1}{1}$, otherwise it is empty. For $s'>0$ the set $P^{s'}_{1,1}$ is empty since no cell is below or to the left of it. Hence, $\dreach{s'}{1}{1}=\dreach{0}{1}{1}$ is also computed correctly.

By induction all cells $C_{\leq n_1,<j}$ and $C_{<i,j}$ and in particular $C_{i-1,j}$ and $C_{i,j-1}$ have been handled correctly up to round $s$ and $P_{i,j}^s$ and is stored for every correctly handled cell. Assume now that some point $q\in \dreach{s}{i}{j}$ is given. By Observation \ref{obs-proper}, the point $q$ corresponds to a monotone path with $s'\leq s$ proper tunnels. There are three possible ways via which this point in the parametric space is reachable.
The path reaching $q$ could take $s'$ shortcuts to reach $C_{i-1,j}$ or $C_{i,j-1}$, and enter via a monotone path through the boundary into $C_{i,j}$ at some point $a\in \partial C_{i,j}$. As $C_{i-1,j}$ and $C_{i,j-1}$ have been handled correctly for $s'$, the incoming reachability intervals on the boundary have been computed correctly containing $a$, thus $q$ is also in $P_{i,j}^{s'}$.

Alternatively the path could enter some cell $C_{i,l}$ with $s'-1$ shortcuts and then take a vertical shortcut into $C_{i,j}$ for some $l<j$ and then possibly taken another monotone path inside the cell to $q$.   Lemma~\ref{exact:exact-vertical} implies that $q$ is in $P_{i,j}^{s'}$.

Lastly the path could take a diagonal shortcut and could similarly end with a monotone path inside $C_{i,j}$ to $q$. Lemma~\ref{exact:exact-diag} implies that $q$ then again is in $P_{i,j}^{s'}$.

Now let $q\in P_{i,j}^{s'}$ for $s'\leq s$. Then $q$ is either in $(i)$ $\nei[s']{i}{j}$, $(ii)$ $\ver[s']{i}{j}$, $(iii)$ $\dia[s']{i}{j}$ or $(iv)$ is reachable by a monotone path from some point $q'$ in one of the three preceding cases. Thus we can reduce this to the first three cases.

However Cases $(i)$ follow immediately since cells $C_{i,j-1}$ and $C_{i-1,j}$ have been handled correctly up to round $s'$, and thus $q$ must also be in $\dreach{s'}{i}{j}$.

For Cases $(ii)$ and $(iii)$ Lemma~\ref{exact:exact-vertical} and Lemma~\ref{exact:exact-diag} imply that $q$ must be in $\dreach{s'}{i}{j}$ respectively. Thus $\dreach{s}{i}{j}(T,B) = \bigcup_{s'\leq s}P_{i,j}^{s'}$.

As we store the reachable space and the leftmost point, this information is available for all upcoming iterations.
\end{proof}

\subsubsection{Running time}

\begin{lemma}\label{runningtime}
    Let $T$ and $B$ be two polygonal curves in the plane with overall complexity $n$, together with a distance threshold $\delta>0$. The algorithm described in Section \ref{exact-alg-subsection} has running time in $\cO(kn^{2k+2}\log n)$ and uses $\cO(kn^{2k+2})$ space.
\end{lemma}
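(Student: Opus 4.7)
The plan is to bound both the combinatorial complexity of the regions $P^s_{i,j}$ maintained per cell and the work needed to compute them. First I would argue, by induction on $s$, that the number $f(s)$ of atomic boundary pieces of $P^s_{i,j}$ inside any single cell satisfies $f(s) = \cO(n^{2s})$. The base case $s=0$ holds because $P^0_{i,j}$ is the $Q$-extension of the constant-complexity incoming reachability intervals intersected with the constant-complexity cell $\dfree{i}{j}$, so $f(0) = \cO(1)$.

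For the inductive step I would analyse the three contributions of $P^s_{i,j} = Q(\nei[s]{i}{j} \cup \dia[s]{i}{j} \cup \ver[s]{i}{j}) \cap \dfree{i}{j}$ separately: the vertical part is a single half-plane cut of the cell and adds $\cO(1)$; the neighbour part inherits complexity already accounted for in the two adjacent cells that have been processed earlier in the lexicographic order; and the diagonal part is dominant. For each of the $\cO(n^2)$ source cells $C_{a,b}$, the projection of $P^{s-1}_{a,b}$ onto $e_b$ has $\cO(f(s-1))$ intervals, and one line-stabbing wedge per interval---intersected with the line supporting $e_j$---produces a constant-complexity piece. Applying $Q$ and intersecting with $\dfree{i}{j}$ preserves the total count, yielding $f(s) = \cO(n^2 f(s-1)) = \cO(n^{2s})$.

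The running time and space bounds then fall out by a direct charging argument. Processing one target cell in round $s$ iterates over $\cO(n^2)$ source cells; for each, we enumerate its $\cO(f(s-1))$ projection intervals and invoke the $\cO(n \log n)$-time line-stabbing algorithm of~\cite{Guibas1994ApproximatingPSM} once per interval, together with the small additional bookkeeping needed to update $P^s_{i,j}$ and to maintain the leftmost reachable point used by vertical tunnels. Summing over all $n^2$ target cells and $k$ rounds yields the claimed $\cO(k n^{2k+2} \log n)$ running time, and explicitly storing the sets $P^s_{i,j}$ across all $k$ rounds gives the $\cO(k n^{2k+2})$ space bound.

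The main obstacle I anticipate is justifying the constant-complexity contribution per source interval used above: the line-stabbing wedge has $\cO(n)$ circular arcs on its boundary, and a naive intersection argument produces $\cO(n)$ intervals on $e_j$ per source interval, which would blow up the complexity to $\cO(n^{3s})$ per cell instead of $\cO(n^{2s})$. Closing this gap requires exploiting the funnel structure of the wedge---it is bounded by two tangent rays connected by a convex chain of arcs and is essentially star-shaped with respect to the source segment---so that any line transverse to the stabbing direction meets the wedge in a single connected interval. This geometric property is the structural heart of the argument; once it is established, the rest of the analysis is routine bookkeeping.
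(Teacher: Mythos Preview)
Your overall strategy---bound the per-cell complexity $f(s)=\cO(n^{2s})$ by induction on $s$, then sum the work over all cells and rounds---is exactly the paper's approach, and the ``obstacle'' you identify about the wedge meeting $e_j$ in a single interval is a genuine point that the paper's proof takes for granted; your resolution via star-shapedness is on the right track.

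There is, however, a real arithmetic gap. You charge one full $\cO(n\log n)$ invocation of the Guibas et al.\ algorithm \emph{per source interval}. With $\sum_{a<i,b<j} f(s-1)=\cO(n^{2s})$ intervals per target cell, this gives $\cO(n^{2s+1}\log n)$ per target cell in round $s$; summing over $n^2$ target cells and over $s=1,\dots,k$ yields $\cO(n^{2k+3}\log n)$, which is a factor $n$ larger than the lemma's bound. The paper closes this gap by asserting that only $\cO(n)$ line-stabbing wedges are computed per target cell (the disk sequence $\disk{v_{a+1}},\ldots,\disk{v_i}$ depends only on the source column $a$, not on the row $b$ or on which interval is being processed) and that each individual interval then costs only $\cO(\log n)$; this gives their per-cell bound $\cO\bigl(\sum_{a,b} n_{a,b,s-1}\log n + n^2\log n\bigr)$. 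To match the stated running time you need to adopt this amortisation---precompute the wedge structure once per source column and then handle each starting segment in logarithmic time---rather than recomputing the full wedge from scratch for every interval.
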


\begin{proof}
Note that the sets $\nei[s]{i}{j}$, $\dia[s]{i}{j}$ and $\ver[s]{i}{j}$ computed by the algorithm are described as intersections of $\dfree{i}{j}$ with halfplanes, and unions of these. For a fixed $P^s_{i,j}$ we define  $n_{i,j,s}$ as the total number of such operations from which $P^s_{i,j}$ was obtained. As such, $\cO(n_{i,j,s})$ bounds the complexity of this set.

The complexity of $\nei[s]{i}{j}$ and $\ver[s]{i}{j}$ is constant. The complexity of $\dia{i}{j}$  is bounded by the sum of the complexities of all cells to the lower left:  \[n_{i,j,s}\in\cO\left(\sum_{0\leq a<i}\sum_{0\leq b<j}n_{a,b,s-1}\right).\] As $i,j\leq n$, and $s\leq k$, and $n_{a,b,0}\in\cO(1)$ for all $a$ and $b$, it holds that $n_{i,j,s}\in\cO(n^{2k})$.

Computing $\dia[s]{i}{j}$ takes $\cO(\sum_{a<i}\sum_{b<j}n_{a,b,s-1}\log n + n^2\log n)=\cO(n^{2k}\log n)$ time. This follows from the fact, that we compute $\cO(n)$ line-stabbing wedges, and for every cell $C_{a,b}$ with $a<i$ and $b<j$ we handle $n_{a,b,s-1}$ line segments based on $P^{s-1}_{a,b}$.
Computing $\nei{i}{j}$ takes $\cO(n_{i-1,j,s}+n_{i,j-1,s})=\cO(n^{2k})$ time, as we need to compute the reachability intervals from neighbouring cells. Computing $\ver{i}{j}$ takes $\cO(\sum_{b<j}n_{i,b,s-1})=\cO(n^{2k-1})$ time, as we need to compute the leftmost point $l^{s-1}_{i,j-1}$.
The space required to store $P^s_{i,j}$ as required by latter iterations and cells is in $\cO(n^{2k})$. 
Computing $Q(\nei{i}{j}\cup \ver{i}{j} \cup \dia{i}{j})$ takes linear time in the complexity of $\nei{i}{j}\cup \ver{i}{j} \cup \dia{i}{j}$, i.e. $\cO(n^{2k})$. As we do this for every cell in every round, the running time overall is $\cO(kn^{2k+2}\log n)$, and the space is bounded by $\cO(kn^{2k + 2})$.
\end{proof}

Lemma \ref{runningtime} together with Theorem \ref{exact:correctness} correctness then imply Theorem \ref{k-exact}.

The algorithm can also be used for the (unparameterized) shortcut Fr\'echet distance by choosing $k=n$, since there can be at most $n$ proper tunnels. We obtain the following corollary.

\begin{restatable}{corollary}{nexact}
\label{n-exact}
Let $T$ and $B$ be two polygonal curves in the plane with overall complexity $n$, together with a value $\delta>0$. There exists an algorithm with running time in $\cO\left(n^{2n+3}\log n\right)$ and space in $\cO\left(n^{2n+3}\right)$ that 
decides whether the shortcut Fréchet distance of $T$ and $B$ is at most $\delta$.
\end{restatable}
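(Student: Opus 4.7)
The plan is to invoke Theorem~\ref{k-exact} with $k = n$; the entire content of the corollary then reduces to showing that $n$ proper tunnels already suffice to realize the (unparameterized) shortcut Fr\'echet distance. First I would recall the reduction to proper tunnels noted right after the definition of the reachable space: an improper tunnel $\tau(p,q)$ satisfies $B[y_p, y_q] = \overline{B}[y_p, y_q]$, so its shortcut can be removed and replaced by a monotone path in $\dfree{}{}$ of no greater cost. Consequently, for any $k'$-shortcut curve whose Fr\'echet distance to $T$ is at most $\delta$, we may assume without loss of generality that all associated tunnels are proper.

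Next I would bound the number of proper tunnels that can appear in any shortcut curve. The definition of a $k'$-shortcut curve requires $t_i \leq s_{i+1}$ for consecutive shortcuts, so the $y$-intervals $[y_{p_i}, y_{q_i}]$ of distinct tunnels have pairwise disjoint interiors. A proper tunnel has its endpoints on distinct edges of $B$, so the interior of its $y$-interval contains at least one vertex of $B$, and by disjointness these interior vertices are pairwise distinct across different proper tunnels. Since $B$ has at most $n$ vertices, at most $n$ proper tunnels can appear, so the shortcut Fr\'echet distance equals the $n$-shortcut Fr\'echet distance $d^n_\cS(T,B)$.

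Running the algorithm of Theorem~\ref{k-exact} with $k = n$ therefore correctly decides whether the shortcut Fr\'echet distance of $T$ and $B$ is at most $\delta$. Substituting $k = n$ into the bounds of Lemma~\ref{runningtime} yields running time $\cO(n \cdot n^{2n+2} \log n) = \cO(n^{2n+3}\log n)$ and space $\cO(n \cdot n^{2n+2}) = \cO(n^{2n+3})$, as claimed. The only real step beyond invoking Theorem~\ref{k-exact} is the short counting argument that each proper tunnel advances past at least one vertex of $B$, and I do not anticipate any further obstacle.
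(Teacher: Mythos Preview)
Your proposal is correct and follows exactly the paper's approach: the paper simply remarks that the algorithm can be used for the unparameterized problem by choosing $k=n$ ``since there can be at most $n$ proper tunnels,'' and your counting argument (each proper tunnel's $y$-interval contains a distinct vertex of $B$) is the natural justification of that one-line claim.
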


\section{Approximate decision algorithms}\label{c-packed}

In this section we first describe a $(3+\eps)$-approximation algorithm for the decision problem of the $k$-shortcut Fr\'echet distance of two polygonal curves in the plane.
The algorithm has a near-quadratic running time in $n$.
In Section \ref{approx:c-packed} we show that the algorithm can be modified to have running time near-linear in $n$, for the special class of $c$-packed curves.

\subsection{Description of the algorithm}\label{sec:apx-alg}

We describe how to modify the algorithm of Section \ref{exactchapter} to circumvent the exponential complexity of the reachable space and obtain a polynomial-time approximation algorithm. 

Let two polygonal curves $T$ and $B$ be given, together with a distance threshhold $\delta$ and approximation parameter $\eps$.
As before, the algorithm (see Algorithm \ref{approx:algorydm}) iterates over the cells of the free-space diagram and computes sets $\nei[s]{i}{j}$, $\ver[s]{i}{j}$, and $\dia[s]{i}{j}$ for each cell $C_{i,j}$. The main difference now is that, instead of computing the exact set of points that can be reached by a diagonal tunnel, we want to use an approximation for this set. For this, we define an approximate diagonal tunnel procedure, see further below.  This procedure is called with the rightmost point $r^{s-1}_{i-1,j-1}$ in $\bigcup_{a<i;b<j}P^{s-1}_{a,b}$, $\eps$ and distance parameter $3\delta$. Crucially, the set resulting from one call to the procedure has constant complexity and is sufficient to approximate the set $\dia[s]{i}{j}$. 
We then compute $P^s_{i,j}=Q(\nei[s]{i}{j}\cup\dia[s]{i}{j}\cup\ver[s]{i}{j})\cap\dfree{i}{j}$,  similarly to Section \ref{exactchapter}. From this we compute $(i)$ the leftmost point $l^{s}_{i,j}$ in $\bigcup_{b\leq j}P^s_{i,b}$ based on $P^s_{i,j}$ and $l^s_{i,j-1}$, $(ii)$ the rightmost point $r^{s}_{i,j}$ in $\bigcup_{a\leq i;b\leq j}P^s_{a,b}$ based on $P^s_{i,j}$, $r^s_{i-1,j}$ and $r^s_{i,j-1}$, and $(iii)$ the outgoing reachability intervals of $P^s_{i,j}$. We store these variables to be used in the next round. Finally, after $k$ rounds, we check if $(1,1)$ is contained in the computed set of reachable points.

Our approximate diagonal tunnel procedure makes use of a data structure by Driemel and Har-Peled, which is summarized in the following lemma. This data structure needs to be built once on $T$ in the beginning and is then available throughout the algorithm.

\begin{restatable}[Distance oracle \cite{Driemel2012JaywalkingYD}]{lemma}{distanceoracle}\label{distance-oracle}
    Given a polygonal curve $Z$ with $n$ vertices in $\bR^d$ and $\eps>0$, one can build a data structure $\cF_\eps$ in $\cO\left(\chi^2n\log^2 n\right)$ time, that uses $\cO\left(n\chi^2\right)$ space such that given a query segment $\overline{p\,q}$ and any two points $u$ and $v$ on the curve, one can $(1+\eps)$-approximate $d_\mathcal{F}(\overline{p\,q},Z[u,v])$ in $\cO\left(\eps^{-2}\log n\log \log n\right)$ time, where $\chi=\eps^{-d}\log\left(\eps^{-1}\right)$.
\end{restatable}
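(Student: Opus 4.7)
The plan is to combine a hierarchical decomposition of $Z$ with an angular discretization of segment directions, so that any query reduces to table lookups in a precomputed structure.

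First, I would build a balanced binary tree over the edges of $Z$, where each internal node stores information about its canonical subcurve (the concatenation of edges in its subtree). A query subcurve $Z[u,v]$ decomposes into $\cO(\log n)$ canonical subcurves, and $d_\cF(\overline{pq}, Z[u,v])$ is the maximum of the Fréchet distances to these pieces. Hence it suffices to design, for each canonical subcurve, a gadget that $(1+\eps)$-approximates the Fréchet distance to an arbitrary query segment, and then to take the max during a root-to-leaf traversal.

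Second, I would argue that the Fréchet distance $d_\cF(\overline{pq}, Z')$ depends Lipschitz-continuously on the position and direction of the segment, with a modulus controlled by the length of $\overline{pq}$. This justifies an angular (and scale) discretization: take an $\eps$-net on $\bS^{d-1}$ and refine it over $\cO(\log \eps^{-1})$ scales, producing $\chi = \cO(\eps^{-d}\log(\eps^{-1}))$ representative classes. For each canonical subcurve, and for each of the $\chi^2$ classes corresponding to the two endpoints of the query segment (one $\chi$ for the direction of $\overline{pq}$, another for the residual position parameter), precompute a sorted array from which the answer can be read off. This gives $\cO(n\chi^2)$ total space; the preprocessing time $\cO(\chi^2 n \log^2 n)$ follows from sorting per class at each of $\cO(\log n)$ levels of the tree, each costing $\cO(n \log n)$.

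Third, a query proceeds by locating the direction/scale class of $\overline{pq}$ in $\cO(1)$ via a grid lookup, decomposing $Z[u,v]$ into $\cO(\log n)$ canonical subcurves, and binary-searching the precomputed array at each. Equipping the arrays with fractional cascading reduces the per-level search cost to $\cO(\log\log n)$ after one initial $\cO(\log n)$ search, yielding $\cO(\log n \log\log n)$ across the tree. The $\eps^{-2}$ factor arises from local interpolation between adjacent cells in the direction/scale grid to convert the class-level answer into a $(1+\eps)$-approximation for the actual query.

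The main obstacle will be the approximation analysis, in particular converting the additive Lipschitz bound on $d_\cF$ under angular perturbations into a multiplicative $(1+\eps)$ guarantee. The argument has to handle the regime where $d_\cF(\overline{pq}, Z')$ is much smaller than the length of $\overline{pq}$; here one must either stage the scale-net finely enough that short perturbations only shift the answer by $\eps\cdot d_\cF$, or use a special case analysis based on when the segment is nearly collinear with the local direction of $Z'$. Getting the two discretization parameters (direction and scale) to interact cleanly so that their product remains $\chi$, while preserving the $(1+\eps)$ guarantee uniformly over queries, is the delicate part of the construction.
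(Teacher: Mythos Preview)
The paper does not prove this lemma; it is quoted verbatim as a black box from Driemel and Har-Peled~\cite{Driemel2012JaywalkingYD}, so there is no in-paper proof to compare your proposal against.

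That said, your sketch contains a genuine error in the decomposition step. You write that after splitting $Z[u,v]$ into $\cO(\log n)$ canonical subcurves $Z_1,\ldots,Z_m$, the quantity $d_\cF(\overline{pq},Z[u,v])$ equals the maximum of $d_\cF(\overline{pq},Z_\ell)$ over the pieces. This is false: the Fr\'echet distance to a concatenation is
\[
d_\cF(\overline{pq},Z_1\circ\cdots\circ Z_m)\;=\;\min_{p=r_0,r_1,\ldots,r_m=q}\ \max_{\ell}\ d_\cF(\overline{r_{\ell-1}r_\ell},\,Z_\ell),
\]
where the minimum is over all monotone sequences of split points $r_\ell$ along $\overline{pq}$. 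Each piece sees only a sub-segment of $\overline{pq}$, not the whole segment, and the optimal split points are unknown at query time and depend on all pieces simultaneously. Consequently the per-node gadgets cannot be queried independently and their answers simply maxed; some mechanism is needed to thread the split points through the decomposition (this is exactly where the actual construction in~\cite{Driemel2012JaywalkingYD} does real work, via exponential grids and a careful parametrization of candidate split positions). Without that, the rest of your outline---the $\chi$-discretization, the fractional cascading, the space and time accounting---has nothing correct to sit on.
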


\begin{algorithm}
    \caption{Approximate Decider}
    \label{approx:algorydm}
    \begin{algorithmic}[1]
        \Procedure{ApproximateDecider'}{curve $T$, curve $B$, $\delta>0$, $0<\eps\leq 1$}
        \If{$||T(0) - B(0)||>\delta$ or $||T(1) - B(1)||>\delta$} 
		    Return '$d^k_\cS(T',B')>\delta$'\label{approx:assert}\EndIf
        \State Let $\cF_\eps$ be the data structure of Lemma~\ref{distance-oracle} built on $T$ with $\eps$.
        \State Let $g^s_r$, $g^s_l$, $\mathcal{A}^s$ and $\bar{\mathcal{A}}^s$ be arrays of size $n_1$ for each $0\leq s\leq k$.
		
		\For{$s=0,\ldots,k$}
		
		    \For{$j=1,\ldots,n_2$}
		        \State Copy array $\mathcal{A}^s$ into $\bar{\mathcal{A}}^s$
		        
		        \For{$i=1,\ldots,n_1$}
		        
		            \State Compute $\dfree{i}{j}$
		            
		            \If{$i=1,j=1$ and $s=0$}
		                \State $P^s_{i,j} = \dfree[\delta']{i}{j}$
		            \Else 
		                \Comment{Compute set of points directly reachable from neighboring cells}
		                \State Let $ I_v = \emptyset$ and $I_h = \emptyset$
                        \If{$j>1$}  Let $I_v$ be the incoming reachability interval from  $\bar{\cA}^s[i]$ \EndIf
                        \If{$i>1$}  Let $I_h$ be the incoming reachability interval from  $\cA^s[i-1]$  \EndIf
						\State Let $\nei{i}{j}=(Q(I_v)\cup  Q(I_h))\cap\dfree{i}{j}$
						\If{$s > 0$} 
						\Comment{Approximate set of points reachable by diagonal tunnel}
						\State Retrieve rightmost point $r$ in the lower left quadrant of $C_{i,j}$ from $g^{s-1}_r$.
						\State Let $\dia{i}{j}=\textsc{apxDiagonalTunnel}(r,(i,j),\varepsilon,3\delta)$
						\Comment{Compute set of points reachable by vertical tunnel}
						\State Retrieve leftmost point $l$ in the column below $C_{i,j}$ from $g^{s-1}_l$.
						\State Let $\ver{i}{j}=\textsc{verticalTunnel}(l,(i,j),\delta)$
						\Else
						\State Let $\dia{i}{j}=\emptyset$ and $\ver{i}{j}=\emptyset$
						\EndIf
						\Comment{Putting things together}  
						\State $P^s_{i,j}=Q(\nei{i}{j}\cup \dia{i}{j}\cup \ver{i}{j})\cap\dfree{i}{j}$
		            \EndIf
		            
		            \If{$P^s_{i,j}\neq \emptyset$}
					    \State Store the rightmost point of $P^s_{i,j}$ in $g^s_r$
						\State Store the leftmost point of $P^s_{i,j}$ in $g^s_l$
						\State Compute  outgoing reachability intervals and using $P^s_{i,j}$ and store them in ${\mathcal{A}}^s[i]$.
					\EndIf
		        \EndFor
		    
		    \EndFor
		\EndFor
		
		\If{$(1,1)\in \mathcal{A}^s[n_1]$}
			\State Return '$d^k_\cS(T',B')\leq3(1+\eps)^2\delta$' with $s\leq k$ shortcuts
		\Else
		    \State Return '$d^k_\cS(T',B')>\delta $' with at most $k$ shortcuts
		\EndIf
        \EndProcedure
        \Procedure{ApproximateDecider}{curve $T$, curve $B$, $\delta>0$, $0<\eps\leq 1$}
            \State Let $\eps'=\eps/9$
            \State Return ApproximateDecider'($T$,$B$,$\delta$,$\eps'$)
        \EndProcedure
    \end{algorithmic}
\end{algorithm}
\begin{algorithm}
\caption{Approximate Diagonal Tunnel}
\label{approx:algoapprox}
\begin{algorithmic}[1]
    \Procedure{apxDiagonalTunnel}{$(r_T,r_B)$, $(i,j)$, $\varepsilon$, $\delta$}
    \State Let $r=B(r_B)$ \Comment{$r$ is the starting point of the shortcut}
    \For{$t\in\left( \bG_{\frac{\delta\eps}{\sqrt{2}}} \cap \disk[3(1+\varepsilon)\delta]{v_{i}}\right)$}
        \State Query $\cF_\eps$ for the distance $d_{\cF_\eps}(\overline{r\,t},T[r_T,v_i])$ and store the answer in $\delta'$		        \If{$\delta'\leq(1+\varepsilon)^2\delta$}
		    \State Mark $t$ as eligible \Comment{$t$ is an eligible endpoint of a shortcut}
		\EndIf
    \EndFor
    
    \State Compute the convex hull $H$ of eligible points
    \If{$r\in H$}
        \State Return $C=\dfree[\delta]{i}{j}$
    \Else
        \State Let $U$ be the cone with apex $r$ formed by tangents $t_1$ and $t_2$ from $r$ to $H$
    	\State Let $p_i \in H$ be a supporting point of the tangent $t_i$ for $i\in\{1,2\}$
    	\State Let $L$ be the subchain of $\partial H$ with endpoints $p_1$ and $p_2$ which is facing $r$
    	\State Let $H' \subset U$ be the set bounded by $L$ and the rays supported by $t_1$ and $t_2$ facing away from $r$
    	\State Let $C'$ be the intersection of $H'$ with $e_j$
    	\State Return $C=(e_i \times C')\cap \dfree[\delta]{i}{j}$
    \EndIf
    \EndProcedure
\end{algorithmic}
\end{algorithm}

\begin{definition}[Grid]
    We define the scaled integer grid $\bG_\delta=\left\{(\delta x, \delta y) \mid (x,y)\in\bZ^2\right\}$.
\end{definition}
\subparagraph{Approximate diagonal tunnel procedure}
The procedure (see Algorithm \ref{approx:algoapprox}) is provided with parameters $\eps$, $\delta$, some $r'=(r_T,r_B)$ in cell $C_{a,b}$ and the edge $e_j$ that is associated with a cell $C_{i,j}$.
We want to compute a set of stabbers starting at $r=B(r_B)$ that contains every stabber through the disks $\disk[\delta]{v_{a+1}},\ldots,\disk[\delta]{v_i}$, and is contained in the set of all stabbers through disks of radius $(1+\varepsilon)^2\delta$ centered at the same vertices. 
We approximate this set of stabbers as follows.

We iterate over all grid points $t$ in the disk $\bG_{\frac{\delta\eps}{\sqrt{2}}}\cap\disk[(1+\eps)\delta]{v_i}$, and make queries to the precomputed distance oracle $\cF_\eps$ to determine if the Fr\'echet distance of the query segment $\overline{r\,t}$ to the subcurve of $T$ from $T(r_T)$ to $v_i$ is sufficiently small. We mark $t$ if the approximate distance returned by the data structure is at most $(1+\varepsilon)^2\delta$. We then compute the convex hull $H$ of all marked grid points, and the two tangents $t_1$ and $t_2$ of $H$ through $B(r_B)$.
The true set of endpoints of stabbers is approximated by the set $H'$ of points that lie inside and 'behind' the convex hull $H$, from the perspective of $r$.  Figure~\ref{fig-epsilon-stabber} illustrates this.
We then intersect $H'$ with the edge $e_j$ resulting in a single horizontal slab in $C_{i,j}$. This resulting set is then intersected with $\dfree{i}{j}$ and returned.

\begin{figure}
\begin{center}
	\includegraphics[width=0.7\textwidth]{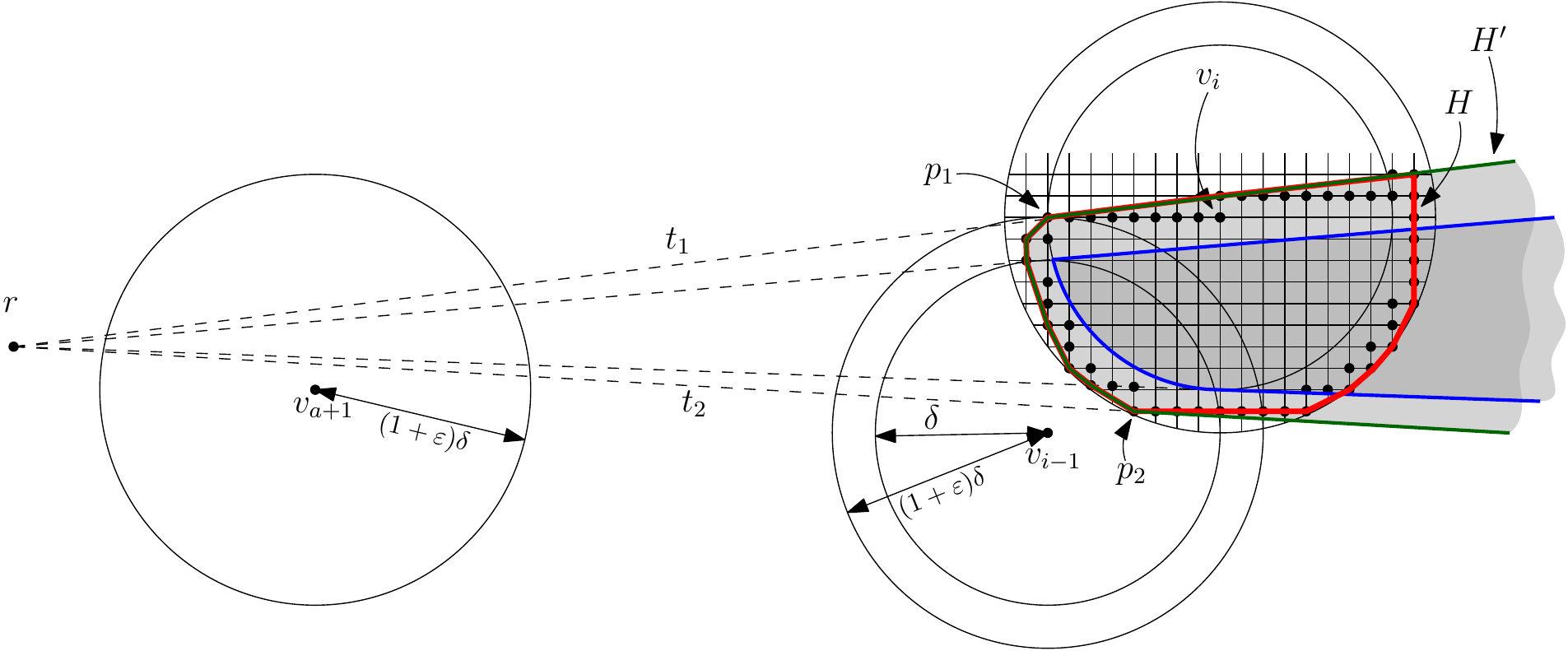}
	\end{center}
	\caption{Illustration to the approximate diagonal tunnel procedure. The true line-stabbing wedge for disks with radius $\delta$ is shown in blue. The convex hull of eligible grid points is shown in red. The approximate line stabbing wedge is shown in green.}
	\label{fig-epsilon-stabber}
\end{figure}

\subsection{Analysis of the approximation algorithm}\label{sec:apx-result}

We now analyse the described algorithm, namely the \textsc{ApproximateDecider} procedure.

\subsubsection{Correctness}

We argue that the structure of $P^s_{i,j}$ as approximated by the $\textsc{ApproximateDecider}$' procedure is indeed as claimed. Namely for all $i,j$ and $s$ it holds that $\dreach{s}{i}{j} \subset \bigcup_{0\leq s'\leq s}P^{s'}_{i,j}\subset \dreach[3(1+\eps)^2\delta]{s}{i}{j}$. We again consider any monotone path with $s$ proper tunnels ending in some cell and show the set inclusion by induction. 
Indeed, it suffices to consider the tunnel starting in the rightmost reachable point in the lower left quadrant of the cell, if we call the approximate diagonal tunnel procedure with a distance threshold $3\delta$. To prove correctness, we use  the following lemma  by Driemel and Har-Peled. The lemma states that if a feasible tunnel $\tau(r,q)$ costs more than $3\delta$ then any feasible tunnel $\tau(p,q)$ with $x_p\leq x_r$ costs more than $\delta$.

\begin{restatable}[monotonicity of tunnels \cite{Driemel2012JaywalkingYD}]{lemma}{centrallemma}\label{central}
	Given a value $\delta>0$ and two curves $T_1$ and $T_2$ such that $T_2$ is a subcurve of $T_1$, and given two line segments $\bar{B}_1$ and $\bar{B}_2$ such that $d_\cF(T_1,\bar{B}_1)\leq\delta$ and the start (resp. end) point of $T_2$ is within distance $\delta$ to the start (resp. end) point of $\bar{B}_2$, then $d_\cF(T_2,\bar{B}_2)\leq3\delta$.
\end{restatable}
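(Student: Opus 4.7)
The plan is to reduce the claim to the triangle inequality for the Fréchet distance, together with the closed form for the Fréchet distance between two line segments provided by Observation~\ref{obs-trivial}. Intuitively, the assumption $d_\cF(T_1,\bar{B}_1)\leq\delta$ lets us restrict an efficient traversal to $T_2$, thereby exposing a sub-segment of $\bar{B}_1$ that is close to $T_2$; we then compare that sub-segment to $\bar{B}_2$ using the two endpoint conditions.

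First I would fix, for any $\eta>0$, a traversal $(f,g)$ of $T_1$ and $\bar{B}_1$ with cost at most $\delta+\eta$. Writing $T_2 = T_1[a,b]$, I pick parameters $\alpha\leq\beta$ in $[0,1]$ with $f(\alpha)=a$ and $f(\beta)=b$ (this is possible by continuity and surjectivity of $f$). Set $p_a = \bar{B}_1(g(\alpha))$ and $p_b = \bar{B}_1(g(\beta))$; by weak monotonicity of $g$ the point $p_a$ precedes $p_b$ along the line segment $\bar{B}_1$, so the portion $\bar{B}_1' := \bar{B}_1[g(\alpha),g(\beta)]$ is itself a line segment. Restricting $(f,g)$ to $[\alpha,\beta]$ and reparameterizing to $[0,1]$ yields a valid monotone traversal of $T_2$ and $\bar{B}_1'$ of cost at most $\delta+\eta$, so $d_\cF(T_2,\bar{B}_1') \leq \delta+\eta$.

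Next I would invoke the triangle inequality for the Fréchet distance,
\[
d_\cF(T_2,\bar{B}_2) \;\leq\; d_\cF(T_2,\bar{B}_1') + d_\cF(\bar{B}_1',\bar{B}_2),
\]
and evaluate the second term via Observation~\ref{obs-trivial}, which gives
\[
d_\cF(\bar{B}_1',\bar{B}_2) \;=\; \max\bigl(\|p_a-\bar{B}_2(0)\|,\;\|p_b-\bar{B}_2(1)\|\bigr).
\]
Each of these Euclidean terms is bounded by $2\delta+\eta$ by the triangle inequality in $\bR^d$: since $(f,g)$ has cost at most $\delta+\eta$, we have $\|p_a - T_2(0)\|\leq\delta+\eta$, and by hypothesis $\|T_2(0)-\bar{B}_2(0)\|\leq\delta$; an analogous estimate applies at $b$. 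Combining the two bounds yields $d_\cF(T_2,\bar{B}_2)\leq 3\delta+2\eta$, and letting $\eta\to 0$ gives the claim.

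The only real bookkeeping obstacle is ensuring that the chosen parameters $\alpha,\beta$ are compatible with the weak monotonicity from Definition~\ref{def:frechet} so that $p_a$ and $p_b$ appear in the correct order along $\bar{B}_1$ and $\bar{B}_1'$ is a genuine (monotonely parameterized) sub-segment. This follows directly from weak monotonicity of $f$ and $g$, so no idea beyond the triangle inequality and Observation~\ref{obs-trivial} is needed.
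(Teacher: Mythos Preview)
Your proof is correct and is essentially the standard argument for this lemma: restrict an $\eta$-optimal traversal of $(T_1,\bar B_1)$ to the parameter interval corresponding to $T_2$, obtain a sub-segment $\bar B_1'$ with $d_\cF(T_2,\bar B_1')\le\delta+\eta$, and then bound $d_\cF(\bar B_1',\bar B_2)$ via Observation~\ref{obs-trivial} and the endpoint hypotheses. The paper does not give its own proof of this statement---it is quoted from~\cite{Driemel2012JaywalkingYD}---and the argument there proceeds along the same lines you outline.
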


In the following, we denote with $\{\disk[\delta]{v_i}\}_i$ a sequence of disks $\{\disk[\delta]{v_1},\ldots,\disk[\delta]{v_m}\}$ for some $m$.

\begin{lemma}\label{approx:convex}
Let $a,b_1,b_2\in\bR^d$ together with a sequence of vertices $v_1,\ldots,v_n$ be given. 
If $\overline{a\,b_1}$ stabs through disks $\{\disk[\delta]{v_i}\}_i$, 
 and $\overline{a\,b_2}$ stabs through $\{\disk[\delta]{v_i}\}_i$, then for any $t\in[0,1]$ the line segment $\overline{a\,b(t)}$ stabs through $\{\disk[\delta]{v_i}\}_i$, where $b(t)=(1-t)b_1 + tb_2$.
\end{lemma}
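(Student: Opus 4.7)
The plan is to construct realizing points of $\overline{a\,b(t)}$ explicitly from those of $\overline{a\,b_1}$ and $\overline{a\,b_2}$, and then verify they still appear in the required order. I would first fix realizing points $p_i^s = (1-x_i^s)a + x_i^s b_s \in \disk{v_i}$ with $0 \leq x_1^s \leq \cdots \leq x_n^s \leq 1$ for each $s \in \{1,2\}$. Since $\disk{v_i}$ is convex, the chord $\overline{p_i^1\,p_i^2}$ is contained in $\disk{v_i}$.

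The key observation, motivating the construction, is that in the non-degenerate case where $a, b_1, b_2$ span a triangle, the segment $\overline{a\,b(t)}$ must cross each such chord at an interior point. Writing $l_t(x) = (1-x)a + x b(t)$ and solving the linear system $l_t(y) = (1-\sigma)p_i^1 + \sigma p_i^2$ in $y$ and $\sigma$ would yield the explicit formulas
\[
y_i(t) \;=\; \frac{x_i^1\, x_i^2}{(1-t)\,x_i^2 + t\,x_i^1}, \qquad \sigma_i(t) \;=\; \frac{t\,x_i^1}{(1-t)\,x_i^2 + t\,x_i^1}.
\]
A direct algebraic substitution would then confirm that $l_t(y_i(t)) = (1-\sigma_i(t))\,p_i^1 + \sigma_i(t)\,p_i^2$ as vectors in $\bR^d$; this identity is purely algebraic and holds regardless of dimension or whether $a, b_1, b_2$ are affinely independent, so it also covers degenerate collinear configurations. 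By convexity of $\disk{v_i}$ this gives $l_t(y_i(t)) \in \disk{v_i}$. The bound $y_i(t) \in [0,1]$ follows from $x_i^1, x_i^2 \leq 1$, which yields $(1-t)x_i^2 + t x_i^1 \geq x_i^1 x_i^2$. The degenerate cases $x_i^1 = 0$ or $x_i^2 = 0$ force $a \in \disk{v_i}$, and can be handled by setting $y_i(t) = 0$.

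The main obstacle, which I would tackle last, is preserving the ordering $y_1(t) \leq \cdots \leq y_n(t)$ that is required of realizing points of an ordered stabber. For this I would rewrite the formula in reciprocal form,
\[
\frac{1}{y_i(t)} \;=\; \frac{1-t}{x_i^1} + \frac{t}{x_i^2},
\]
exhibiting $1/y_i(t)$ as a convex combination of $1/x_i^1$ and $1/x_i^2$ with weights $1-t$ and $t$ that do not depend on $i$. Since $x_i^s \leq x_{i+1}^s$ gives $1/x_i^s \geq 1/x_{i+1}^s$ for both $s \in \{1,2\}$, the convex combination inherits the same coordinatewise domination, so $1/y_i(t) \geq 1/y_{i+1}(t)$ and therefore $y_i(t) \leq y_{i+1}(t)$. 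This exhibits $l_t(y_1(t)), \ldots, l_t(y_n(t))$ as a valid monotone sequence of realizing points, showing that $\overline{a\,b(t)}$ is an ordered stabber of $\{\disk{v_i}\}_i$.
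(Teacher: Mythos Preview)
Your proof is correct and constructs the same realizing points as the paper: both take $r_i$ to be the intersection of $\overline{a\,b(t)}$ with the chord $\overline{p_i^1\,p_i^2}$. The difference lies in how the ordering $y_1(t)\le\cdots\le y_n(t)$ is verified. The paper argues geometrically that the chords $\overline{p_i^1\,p_i^2}$ and $\overline{p_j^1\,p_j^2}$ cannot cross inside the triangle (since the endpoints are ordered along each side), so their intersections with $\overline{a\,b(t)}$ inherit the order. You instead derive the closed form $1/y_i(t)=(1-t)/x_i^1+t/x_i^2$ and use coordinatewise monotonicity of the reciprocals. Your route is a bit more bookkeeping but has the advantage of being dimension-agnostic and of handling the collinear/degenerate configurations explicitly, whereas the paper's triangle picture tacitly works in the affine span of $a,b_1,b_2$. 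One small point worth stating in your write-up: the indices with $x_i^1=0$ or $x_i^2=0$ necessarily form an initial segment by the monotonicity of the $x_i^s$, so assigning $y_i(t)=0$ there cannot break the global ordering.
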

\begin{proof}
    Refer to Figure \ref{approx:convex-picture}.
    We can interpret the setting as a triangle with sides $(b_1-a)$, $(b_2-a)$, $(b_1-b_2)$, where the first two sides correspond to the original stabbers and the last side to $b(t)$. Note that any line segment $\overline{a\,b(t)}$ lies completely within this triangle with $(b_1-a)$ on the one and $(b_2-a)$ on the other side. Hence, for every $i$ and realising points $p_i$ of $\overline{a\,b_1}$ and $q_i$ of $\overline{a\,b_2}$, $p_i$ lies on the one and $q_i$ on the other side of $\overline{a\,b(t)}$. Since $\disk[\delta]{v_i}$ is convex and $p_i$ and $q_i$ are inside this disk, the intersection of $\overline{p_i\,q_i}$ and $\overline{a\,b(t)}$ is inside the disk as well. Call this intersection point $r_i$. The set $\{r_i\}_i$ are realising points for $\overline{a\,b(t)}$. This follows directly from the fact that $\{p_i\}_i$ and $\{q_i\}_i$ are ordered along their respective line segments, and thus $\overline{p_i\,q_i}$ never crosses another $\overline{p_j\,q_j}$. Thus for $i<j$, $r_i$ appears before $r_j$ along $\overline{a\,b(t)}$, implying the claim.
\end{proof}

\begin{lemma}\label{approx:triangle-stabber}
    Let $a_1,a_2,b_1,b_2\in\bR^2$ together with a sequence of vertices $v_1,\ldots,v_n$ be given. If $\overline{a_1\,b_1}$ stabs through $\{\disk[\delta]{v_i}\}_i$, and $\|a_1-b_1\|\leq\delta'$ and $\|a_2-b_2\|\leq\delta'$, then $\overline{a_1\,b_1}$ stabs through $\{\disk[\delta+\delta']{v_i}\}_i$.
\end{lemma}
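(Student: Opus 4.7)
The plan is straightforward: transport the stabbing witnesses of $\overline{a_1\,b_1}$ to the perturbed segment by reusing the same barycentric parameters along the segment. (Note that, as printed, the conclusion ``$\overline{a_1\,b_1}$ stabs through $\{\disk[\delta+\delta']{v_i}\}_i$'' is immediate from the hypothesis, since enlarging the radii only enlarges the disks; the statement is clearly meant to say that the perturbed segment $\overline{a_2\,b_2}$ stabs the inflated disks, under the natural reading $\|a_1-a_2\|\le\delta'$ and $\|b_1-b_2\|\le\delta'$. I prove that version, which is what Section~\ref{sec:apx-result} needs.)

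First I would invoke Definition~\ref{def:stabber} on the hypothesis: since $\overline{a_1\,b_1}$ stabs through $\{\disk[\delta]{v_i}\}_i$, there exist parameters $0 \le t_1 \le t_2 \le \cdots \le t_n \le 1$ such that the realising points $p_i = (1-t_i)\,a_1 + t_i\,b_1$ satisfy $\|p_i - v_i\| \le \delta$ for every $i$. Then I would define candidate realising points on the perturbed segment by reusing the same parameters: $q_i = (1-t_i)\,a_2 + t_i\,b_2$. Because the sequence $(t_i)_i$ is non-decreasing, the $q_i$ automatically appear in the correct order along $\overline{a_2\,b_2}$, so monotonicity of the stabbing is free. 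The only thing left is to check that $q_i \in \disk[\delta+\delta']{v_i}$.

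The key step is a one-line interpolation bound. Writing
\[
q_i - p_i \;=\; (1-t_i)(a_2 - a_1) + t_i\,(b_2 - b_1),
\]
the triangle inequality together with $t_i \in [0,1]$ gives
\[
\|q_i - p_i\| \;\le\; (1-t_i)\,\|a_1 - a_2\| + t_i\,\|b_1 - b_2\| \;\le\; \delta'.
\]
A second application of the triangle inequality then yields
\[
\|q_i - v_i\| \;\le\; \|q_i - p_i\| + \|p_i - v_i\| \;\le\; \delta' + \delta,
\]
so $q_i \in \disk[\delta+\delta']{v_i}$ as required.

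There is no real obstacle here; the argument is a convex-combination perturbation bound. The only subtle point, already handled above, is bookkeeping: by reusing the parameters $t_i$ from the original stabbing, both the disk-membership and the monotonicity along the perturbed segment drop out simultaneously, without needing any separate order-preservation argument like the triangle-convexity trick used in Lemma~\ref{approx:convex}.
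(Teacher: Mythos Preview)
Your proof is correct and is essentially the same argument as the paper's: the paper invokes Observation~\ref{obs-trivial} to get $d_\cF(\overline{a_1b_1},\overline{a_2b_2})\le\delta'$ via the linear traversal and then pushes the realising points $p_i$ forward along that traversal to $q_i=g(f^{-1}(p_i))$, which is exactly your ``reuse the same barycentric parameters'' step followed by the same two applications of the triangle inequality. Your explicit convex-combination bound $\|q_i-p_i\|\le (1-t_i)\|a_1-a_2\|+t_i\|b_1-b_2\|$ just unpacks what the Fr\'echet-distance phrasing encapsulates; you also correctly flag the evident typos in the statement.
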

\begin{proof}
    By Observation \ref{obs-trivial}, $d_\cF(\overline{a_1\,b_1},\overline{a_2\,b_2})\leq\delta'$, via the reparametrizaion $(f,g)$ with $f(t)=(1-t)a_1+ta_2$ and similarly $g(t)=(1-t)b_1+tb_2$.
    As $p=\overline{a_1\,b_1}$ stabs through $\{\disk[\delta]{v_i}\}_i$, there exist realising points $p_i$ along $p$, with $p_i$ lying in the $\delta$-disk centered at $v_i$.
    Then \[\|g(f^{-1}(v_i))-v_i\|\leq\|g(f^{-1}(v_i))-p_i\| + \|p_i-v_i\|\leq \delta' + \delta.\]
    Additionally, $q_i=g(f^{-1}(v_i))$ are ordered along $q=\overline{a_2\,b_2}$, proving the claim.
\end{proof}

\begin{lemma}\label{approx:eps-lemma}
    Given $r\in\bR^2$, $C_{i,j}$,$\eps$ and $\delta$ like in the \textsc{apxDiagonalTunnel} procedure.
	Denote by $S_{\delta}$ the set of endpoints of all $\delta$-stabbers (that is, stabbers through $\disk[\delta]{v_m}$ for $a+1\leq m\leq i$) on the edge $e$ starting at $r$ and let $C'$ be the point set computed by the algorithm. Then
	\[S_{\delta}\subseteq C' \subseteq S_{(1+\varepsilon)^2\delta}.\]
\end{lemma}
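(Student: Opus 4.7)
I would prove the two inclusions separately, using three ingredients: (a) any point in the plane lies in a cell of $\bG_{\delta\eps/\sqrt{2}}$ whose four corners are grid points at distance at most $\delta\eps$ (the diameter of such a cell); (b) the robustness of stabbers under small endpoint perturbations (Lemma \ref{approx:triangle-stabber}); and (c) the convexity of stabber endpoints under a fixed starting point (Lemma \ref{approx:convex}). Throughout I also use that, since $(r_T,r_B)$ is a reachable point, $\|r - T(r_T)\| \leq \delta$ for $r = B(r_B)$, which handles the starting-endpoint condition in Observation \ref{obs:stabber} at every disk radius considered below.

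\textbf{Inclusion $S_\delta \subseteq C'$.} Let $t^* \in S_\delta$ and let $p_n \in \overline{r\,t^*}\cap \disk[\delta]{v_i}$ be the last realizing point of the $\delta$-stabber, so the subsegment $\overline{r\,p_n}$ already stabs through $\disk[\delta]{v_{a+1}},\dots,\disk[\delta]{v_i}$. I would take the four corners $t'_1,\dots,t'_4$ of the grid cell of $\bG_{\delta\eps/\sqrt{2}}$ containing $p_n$. Each satisfies $\|t'_l - p_n\| \leq \delta\eps$ and hence $\|t'_l - v_i\| \leq (1+\eps)\delta \leq 3(1+\eps)\delta$, so every $t'_l$ lies in the search range of the procedure. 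Applying Lemma \ref{approx:triangle-stabber} to $\overline{r\,p_n}$ with the endpoint shifted from $p_n$ to $t'_l$ shows that $\overline{r\,t'_l}$ stabs through $\disk[(1+\eps)\delta]{v_m}$ for every $m=a+1,\dots,i$. Combined with $\|r - T(r_T)\| \leq \delta$, Observation \ref{obs:stabber} gives $d_\cF(\overline{r\,t'_l}, T[r_T, v_i]) \leq (1+\eps)\delta$; the oracle from Lemma \ref{distance-oracle} then returns a value at most $(1+\eps)\cdot(1+\eps)\delta = (1+\eps)^2\delta$, so every $t'_l$ is marked. Hence $H$ contains the rectangle spanned by $t'_1,\dots,t'_4$ and, in particular, $p_n \in H$. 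Because $p_n$ lies on $\overline{r\,t^*}$ strictly between $r$ and $t^*$, the segment enters $H$ (and the cone $U$) through the chain $L$ at or before $p_n$, which places $t^* \in H'$; together with $t^* \in e_j$ this gives $t^* \in C'$.

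\textbf{Inclusion $C' \subseteq S_{(1+\eps)^2\delta}$.} Let $t^* \in C' \subseteq H' \cap e_j$. By the construction of $H'$, the segment $\overline{r\,t^*}$ crosses $L$ at some point $h \in H$ lying between $r$ and $t^*$, and I write $h$ as a convex combination of marked grid points $t'_1,\dots,t'_K$. Each oracle query certifies $d_\cF(\overline{r\,t'_l}, T[r_T, v_i]) \leq (1+\eps)^2\delta$, so Observation \ref{obs:stabber} yields that every $\overline{r\,t'_l}$ stabs through $\disk[(1+\eps)^2\delta]{v_m}$ for all $m$. Iterating Lemma \ref{approx:convex} along the convex decomposition of $h$ transfers the stabbing property to $\overline{r\,h}$, and since $\overline{r\,h}$ is a subsegment of $\overline{r\,t^*}$ the same ordered realizing points witness that $\overline{r\,t^*}$ is itself a $(1+\eps)^2\delta$-stabber, so $t^* \in S_{(1+\eps)^2\delta}$.

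\textbf{Main obstacle.} The most delicate step is direction 1: ensuring that $\overline{r\,t^*}$ truly meets the convex hull $H$ rather than merely passing near marked grid points. The resolution is to mark all four corners of the grid cell containing $p_n$, so that $H$ contains a small rectangle around $p_n$ and therefore contains $p_n$ itself, which forces $\overline{r\,t^*}$ into $H'$. A secondary subtlety is the calibration of the marking threshold $(1+\eps)^2\delta$: the inner factor of $(1+\eps)$ absorbs the endpoint shift from $p_n$ to $t'_l$ in Lemma \ref{approx:triangle-stabber}, while the outer factor absorbs the $(1+\eps)$-slack of the distance oracle; the degenerate branch $r\in H$ is handled separately by the procedure returning the entire free-space cell, for which both inclusions hold trivially.
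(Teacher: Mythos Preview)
Your proof is correct and follows essentially the same approach as the paper's: for $S_\delta \subseteq C'$ you take the last realizing point of the stabber, surround it by nearby grid points (you use the four cell corners, the paper uses $\bG_{\delta\eps/\sqrt{2}} \cap \disk[\eps\delta]{t}$), apply Lemma~\ref{approx:triangle-stabber} to show these grid points get marked, and conclude the realizing point lies in $H$; for $C' \subseteq S_{(1+\eps)^2\delta}$ you use that the entry point $h$ into $H'$ lies on $\partial H$ and is a convex combination of marked grid points, then apply Lemma~\ref{approx:convex}. The only cosmetic differences are that the paper notes $h$ is a combination of at most two vertices (since it lies on an edge of $H$), and that you are more explicit about how the oracle's $(1+\eps)$-factor turns a true $(1+\eps)\delta$ bound into the $(1+\eps)^2\delta$ marking threshold.
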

\begin{proof}
    Let $y\in C'$.
    Then $q=B(y)\in H'$ where $H'$ is set of points computed by the algorithm.
    Denote the intersection of $\overline{r\,q}$ and the boundary of $H'$ by $h$.
    $h$ is then a linear combination of at most two grid points whose stabbers from $r$ have been marked as eligible i.e. who are $(1+\varepsilon)^2\delta$-stabber.
    Hence, Lemma \ref{approx:convex} implies that $\overline{r\,q}$ is also a $(1+\varepsilon)^2\delta$-stabber, implying $C' \subseteq S_{(1+\varepsilon)^2\delta}$.
    
    Now let $q\in e$ be an arbitrary point such that $\overline{r\,q}$ is a $\delta$-stabber.
    Let $t$ be the last realising point of $\overline{r\,q}$. The line segment $\overline{r\,t}$ is a $\delta$-stabber and $t$ lies in $\disk[\delta]{v_i}$. We claim that $t$ lies in $H$.
    Consider the set $G= \bG_{\frac{\delta\eps}{\sqrt{2}}} \cap \disk[\varepsilon\delta]{t}$.
    By the properties of the grid, $t$ lies within the convex hull of $G$. Moreover $G \subset \disk[(1+\varepsilon)\delta]{v_i}$.
    Lemma \ref{approx:triangle-stabber} implies that $\overline{r\,t'}$ is a $((1+\varepsilon)\delta)$-stabber for any $t' \in G$.
    This in turn implies that for the first point $s'$ of $\overline{r\,t'}$ inside $\disk[\delta(1+\eps)]{v_a}$, $\overline{s'\,t'}$ is a $((1+\varepsilon)\delta)$-stabber, hence, $t'$ would have been marked as an eligible endpoint of a $((1+\varepsilon)^2\delta)$-stabber. Since $H$ is the convex hull of eligible points, it follows that $t\in\mathrm{conv}(G)\subset H$.
    Therefore $q\in H'$ and thus $q \in C'$.
\end{proof}

\begin{figure}[t]
    \centering
    \includegraphics[width=0.7\textwidth]{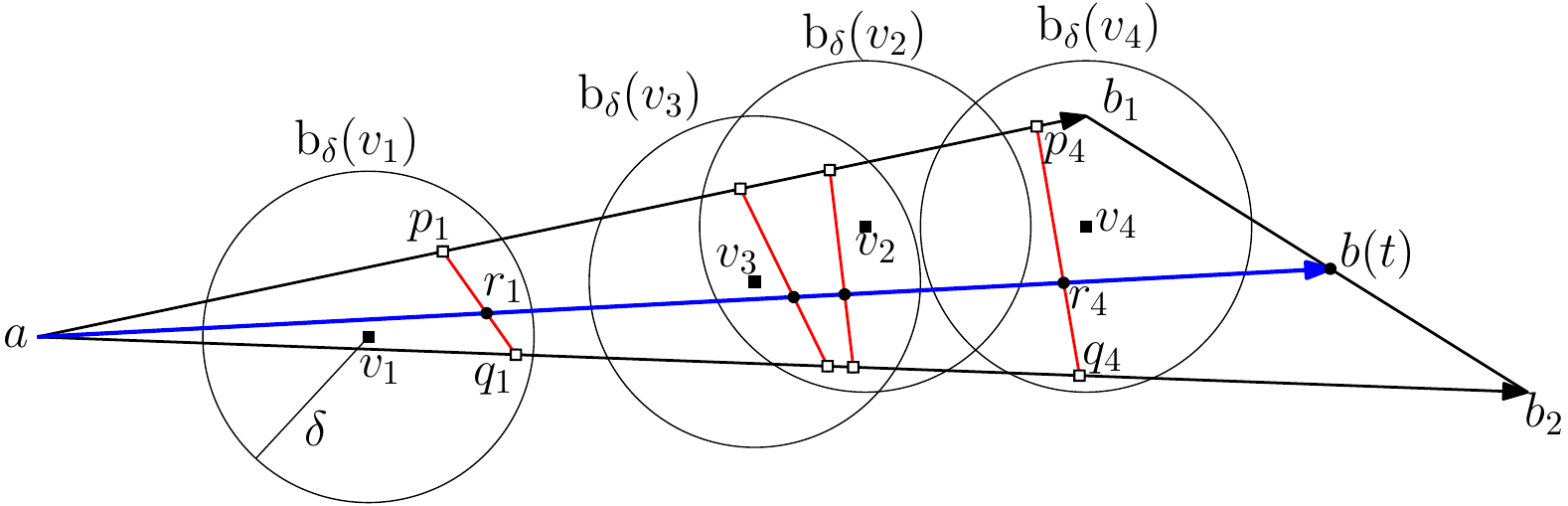}
    \caption{Linear interpolation between two $\delta$-stabbers starting in $a$. Illustrations to the proof of Lemma~\ref{approx:convex}. In blue $\overline{a\,b(t)}$, and in red $\overline{p_i\,q_i}$ is illustrated. Their intersections form the realising points $r_i$ of $\overline{a\,b(t)}$.}
    \label{approx:convex-picture}
\end{figure}

\begin{lemma}\label{approx:diagonal}
	For any $1\leq i\leq n_1$, $1\leq j\leq n_2$, $1\leq s\leq k$ let $\dia{i}{j}$ be the endpoints of diagonal tunnels as computed in the \textsc{ApproximateDecider}' procedure, and let $R=\bigcup_{a=1}^{i-1}\bigcup_{b=1}^{j-1}P_{a,b}^{s-1}$ be the set of reachable points by exactly $s-1$ proper tunnels in the lower left quadrant of the cell $C_{i,j}$. It holds that
	\begin{compactenum}[(i)]
		\item there exists a point $p\in R$ such that for any $q\in C_{i,j}$ the diagonal tunnel $\tau(p,q)$ has price $\prc(\tau(p,q)) \leq 3\delta$ then $q\in \dia{i}{j}$. If $q\in \dia{i}{j}$ then $\prc(\tau(p,q)) \leq 3(1+\eps)^2\delta$, and
		\item there exists no other $b\in C_{i,j}\setminus \dia{i}{j}$ that is the endpoint of a diagonal tunnel from $R$ with price at most $\delta$.
	\end{compactenum}
\end{lemma}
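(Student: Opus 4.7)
The plan is to take $p$ to be the rightmost point $r$ of $R$, which is precisely the point that the algorithm feeds into \textsc{apxDiagonalTunnel} together with the inflated distance parameter $3\delta$. Applying Lemma~\ref{approx:eps-lemma} with $3\delta$ in place of $\delta$ and with $r$ as the shortcut's starting point, the set $\dia{i}{j}$ returned by the procedure is sandwiched between the set of endpoints on $e_j$ of $3\delta$-stabbers from $B(r_y)$ through $\disk[3\delta]{v_{a+1}},\ldots,\disk[3\delta]{v_i}$ and the analogous set of endpoints of $3(1+\eps)^2\delta$-stabbers, where $a$ denotes the column index of the cell containing $r$.

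Part (i) then follows immediately by setting $p=r$ and invoking Observation~\ref{obs:stabber} on each of the two inclusions: any $q\in C_{i,j}$ with $\prc(\tau(r,q))\leq 3\delta$ corresponds to a $3\delta$-stabber $\overline{B}[r_y,y_q]$, so $q\in \dia{i}{j}$; conversely, any $q\in \dia{i}{j}$ corresponds to a $3(1+\eps)^2\delta$-stabber, so $\prc(\tau(r,q))\leq 3(1+\eps)^2\delta$.

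For part (ii) I reduce to part (i). Suppose some $b\in C_{i,j}$ is the endpoint of a diagonal tunnel $\tau(p',b)$ with $p'\in R$ and $\prc(\tau(p',b))\leq\delta$. It suffices to show $\prc(\tau(r,b))\leq 3\delta$, since then part (i) delivers $b\in \dia{i}{j}$. Because $r$ is the rightmost point of $R$, $p'_x\leq r_x$; and because $r$ lies in a cell strictly to the lower-left of $C_{i,j}$, we have $r_x\leq x_b$ and $r_y\leq y_b$, so $\tau(r,b)$ is a valid tunnel. I then apply Lemma~\ref{central} with $T_1=T[p'_x,x_b]$, $T_2=T[r_x,x_b]$ (a subcurve of $T_1$), $\bar{B}_1=\overline{B}[p'_y,y_b]$, $\bar{B}_2=\overline{B}[r_y,y_b]$: the hypothesis $d_\cF(T_1,\bar{B}_1)\leq\delta$ is exactly $\prc(\tau(p',b))\leq\delta$, while the $\delta$-proximity of corresponding endpoints of $T_2$ and $\bar{B}_2$ follows from $r\in\dfree{}{}$ and $b\in\dfree{}{}$ (the latter implied by $\prc(\tau(p',b))\leq\delta$ together with $p'\in\dfree{}{}$). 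The lemma yields $\prc(\tau(r,b))=d_\cF(T_2,\bar{B}_2)\leq 3\delta$, closing the argument.

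The main obstacle is part (ii): one must recognize Lemma~\ref{central} as the tool that transports an arbitrary tunnel $\tau(p',b)$ with $p'\in R$ to the canonical tunnel $\tau(r,b)$, and correctly line up the four subcurves and line segments it requires. Once that transport step is established, the argument collapses cleanly onto the two-sided sandwich from Lemma~\ref{approx:eps-lemma} set up for part (i); in particular, no $y$-monotonicity relation between $p'$ and $r$ is needed, only that both $r$ and $b$ lie in the $\delta$-free-space.
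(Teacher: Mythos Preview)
Your proposal is correct and follows essentially the same approach as the paper: take $p$ to be the rightmost point of $R$, invoke Lemma~\ref{approx:eps-lemma} with parameter $3\delta$ for the two-sided containment in part~(i), and use Lemma~\ref{central} to transport an arbitrary tunnel $\tau(p',b)$ to $\tau(r,b)$ for part~(ii). The paper frames part~(ii) as a contradiction rather than a direct reduction to part~(i), but the substance is identical.
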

\begin{proof}
	The first part follows immediately from Lemma~\ref{approx:eps-lemma} together with the process described by the algorithm:
	The point $p$ is simply the rightmost point in $R$, which is maintained in $g'_r$ (by a trivial induction argument) at the time, where $C_{i,j}$ is processed.
	Assume $p=(x_p,y_p)$ lies in cell $C_{a,b}$.
    We call the \textsc{apxDiagonalTunnel} procedure with $p$ and the vertices $v_{a+1},\ldots,v_i$ between the $a$th and $i$th edge of the target curve.
	It returns points $q=(x_q,y_q)$ inside the $\delta$-free-space such that $\overline{B}[y_p,y_q]$ stabs through the sequence $\disk[3(1+\eps)^2\delta]{v_{a+1}},\ldots,\disk[3(1+\eps)^2\delta]{v_{i}}$. Since $p$ and $q$ are in the $\delta$-free-space of $B$ and $T$, $\|T(x_p)-B(y_p)\|\leq\delta$ and $\|T(x_q)-B(y_q)\|\leq\delta$ which together imply $d_{\cF}(\overline{B}[y_p,y_q],T[x_p,x_q])\leq3(1+\eps)^2\delta$.
	
	Assume for the sake of contradiction of the second part that such a point $b$ does exist and the start point of the shortcut is $s\in R$.
	Then by Lemma \ref{central} all tunnels $\tau(r,b)$ with $x_s<x_r$ have price at most $3\delta$.
	In particular $\prc(\tau(p,b))\leq3\delta$, but then $b$ would have been in $\dia{i}{j}$ already. 
\end{proof}

\begin{lemma}\label{approx:apxratiohelper}
	Given two polygonal curves $T$ and $B$ in the plane as well as parameters $\eps>0$ and $\delta>0$, the \textsc{ApproximateDecider}' computes a decision of either $d_\mathcal{S}^k(T,B)>\delta$ or $d_\mathcal{S}^k(T,B)\leq 3(1+\eps)^2\delta$.
\end{lemma}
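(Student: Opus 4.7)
The plan is to establish the two-sided sandwich
\[ \dreach{s}{i}{j}(T,B) \;\subseteq\; \bigcup_{s'\leq s} P^{s'}_{i,j} \;\subseteq\; \dreach[3(1+\eps)^2\delta]{s}{i}{j}(T,B) \]
for every cell $C_{i,j}$ and every round $0\leq s\leq k$, by induction on $(s,j,i)$ in the lexicographic order dictated by the algorithm. Once this sandwich is in place, applying it to the cell $C_{n_1,n_2}$ and to the corner point $(1,1)$ yields the two-sided guarantee claimed in the lemma: if $(1,1)\in\cA^k[n_1]$ then $(1,1)$ lies in the right-hand set and hence $d_\cS^k(T,B)\leq 3(1+\eps)^2\delta$; if $(1,1)\notin\cA^k[n_1]$ then it cannot lie in the left-hand set and hence $d_\cS^k(T,B)>\delta$.

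For the base case I would handle $s=0$ exactly as in Theorem~\ref{exact:correctness}: the algorithm initializes $P^0_{1,1}$ to the convex free-space piece $\dfree{1}{1}$ whenever $(0,0)\in\dfree{}{}$ and propagates via $Q$ and the neighbouring-cell rule, so the analysis is identical to the exact algorithm (no approximation enters). For the inductive step, I would mirror the case distinction from Theorem~\ref{exact:correctness}: a point $q\in\dreach{s}{i}{j}$ corresponds to a monotone path with $s'\leq s$ proper tunnels that reaches $C_{i,j}$ either (i) directly from $C_{i-1,j}$ or $C_{i,j-1}$, (ii) via a vertical tunnel from some $C_{i,b}$ with $b<j$, or (iii) via a diagonal tunnel from some $C_{a,b}$ with $a<i$, $b<j$. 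Cases (i) and (ii) go through verbatim, using the induction hypothesis on $\nei[s']{i}{j}$ and invoking Lemma~\ref{exact:exact-vertical} for vertical tunnels (whose price is always at most $\delta$ when feasible, so no blow-up).

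The main obstacle is the diagonal-tunnel case, where the algorithm has replaced the \emph{exact} enumeration over all lower-left cells by a \emph{single} call to \textsc{apxDiagonalTunnel} started at the rightmost reachable point $r$ in the lower-left quadrant and run with threshold $3\delta$. The lower inclusion $\dreach{s}{i}{j}\subseteq\bigcup P^{s'}_{i,j}$ in this case is exactly what Lemma~\ref{approx:diagonal}(i) provides: if the true optimal path uses a feasible diagonal tunnel of price $\leq\delta$ ending at $q$ starting from some $p$ with $x_p\leq x_r$, then by the monotonicity-of-tunnels Lemma~\ref{central} the tunnel from $r$ to $q$ still has price at most $3\delta$, so the grid/convex-hull construction of \textsc{apxDiagonalTunnel} (together with the $(1+\eps)^2$-blow-up from Lemma~\ref{approx:eps-lemma}) captures $q$ in $\dia[s']{i}{j}$. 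The upper inclusion in this case is Lemma~\ref{approx:diagonal}'s side statement: every point the algorithm puts into $\dia[s']{i}{j}$ is the endpoint of a feasible tunnel from $r$ with price at most $3(1+\eps)^2\delta$, and $r$ is itself $(\delta,s'-1)$-reachable by the induction hypothesis, so the concatenation witnesses reachability in $\dreach[3(1+\eps)^2\delta]{s'}{i}{j}$.

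Finally, I would close the argument by applying $Q(\cdot)\cap\dfree{i}{j}$ to the union of the three pieces and observing that $Q$-propagation inside a single cell is free (it only adds monotone-path reachability within the $\delta$-free-space, which is contained in $\dfree[3(1+\eps)^2\delta]{}{}$ as well), so both set-inclusions are preserved. This completes the induction, and the lemma follows.
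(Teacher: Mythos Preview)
Your proposal is correct and follows essentially the same route as the paper: both establish the two-sided sandwich $\dreach{s}{i}{j}\subseteq\bigcup_{s'\leq s}P^{s'}_{i,j}\subseteq\dreach[3(1+\eps)^2\delta]{s}{i}{j}$ by induction over the cell-processing order, split the inductive step into the neighbouring/vertical/diagonal cases, and discharge the diagonal case via Lemma~\ref{approx:diagonal} (which in turn encapsulates the monotonicity-of-tunnels argument from Lemma~\ref{central} and the grid approximation from Lemma~\ref{approx:eps-lemma}). Your write-up is slightly more explicit about how Lemma~\ref{central} enters, but the structure and the cited lemmas match the paper's proof exactly.
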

\begin{proof}
    We show that \[\dreach[\delta]{s}{}{}(T,B)\subset \bigcup_{s'\leq s}P^{s'} \subset \dreach[3(1+\varepsilon)^2\delta]{s}{}{}(T,B),\]
	for $s\leq k$.
	
	This proof is by induction over the order of handled cells.
	We show the inclusions from the theorem for each cell, i.e. \[\dreach[\delta]{s}{i}{j}(T,B)\subset \bigcup_{s'\leq s}P^{s'}_{i,j} \subset \dreach[3(1+\varepsilon)^2\delta]{s}{i}{j}(T,B).\]
	Assume that $(0,0)\in\dfree[\delta]{1}{1}$, else the algorithm would have returned a correct decision in line \ref{approx:assert}.
	For $i=j=1$ we have that $P^0_{i,j}=\dfree[\delta]{1}{1}$ which is correct by convexity of $\dfree[\delta]{1}{1}$.
	For all other $s$ we have that $P^s_{1,1}=\emptyset$. 
	This follows from the fact that there is no points in the column below or in the lower left quadrant of $C_{1,1}$.
	Thus for $i=j=1$ we have $\bigcup_{s'\leq s}P^{s'}_{i,j} =  \dreach[\delta]{s}{i}{j}\subset \dreach[3(1+\varepsilon)^2\delta]{s}{i}{j}$.
	
	Consider the algorithm handling some cell $C_{i,j}$.
	By induction all cells $C_{\leq n_1,<j}$ and $C_{<i,j}$ and in particular $C_{i-1,j}$ and $C_{i,j-1}$ have been handled correctly up to $s$.
	Hence, their reachability intervals and left- and right most points can be/have been computed correctly and are stored in their respective arrays.
	We need to show that $\dreach[\delta]{s}{i}{j}\subset \bigcup_{s'\leq s}P^{s'}_{i,j}$.
	Thus let $q\in\dreach[\delta]{s}{i}{j}$ be the endpoint of a monotone path from $(0,0)$ walking monotonously through $\dfree[\delta]{i}{j}$ using $s'\leq s$ proper tunnels of cost $\delta$.
	There are three possibilities of how the path could have entered $C_{i,j}$.
	
	The path could have taken $s'$ shortcuts to enter a neighbouring cell and then walked into $C_{i,j}$ through its boundary at some point $a$.
	Since $C_{i-1,j}$ and $C_{i,j-1}$ have been handled correctly, $a$ is in the computed reachability interval of the neighbouring cell. Since the path must be monotone $q$ lies in the closed halfplane fixed at the lower left end of the reachability interval in the respective directions, thus $q$ is also in $P^{s'}_{i,j}$.
	
	Alternatively the path could have entered some cell $C_{i,l}$ with $s'-1$ shortcuts and then took a horizontal shortcut into $C_{i,j}$ for some $j<l$. By Lemma \ref{exact:exact-vertical} together with the induction hypothesis for $P_{i,<j}$ we have that $q$ is in $P^{s'}_{i,j}$.
	
	Similarly, if the path took a diagonal shortcut, we can apply Lemma \ref{approx:diagonal} together with the induction hypothesis for $P_{<i,<j}$, showing that $q$ is in $P^{s'}_{i,j}$ implying the left inclusion $\dreach[\delta]{s}{i}{j}\subset \bigcup_{s'\leq s}P^{s'}_{i,j}$.
	
	Now let us assume that $q\in P^{s'}_{i,j}$ for some $s'\leq s$. We want to show that $q\in\dreach[3(1+\varepsilon)^2\delta]{s}{i}{j}$.
	It must be that either (i) $q$ is in $\nei{i}{j}$, (ii) $q$ is in $\ver{i}{j}$, (iii) $q$ is in $\dia{i}{j}$, or (iv) $q$ is in the upper right quadrant of some point $p$, where $p$ satisfies (i), (ii) or (iii). Thus we can reduce this to the first $3$ cases.
	
	In Case (i) the claim follows immediately, because $P^{s'}_{i-1,j}$ and $P^{s'}_{i,j-1}$ have been computed correctly.
	
	In Case (ii) the claim follows immediately as well, due to the fact that $P^{s'}_{i,<j}$ have been computed correctly and the leftmost point is stored correctly in $\bar{g}_l^{s'-1}$, together with Lemma \ref{exact:exact-vertical}.
	
	Case (iii) follows rather straight forward as well, since $P^{s'}_{<i,<j}$ have been computed correctly and thus the rightmost point in the lower left quadrant of $C_{i,j}$ that was reachable by $s'-1$ shortcuts is correctly stored in $\bar{g}_r^{s'-1}$.
	By Lemma \ref{approx:diagonal} the \textsc{apxDiagonalTunnel} has precisely the guarantee that the endpoints of shortcuts are contained within the set of shortcuts with price at most $3(1+\varepsilon)^2\delta$, the claim follows as well.	
	Since $P^{s'}_{i,j}$ fulfils all these requirements and thus computes all its left- and right-most points and reachability intervals correctly, the induction follows.
	Hence, $\dreach[\delta]{s}{}{}(T',B')\subset \bigcup_{s'\leq s}P^{s'} \subset \dreach[3(1+\varepsilon)^2\delta]{s}{}{}(T',B')$.
	The algorithm output reflects the fact whether or not $(1,1)$ is in $P^{\leq k}$ proving the claim.
\end{proof}

\begin{theorem}\label{approx:apxratio}
    Let $T$ and $B$ be two polygonal curves in the plane with overall complexity $n$, together with values $0<\eps\leq1$ and $\delta>0$. The \textsc{ApproximateDecider} procedure correctly computes a decision of either $d_\mathcal{S}^k(T,B)>\delta$ or $d_\mathcal{S}^k(T,B)\leq (3+\eps)\delta$.
\end{theorem}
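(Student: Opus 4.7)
The plan is to derive this statement directly from Lemma~\ref{approx:apxratiohelper}, which already establishes the main algorithmic content for the \textsc{ApproximateDecider}' procedure. That lemma guarantees, for any input parameter $\eps'$, that the internal procedure returns a correct decision between the two cases $d_\mathcal{S}^k(T,B)>\delta$ and $d_\mathcal{S}^k(T,B)\leq 3(1+\eps')^2\delta$. Since \textsc{ApproximateDecider} simply invokes \textsc{ApproximateDecider}' with $\eps'=\eps/9$, the only thing left is to verify that this rescaling is sufficient to convert the multiplicative approximation factor $3(1+\eps')^2$ into the claimed additive-style factor $(3+\eps)$.

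The verification reduces to an elementary algebraic inequality. Expanding the square gives
\[ 3(1+\eps/9)^2 = 3 + \tfrac{2\eps}{3} + \tfrac{\eps^2}{27}. \]
Since we assume $0<\eps\leq 1$, we have $\eps^2 \leq \eps$, and therefore
\[ \tfrac{2\eps}{3} + \tfrac{\eps^2}{27} \leq \tfrac{18\eps}{27} + \tfrac{\eps}{27} = \tfrac{19\eps}{27} < \eps. \]
Consequently $3(1+\eps/9)^2 < 3+\eps$ for all $\eps$ in the admissible range.

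Putting the two steps together, whenever \textsc{ApproximateDecider}' reports $d_\mathcal{S}^k(T,B)\leq 3(1+\eps')^2\delta$, this in particular implies $d_\mathcal{S}^k(T,B)\leq (3+\eps)\delta$; and whenever it reports $d_\mathcal{S}^k(T,B)>\delta$, that statement is correct by Lemma~\ref{approx:apxratiohelper}. Thus \textsc{ApproximateDecider} outputs a correct dichotomy of the required form, which is exactly the content of the theorem.

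I do not anticipate any real obstacle in this argument, as the only non-trivial work has already been carried out in the preceding lemmas; the theorem is essentially a cosmetic rewriting of Lemma~\ref{approx:apxratiohelper} with a rescaled parameter. The only care needed is to confirm that the algebraic inequality holds throughout the entire range $0<\eps\leq 1$, which the computation above makes explicit.
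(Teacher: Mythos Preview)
Your proposal is correct and follows essentially the same approach as the paper: invoke Lemma~\ref{approx:apxratiohelper} with $\eps'=\eps/9$ and verify the elementary inequality $3(1+\eps/9)^2<3+\eps$ for $0<\eps\leq 1$. Your explicit expansion of the inequality is in fact more detailed than what the paper provides.
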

\begin{proof}
    This follows directly from the choice of $\eps'$ and Lemma \ref{approx:apxratiohelper}. This follows from the fact that $\eps \leq 1$ and $\eps'=\eps/9$ hold, which implies $d_\mathcal{S}^k(T,B)\leq3(1+\eps')^2\delta'<(3+\eps)\delta$.
\end{proof}

\subsubsection{Running time}

\begin{theorem}
Let $T$ and $B$ be two polygonal curves in the plane with overall complexity $n$, together with values $0<\eps\leq1$ and $\delta>0$. There exists an algorithm with running time in 
$ \cO\left(kn^2\eps^{-5}\log^2\left( n\eps^{-1} \right)\right) $
and space in $\cO\left(kn^2\eps^{-4}\log^2\left(\eps^{-1}\right)\right)$ which outputs one of the following: (i) $d^k_\mathcal{S}(T,B) \leq (3+\eps)\delta$ or (ii) $d^k_\mathcal{S}(T,B) > \delta$. In any case, the output is correct.
\end{theorem}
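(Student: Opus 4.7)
The approximation factor $(3+\eps)$ is already established by Theorem~\ref{approx:apxratio}, so what remains is the running time and space analysis of \textsc{ApproximateDecider}. Since the outer procedure merely calls \textsc{ApproximateDecider'} with $\eps'=\eps/9$ and all bounds depend polynomially on $1/\eps'$, the rescaling is absorbed into hidden constants, and it suffices to analyse \textsc{ApproximateDecider'} directly in terms of $\eps$.

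The plan is to first pay the one-time preprocessing cost of constructing the distance oracle $\cF_\eps$ on $T$, which by Lemma~\ref{distance-oracle} in the plane ($d=2$) costs $\cO(n\chi^2\log^2 n)$ time and $\cO(n\chi^2)$ space with $\chi = \eps^{-2}\log\eps^{-1}$. I would then bound the per-cell work inside the triple loop over rounds $s \in \{0,\dots,k\}$ and indices $(i,j)\in[n_1]\times[n_2]$. Computing $\dfree{i}{j}$, $\nei{i}{j}$ and $\ver{i}{j}$ is each $\cO(1)$ using the incoming reachability intervals stored in $\cA^s,\bar\cA^s$ and the leftmost point read from $g^{s-1}_l$. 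The dominant term is the call to \textsc{apxDiagonalTunnel}: its inner loop ranges over $\bG_{\delta\eps/\sqrt{2}}\cap \disk[3(1+\eps)\delta]{v_i}$, which by an elementary grid count contains $\cO(\eps^{-2})$ points; for each such $t$ a single query to $\cF_\eps$ costs $\cO(\eps^{-2}\log n\log\log n)$; the convex hull $H$ on $\cO(\eps^{-2})$ eligible points together with the two supporting tangents from $r$ costs an additional $\cO(\eps^{-2}\log\eps^{-1})$. Finally, assembling $P^s_{i,j}=Q(\nei{i}{j}\cup\dia{i}{j}\cup\ver{i}{j})\cap\dfree{i}{j}$ and updating $g^s_r,g^s_l,\cA^s$ is $\cO(1)$, because by construction each of the three contributing sets now has constant descriptive complexity.

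Summing the per-cell cost $\cO(\eps^{-4}\log n\log\log n + \eps^{-2}\log\eps^{-1})$ over the $\cO(kn^2)$ iterations yields $\cO(kn^2(\eps^{-4}\log n\log\log n + \eps^{-2}\log\eps^{-1}))$, which, using $\eps\le 1$ together with $\log\log n\le\log(n\eps^{-1})$, is subsumed by $\cO(kn^2\eps^{-5}\log^2(n\eps^{-1}))$. The preprocessing cost is dominated by this quantity as well. For the space analysis, $\cF_\eps$ occupies $\cO(n\eps^{-4}\log^2\eps^{-1})$; the arrays $g^s_r,g^s_l,\cA^s,\bar\cA^s$ contribute $\cO(kn)$; and the stored $P^s_{i,j}$ across all cells and rounds contribute $\cO(kn^2)$. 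The sum is bounded by $\cO(kn^2\eps^{-4}\log^2\eps^{-1})$, as claimed.

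The principal obstacle, and really the only nontrivial point, is the invariant that $P^s_{i,j}$ has constant descriptive complexity throughout all $k$ rounds. This is precisely what the grid-based approximation in \textsc{apxDiagonalTunnel} delivers: the exact line-stabbing wedge of~\cite{Guibas1994ApproximatingPSM} may have complexity linear in $n$ and propagating it would reintroduce the exponential blow-up of the exact algorithm, whereas the convex hull of the $\cO(\eps^{-2})$ eligible grid points is of constant complexity and its backward extension through the supporting tangents from $r$ describes $\dia{i}{j}$ as a single constant-complexity slab. Once this invariant is granted, the remaining running time and space analysis reduces to a direct sum of per-cell costs together with the observation that the $\eps'=\eps/9$ rescaling inflates every bound by only a constant factor.
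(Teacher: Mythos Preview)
Your proposal is correct and follows essentially the same approach as the paper: invoke Theorem~\ref{approx:apxratio} for correctness, cost the distance-oracle preprocessing via Lemma~\ref{distance-oracle}, bound the per-cell work as dominated by the $\cO(\eps^{-2})$ grid queries in \textsc{apxDiagonalTunnel} at $\cO(\eps^{-2}\log n\log\log n)$ each, and sum over $\cO(kn^2)$ iterations. One minor inaccuracy: the algorithm does not actually store all $P^s_{i,j}$ across rounds and cells---only the current row's reachability intervals and the left/rightmost points per column and round survive in the arrays $\cA^s,\bar\cA^s,g^s_l,g^s_r$---so the true auxiliary space is $\cO(kn)$ rather than $\cO(kn^2)$; the paper accordingly obtains the tighter space bound $\cO(n\eps^{-4}\log^2\eps^{-1})$, though your weaker bound still matches the theorem statement.
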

\begin{proof}
    We claim that the \textsc{ApproximateDecider} procedure fulfills these requirements.\\
    As $\eps' = \frac{\eps}{9}$, we can replace $\eps'$ with $\eps$ in the running time. For the precomputation we initialize the datastructure presented by Driemel and Har-Peled~\cite{Driemel2012JaywalkingYD} from Lemma \ref{distance-oracle}. This precomputation takes $\cO\left(\varepsilon^{-4}\log^2\left(\eps^{-1}\right)n\log^2(n)\right)$ time.
    We iterate over all $\cO(n^2)$ cells $k$ times, where in every iteration the only step that can not be handled in constant time, is invoking \textsc{apxDiagonalTunnel} procedure. This procedure iterates over $\cO\left(\varepsilon^{-2}\right)$ gridpoints, thus querries the data structure $\cO\left(\varepsilon^{-2}\right)$ times where each querry takes $\cO\left(\varepsilon^{-2}\log n\log\log n\right)$ time. Finally, we construct a convex hull and intersect it with a line, taking $\cO\left(\varepsilon^{-2}\log \varepsilon^{-1}\right)$ time as the complexity of the convex hull is $\cO\left(\varepsilon^{-2}\right)$. Thus the overall running time of the \textsc{apxDiagonalTunnel} procedure is $\cO\left(\varepsilon^{-4}\log n\log\log n\right)$. We call this procedure $\cO\left(kn^2\right)$ times.
    
    Thus the overall running time is
    
    \begin{align*}
    &\cO\left(\varepsilon^{-4}\log^2\left(\varepsilon^{-1}\right)n\log^2(n) + kn^2\varepsilon^{-1}\left(\varepsilon^{-4}\log n\log\log n\right)\right) \\
    =\;\;     &\cO\left(\varepsilon^{-5} n \log^2(n) + kn^2\varepsilon^{-5}\log n\log\log n\right) \\
    =\;\; &\cO\left(kn^2\eps^{-5} \log^2 \left(n \eps^{-1}\right) \right).
    \end{align*}
    
    The space follows directly from the space needed for the approximate distance data structure. All other data structures necessary for the algorithm use $\cO(n)$ or $\cO\left(\varepsilon^{-2}\right)$ space. Hence, the space is $\cO\left(n\varepsilon^{-4}\log^2\left(\varepsilon^{-1}\right)\right)$, as described in \cite{Driemel2012JaywalkingYD}. The correctness of the output is guaranteed by Theorem~\ref{approx:apxratio}. 
\end{proof}

\subsection{Modified algorithm for $c$-packed curves}\label{approx:c-packed}

In the case that the input curves are $c$-packed, for some constant $c$, we can modify the algorithm and achieve in near-linear running time in $n$. For this, we follow the approach of Driemel and Har-Peled~\cite{Driemel2010ApproximatingTF} to first simplify the curves. 

\begin{definition}[$\mu$-simplification \cite{Driemel2010ApproximatingTF}]
	Let a polygonal curve $X$ and a parameter $\mu>0$ be given. First mark the initial vertex of $X$ and set it as the current vertex. Now scan the polygonal curve from the current vertex until it reaches the first vertex that is in distance at least $\mu$ from the current vertex. Mark this vertex and set it as the current vertex. Repeat this until reaching the final vertex of the curve and also mark this final vertex. The $\mu$-simplification of $X$ denoted by $\simpl(X,\mu)$ is the resulting curve $X'$ that connects only the marked vertices in the order along $X$.
\end{definition}

For simplifications of $c$-packed curves the complexity of the free-space is linear:

\begin{definition}[Free-space complexity]
	Let $T$ and $B$ be two polygonal curves, and $\delta$ a given parameter. Define
	\[\cN_{\leq\delta}(T,B) = \#\left\{(i,j)\in\{1,\ldots,n_1\}\times\{1,\ldots,n_2\}\;\Big|\;\dfree{i}{j}\neq\emptyset\right\}\]
	as the number of cells in the parametric space, with non-empty $\delta$-free-space.
\end{definition}

\begin{observation}
	Given two polygonal curves $T$ and $B$ of total complexity $n_1 + n_2 = n$, then $\cN_{\leq\delta}(T,B)\leq n_1 n_2\in \cO(n^2)$.
\end{observation}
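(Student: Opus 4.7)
The statement to be proved is essentially that the trivial upper bound on the number of cells with non-empty $\delta$-free-space in the parametric space is the total number of cells, $n_1 n_2$.

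My plan is to observe that $\cN_{\leq\delta}(T,B)$ is defined as the cardinality of a subset of the index set $\{1,\ldots,n_1\}\times\{1,\ldots,n_2\}$. Therefore it is trivially upper bounded by $|\{1,\ldots,n_1\}\times\{1,\ldots,n_2\}|=n_1 n_2$. The second inequality $n_1 n_2\in \cO(n^2)$ follows immediately from the AM--GM type bound $n_1 n_2 \leq (n_1 + n_2)^2 / 4 = n^2/4$, so in particular $n_1 n_2 \leq n^2$.

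I do not anticipate any obstacle in this proof; it is a definitional observation which is only stated to contrast with the subsequent (and nontrivial) bound on the free-space complexity for $c$-packed curves. The point of stating it here is to establish the baseline against which the near-linear improvement for $c$-packed curves in the forthcoming analysis will be measured.
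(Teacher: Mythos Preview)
Your argument is correct. The paper states this as an observation without proof, since it follows immediately from the definition of $\cN_{\leq\delta}(T,B)$ as the cardinality of a subset of $\{1,\ldots,n_1\}\times\{1,\ldots,n_2\}$; your write-up simply spells this out.
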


\begin{lemma}[\cite{Driemel2010ApproximatingTF}]\label{approx:c-pac-lin-comp}
	For any two $c$-packed curves $X$ and $Y$ in $\bR^d$ of total complexity $n$, and two parameters $0<\varepsilon<1$ and $\delta>0$, we have that $\mathcal{N}_{\leq \delta}(\simpl(X,\varepsilon\delta),\simpl(Y,\varepsilon\delta)) \leq n\left(9c + 6c\varepsilon^{-1}\right)= \cO\left(cn\varepsilon^{-1}\right)$.
\end{lemma}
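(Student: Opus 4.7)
\proof
Since this is a result from prior work by Driemel and Har-Peled \cite{Driemel2010ApproximatingTF}, I will only sketch the approach. The plan is to combine two complementary structural properties: after the $(\varepsilon\delta)$-simplification step, every edge of the resulting curves $X' := \simpl(X,\varepsilon\delta)$ and $Y' := \simpl(Y,\varepsilon\delta)$ has Euclidean length at least $\varepsilon\delta$ (except possibly the final edge of each curve, by the construction of the simplification); and the $c$-packedness of $X$ and $Y$ bounds the arc length of each inside any ball of radius $r$ by $cr$. Note that $\cN_{\leq \delta}(X',Y')$ is exactly the number of edge pairs $(e,f) \in X' \times Y'$ with $\min d(e,f)\leq\delta$, since this is precisely when the corresponding free-space cell is non-empty.

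I would bound the number of such close pairs by a vertex-based charging argument. For each vertex $v$ of $X'$, consider the ball $\disk[2\delta]{v}$. By $c$-packedness of $Y$, this ball contains at most $2c\delta$ arc length of $Y$. Since consecutive selected vertices of $Y'$ along $Y$ are at Euclidean, and hence arc, distance at least $\varepsilon\delta$ apart, the number of vertices of $Y'$ that fall inside this ball is at most $2c/\varepsilon + O(1)$. A symmetric statement holds with the roles of $X'$ and $Y'$ swapped. Summing over the $n$ vertices of $X'$ and $Y'$ yields an $O(nc/\varepsilon)$ bound on those pairs $(e,f)$ for which some endpoint of $e$ lies within distance $2\delta$ of some endpoint of $f$, which matches the dominant $6c\varepsilon^{-1}$ term in the stated constant.

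The main obstacle will be handling the remaining pairs $(e,f)$ for which no endpoint of $e$ is close to any endpoint of $f$: two edges that graze each other in their interiors while their endpoints stay far apart. For such a pair, $f$ must traverse the $\delta$-tube around $e$ while entering and leaving the $2\delta$-neighborhood of $e$ through its side boundary. I would cover this tube by $O(\ell(e)/\delta + 1)$ balls of radius $O(\delta)$, and in each such ball bound the number of $Y'$-edge crossings by a length-packing argument (total $Y$-length at most $O(c\delta)$ per ball versus $\varepsilon\delta$ of $Y$-length spent per crossing). A symmetric charging that assigns each such skew pair to the shorter of the two edges then contributes the $9c$ term. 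Aggregating both contributions gives the claimed $n(9c + 6c\varepsilon^{-1})$ bound.
\endproof
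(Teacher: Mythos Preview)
The paper does not prove this lemma; it is quoted from \cite{Driemel2010ApproximatingTF} and used as a black box, so there is no in-paper argument to compare against. Your sketch is in the right spirit, but two steps do not go through as written.

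First, in the vertex-counting step you invoke the $c$-packedness of $Y$, not of $Y'$: you bound the $Y$-arclength inside $\disk[2\delta]{v}$ by $2c\delta$ and divide by $\varepsilon\delta$. But two $Y'$-vertices lying in the ball are joined by a subcurve of $Y$ that may leave the ball entirely, so ``$m$ vertices $\Rightarrow (m-1)\varepsilon\delta$ arclength of $Y$ inside the ball'' does not follow. The proof in \cite{Driemel2010ApproximatingTF} avoids this by first showing that the simplification $Y'$ is itself $O(c)$-packed; then the straight $Y'$-edge between two consecutive $Y'$-vertices in the ball stays in the ball by convexity, and the division is legitimate. (Your route can be salvaged with an annulus-crossing argument in a slightly larger ball, but that step is missing.)

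Second, the skew-pair argument does not close. Covering the $\delta$-tube of $e$ by $O(\ell(e)/\delta+1)$ balls and bounding each by $O(c/\varepsilon)$ crossings yields, after summing over all edges $e$ of $X'$, a factor of $(\text{total length of }X')/\delta$, which is not $O(n)$ in general; the remark ``charge to the shorter edge'' does not explain how this is avoided, and the definition of ``skew'' (endpoint-to-endpoint far versus endpoint-to-edge far) is ambiguous. The original proof again sidesteps this by working with the $O(c)$-packedness of the simplified curves rather than a tube-covering sum.
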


\begin{corollary}\label{approx:coro-s}
	For all $s\geq 0$ we have that
	\[\mathcal{N}_{\leq s\delta}(\simpl(X,\varepsilon\delta),\simpl(Y,\varepsilon\delta)) \leq n\left(9c + 6cs\varepsilon^{-1}\right)= \cO\left(cn+scn\varepsilon^{-1}\right).\]
\end{corollary}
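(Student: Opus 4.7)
The plan is to derive this corollary as a direct consequence of Lemma \ref{approx:c-pac-lin-comp} via a change of variables. The crucial observation is that the $\mu$-simplification depends only on the single scale parameter $\mu$, so the particular way we factor $\mu$ as a product $\eps \cdot \delta$ is immaterial: rescaling $\eps$ and $\delta$ while keeping the product $\eps\delta$ fixed leaves the simplified curves unchanged.

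Concretely, I would set $\delta' := s\delta$ and $\eps' := \eps/s$, so that $\eps'\delta' = \eps\delta$ and therefore $\simpl(X,\eps'\delta') = \simpl(X,\eps\delta)$, and likewise for $Y$. Applying Lemma \ref{approx:c-pac-lin-comp} with the parameters $(\eps',\delta')$ then immediately yields
\[
\cN_{\leq s\delta}(\simpl(X,\eps\delta),\simpl(Y,\eps\delta)) \,=\, \cN_{\leq \delta'}(\simpl(X,\eps'\delta'),\simpl(Y,\eps'\delta')) \,\leq\, n\bigl(9c + 6c(\eps')^{-1}\bigr) \,=\, n\bigl(9c + 6cs\eps^{-1}\bigr),
\]
which is exactly the stated bound.

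The one point requiring a moment of care is the hypothesis $0 < \eps' < 1$ in Lemma \ref{approx:c-pac-lin-comp}, which under the substitution becomes $s > \eps$. Since $\eps < 1$ by assumption, this holds automatically for every integer $s \geq 1$, which is the regime relevant to the later applications (where $s$ will index the number of shortcuts used so far). The degenerate case $s = 0$ is absorbed into the $\cO$-form on the right, and if a pointwise bound is wanted it follows from the monotonicity $\cN_{\leq 0} \leq \cN_{\leq \eps\delta}$ together with Lemma \ref{approx:c-pac-lin-comp} applied at $\delta'' := \eps\delta$ with, say, $\eps'' := 1/2$. No new geometric argument is needed beyond what is already contained in Lemma \ref{approx:c-pac-lin-comp}, so there is no real obstacle here; the whole content of the step is the scale-invariance of the simplification together with the substitution $\delta \mapsto s\delta$, $\eps \mapsto \eps/s$.
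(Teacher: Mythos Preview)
Your argument is correct and matches the paper's intended (unwritten) derivation: the corollary is stated without proof precisely because it follows from Lemma~\ref{approx:c-pac-lin-comp} by the reparametrization $\delta\mapsto s\delta$, $\eps\mapsto\eps/s$ that you describe, exploiting that $\simpl(\cdot,\mu)$ depends only on the product $\mu=\eps\delta$.

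One small slip in your $s=0$ remark: taking $\delta''=\eps\delta$ with $\eps''=1/2$ changes the simplification scale to $\eps''\delta''=\eps\delta/2\neq\eps\delta$, so that particular patch does not bound the right quantity. This is harmless for the paper's purposes, since only the asymptotic form is used downstream (Lemma~\ref{approx:bounded-intersection}), and for that the monotonicity $\cN_{\leq 0}\leq\cN_{\leq \delta}$ together with Lemma~\ref{approx:c-pac-lin-comp} applied verbatim already gives $\cO(cn\eps^{-1})$.
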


The following lemma by Driemel and Har-Peled shows that the shortcut Fr\'echet distance is approximately preserved under simplifications.

\begin{lemma}[\cite{Driemel2010ApproximatingTF}]\label{approx:approx-simp}
	Given a simplification parameter $\mu$ and two polygonal curves $X$ and $Y$, let $X'=\simpl(X,\mu)$ and $Y'=\simpl(Y,\mu)$ denote their $\mu$-simplifications respectively. For all $k\in\bN$ it holds that \[d^k_{\mathcal{S}}(X',Y') - 2\mu \leq d^k_{\mathcal{S}}(X,Y)\leq d^k_{\mathcal{S}}(X',Y') + 2\mu.\]
\end{lemma}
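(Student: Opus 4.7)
The plan is to establish both inequalities symmetrically via the triangle inequality for the Fréchet distance, using the fact that a curve and its $\mu$-simplification are $\mu$-close in Fréchet distance.

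First I would observe the standard fact that $d_\mathcal{F}(X,X')\leq\mu$ and $d_\mathcal{F}(Y,Y')\leq\mu$. This follows from the definition of $\simpl$: between two consecutive marked vertices $v_i,v_{i+1}$ of $X$, every skipped vertex is within distance $\mu$ of $v_i$, so by letting the parametrization on the simplification ``wait'' at $v_i$ while the parametrization on $X$ traces the intermediate vertices, and then moving both linearly to $v_{i+1}$, every pair of corresponding points is within distance $\mu$. Let $(\phi_Y,\psi_Y)$ be a traversal realising $d_\mathcal{F}(Y,Y')\leq\mu$, and define likewise $(\phi_X,\psi_X)$ for $X,X'$.

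Next I would prove the upper bound $d^k_\mathcal{S}(X,Y)\leq d^k_\mathcal{S}(X',Y')+2\mu$. Take any $k$-shortcut curve $(Y')^{\dagger}$ of $Y'$ realising (or approximating) the shortcut Fréchet distance on the simplified side, with shortcut parameters $s'_1<t'_1\leq s'_2<\cdots\leq s'_k<t'_k$. Use the traversal $(\phi_Y,\psi_Y)$ to pick corresponding parameters $s_i,t_i$ in $Y$ with $\|Y(s_i)-Y'(s'_i)\|\leq\mu$ and $\|Y(t_i)-Y'(t'_i)\|\leq\mu$, and form the $k$-shortcut curve $Y^{\dagger}$ of $Y$ using these. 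On each unshortcutted subcurve, the restriction of $(\phi_Y,\psi_Y)$ gives a traversal of the corresponding pieces of $Y^\dagger$ and $(Y')^\dagger$ of cost at most $\mu$. On each shortcut, the two endpoints differ by at most $\mu$, so by Observation~\ref{obs-trivial} the linear interpolation between the two shortcut segments has Fréchet cost at most $\mu$. Concatenating these traversals yields $d_\mathcal{F}(Y^{\dagger},(Y')^{\dagger})\leq\mu$. The triangle inequality for the Fréchet distance then gives
\[ d_\mathcal{F}(X,Y^{\dagger})\leq d_\mathcal{F}(X,X')+d_\mathcal{F}(X',(Y')^{\dagger})+d_\mathcal{F}((Y')^{\dagger},Y^{\dagger})\leq d^k_\mathcal{S}(X',Y')+2\mu,\]
which bounds $d^k_\mathcal{S}(X,Y)$ since $Y^{\dagger}$ uses at most $k$ shortcuts on $Y$.

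The lower bound $d^k_\mathcal{S}(X',Y')\leq d^k_\mathcal{S}(X,Y)+2\mu$ is entirely symmetric: start from an (almost) optimal shortcut curve $Y^{\dagger}$ of $Y$, use $(\phi_Y,\psi_Y)$ in the reverse direction to transport the shortcut parameters onto $Y'$, and apply the same triangle inequality. The main obstacle is the construction carried out in step~2 above: one must argue that shortcut endpoints chosen via the traversal $(\phi_Y,\psi_Y)$ respect the ordering constraints on shortcuts (so that $t_i\leq s_{i+1}$ holds), but this follows directly from monotonicity of $\phi_Y,\psi_Y$. A minor additional subtlety is to ensure the correspondence is well-defined when a shortcut endpoint falls inside an edge contracted by the simplification; here one simply picks any preimage consistent with the monotone traversal, and the $\mu$-bound on pointwise distance in the traversal makes the subsequent Fréchet estimate go through unchanged.
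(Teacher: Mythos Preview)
The paper does not supply its own proof of this lemma; it is stated with a citation to \cite{Driemel2010ApproximatingTF} and used as a black box. Your argument via the triangle inequality for the Fr\'echet distance, together with transporting shortcut endpoints through a traversal witnessing $d_\mathcal{F}(Y,Y')\leq\mu$, is correct and is precisely the standard way one proves such stability results; the ordering and degeneracy issues you flag are handled exactly as you describe.
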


\subsubsection{Modifications}

The four major modifications we have to apply to Algorithm \ref{approx:algorydm}, in order the achieve near-linear running time, are the following. Instead of using $\eps'=\eps/9$ and $\delta$ over the course of the algorithm as approximation value and distance threshhold, we instead use $\eps'=\eps/20$ and $\delta'=\delta/(1-2\eps')$. This is in order to retrieve a $(3+\eps)$-approximate result from the algorithm.\\ Secondly we $\eps'\delta'$-simplify both input curves $T$ and $B$, such that $\dreach[\delta']{}{}{}(T',B')$ only has $\cO(cn\eps^{-1})$ non-empty cells by Lemma~\ref{approx:c-pac-lin-comp}.\\
Thirdly, instead of iterating over all cells, we only want to iterate over these non-empty cells. We solve this with an output-sensitive algorithm for computing the intersections of edges and the boundary of $\delta'$-neighbourhoods of these edges. For this we first compute the $\cO(n)$ boundaries of neighbourhoods of edges, whose geometric shape we refer to a ``capsule'' in $\cO(n)$ time. We then compute the intersections between all edges and capsules of $B$ and $T$ with a slight modification (to handle capsules) of the sweep line algorithm presented by Bentley and Ottmann~\cite{Bentley1979AlgorithmsfRaCGI}. From these intersections we can then reconstruct, which cells have non-empty $\delta$-free-space.
A detailed description of this straight-forward modification can be found in Appendix \ref{appendix:intersectionfinder}.\\
Lastly, in order to store and retrieve the left- and rightmost points in a column below and in the lower-left quadrant of a cell, we use two dimensional range trees described in \cite{deBerg2008}. Both storing and retrieving takes logarithmic time, but now we are able to retrieve these points, while only storing and updating these points, whenever we are in a non-empty cell.

\subsection{Analysis for $c$-packed curves}

We now turn to analysing the modified algorithm as described in Section~\ref{approx:c-packed}.

\subsubsection{Correctness}

\begin{theorem}\label{approx:c-packed-apxratio}
    Given two $c$-packed curves $T$ and $B$ in the plane, as well as parameters $0<\eps\leq1$ and $\delta>0$, the algorithm correctly computes a decision of either $d_\mathcal{S}^k(T,B)>\delta$ or $d_\mathcal{S}^k(T,B)\leq (3+\eps)\delta$.
\end{theorem}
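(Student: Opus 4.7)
The plan is to reduce the correctness of the modified algorithm to the correctness already established for the unmodified approximate decider (Lemma~\ref{approx:apxratiohelper} and Theorem~\ref{approx:apxratio}) via the simplification. The modified algorithm runs essentially the same procedure as before, but on the $\eps'\delta'$-simplified curves $T' = \simpl(T,\eps'\delta')$ and $B' = \simpl(B,\eps'\delta')$, with the scaled distance threshold $\delta' = \delta/(1-2\eps')$ and approximation parameter $\eps' = \eps/20$. First I would argue that the other modifications (output-sensitive enumeration of nonempty cells via the capsule sweep line, and range-tree lookups for leftmost/rightmost points) do not affect \emph{which} reachable set is computed in any given cell: the sweep line reports exactly the cells $C_{i,j}$ with $\dfree[\delta']{i}{j}\neq \emptyset$, and for every empty cell the set $P^s_{i,j}$ would be empty anyway, so skipping it is harmless; the range trees return the same left/rightmost witness points as the arrays in the unmodified algorithm. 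Thus the inductive invariant of Lemma~\ref{approx:apxratiohelper}, applied to $T'$, $B'$, $\delta'$ and $\eps'$, carries through verbatim and yields
\[ \dreach[\delta']{s}{}{}(T',B') \;\subset\; \bigcup_{s'\leq s} P^{s'} \;\subset\; \dreach[3(1+\eps')^2\delta']{s}{}{}(T',B'). \]

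Next I would translate the decision back to the original curves using the simplification stability lemma (Lemma~\ref{approx:approx-simp}) with $\mu = \eps'\delta'$, which gives
\[ d^k_\cS(T',B') - 2\eps'\delta' \;\leq\; d^k_\cS(T,B) \;\leq\; d^k_\cS(T',B') + 2\eps'\delta'. \]
If the algorithm reports $d^k_\cS(T',B') > \delta'$, then $d^k_\cS(T,B) > \delta' - 2\eps'\delta' = \delta'(1-2\eps') = \delta$, which is the correct negative answer. If the algorithm reports $d^k_\cS(T',B') \leq 3(1+\eps')^2\delta'$, then
\[ d^k_\cS(T,B) \;\leq\; 3(1+\eps')^2\delta' + 2\eps'\delta' \;=\; \frac{3(1+\eps')^2 + 2\eps'}{1-2\eps'}\,\delta. \]

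The remaining step, and the only place where one has to be careful, is verifying that the choice $\eps' = \eps/20$ is small enough to absorb both the $3(1+\eps')^2$ blow-up from the approximate diagonal-tunnel procedure and the $1/(1-2\eps')$ blow-up from the simplification, into the target ratio $3+\eps$. Expanding the numerator gives $3 + 8\eps' + 3\eps'^2$, and since $\eps\leq 1$ we have $\eps' \leq 1/20$, so $(1-2\eps')^{-1} \leq 1 + 3\eps'$ for this range; multiplying out yields a bound of the form $3 + (8 + 9 + O(\eps'))\eps' \leq 3 + 20\eps' = 3 + \eps$. Hence the reported bound $(3+\eps)\delta$ holds in the positive case, and together with the negative case above this establishes correctness. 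The main obstacle is really just this parameter bookkeeping; the structural correctness is inherited from Lemma~\ref{approx:apxratiohelper} applied to the simplified instance, once one observes that the output-sensitive cell enumeration and range-tree retrieval yield the same per-cell computations.
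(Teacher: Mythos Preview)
Your proposal is correct and follows essentially the same approach as the paper: reduce to Lemma~\ref{approx:apxratiohelper} on the simplified curves $T',B'$ with parameters $\delta',\eps'$, then transfer back via Lemma~\ref{approx:approx-simp} and check the parameter arithmetic. The paper's proof is terser---it simply asserts that ``the main part of the algorithm is not modified'' and then verifies $\delta'-2\eps'\delta'=\delta$ and $3(1+\eps')^2\delta'+2\eps'\delta'<(3+\eps)\delta$---whereas you additionally spell out why the output-sensitive cell enumeration and the range-tree lookups leave the per-cell computation unchanged, which is a worthwhile clarification.
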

\begin{proof}
    The algorithm defines $\eps'=\eps/20$, and $\delta'=\delta/(1-2\eps')$, and $\eps'\delta'$-simplifies $T$ and $B$, resulting in $T'$ and $B'$ respectively. As the main part of the algorithm is not modified, Lemma \ref{approx:apxratiohelper} guarantuees a correct decision of either  $d_\mathcal{S}^k(T',B')>\delta'$ or $d_\mathcal{S}^k(T',B')\leq 3(1+\eps')^2\delta'$. By Lemma \ref{approx:approx-simp}, this decision implies a correct decision of either $d_\mathcal{S}^k(T,B)>(1-2\eps')\delta'-2\eps'\delta'$ or $d_\mathcal{S}^k(T',B')\leq 3(1+\eps')^2\delta'+2\eps'\delta'$. From the choices of $\eps'$ and $\delta'$ as well as from the fact that $\eps<1$ it follows that $\delta'-2\eps'\delta'=\delta$, and $3(1+\eps')^2\delta'+2\eps'\delta'<3+\eps$, thus implying the claim.
\end{proof}

\subsubsection{Running Time}
In this section we analyse the running time of the algorithm.
We begin by proving, that we can find all non-empty cells in an output-sensitive manner.

\begin{lemma}\label{approx:bounded-intersection}
    Let $B$ and $T$ be two polygonal $c$-packed curves and $0<\varepsilon\leq 1$ and $\delta>0$ be given. Then the number of pairwise intersections in the set consisting of edges and $\delta$-neighbourhood boundaries of edges of $\simpl(B,\varepsilon\delta)$ and $\simpl(T,\varepsilon\delta)$  is in $\cO\left(cn\varepsilon^{-1}\right)$.
\end{lemma}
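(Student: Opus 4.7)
The plan is to partition the set of pairwise intersections according to the type of pair (edge–edge, edge–capsule boundary, capsule boundary–capsule boundary, and whether the two objects come from the same simplified curve or from different simplified curves), and bound each type by a suitable free-space complexity via Corollary~\ref{approx:coro-s}.

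Let $T' = \simpl(T,\varepsilon\delta)$ and $B' = \simpl(B,\varepsilon\delta)$, and for an edge $e$ write $\kappa_\delta(e)$ for its $\delta$-capsule (Minkowski sum of $e$ with a disk of radius $\delta$). The key geometric observations are: (i) two line segments intersect only if their $0$-free-space cell is non-empty, (ii) an edge $e_i^T$ crosses $\partial\kappa_\delta(e_j^B)$ only if $\mathrm{dist}(e_i^T,e_j^B)\leq\delta$, and each such pair contributes $O(1)$ intersections since $e_i^T$ is a segment and $\kappa_\delta(e_j^B)$ is convex, (iii) $\partial\kappa_\delta(e_i^T)\cap\partial\kappa_\delta(e_j^B)\neq\emptyset$ only if $\mathrm{dist}(e_i^T,e_j^B)\leq 2\delta$, and again each such pair contributes $O(1)$ intersection points because capsules are convex bodies bounded by two segments and two circular arcs. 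In all three cases, the number of \emph{pairs} of edges involved is controlled by $\mathcal{N}_{\leq s\delta}(T',B')$ for $s\in\{0,1,2\}$, which by Corollary~\ref{approx:coro-s} is at most $\mathcal{O}(cn\varepsilon^{-1})$.

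For intersections involving two objects coming from the \emph{same} simplified curve, I apply the identical argument to the pair $(T',T')$ and $(B',B')$. Since $T'$ and $B'$ are themselves $c$-packed (a simplification of a $c$-packed curve is $c$-packed), Corollary~\ref{approx:coro-s} applied to each diagonal pair gives $\mathcal{N}_{\leq s\delta}(T',T')$ and $\mathcal{N}_{\leq s\delta}(B',B')$ both in $\mathcal{O}(cn\varepsilon^{-1})$ for $s\in\{0,1,2\}$. Note that trivially non-empty diagonal cells (pairs of identical edges or edges that share a vertex) number only $\mathcal{O}(n)$, which is absorbed into $\mathcal{O}(cn\varepsilon^{-1})$, so they do not inflate the count.

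Summing over the finitely many intersection types (at most a constant number of combinations of $\{E_{T'},E_{B'},\partial C_{T'},\partial C_{B'}\}$ taken pairwise), and using that each contributing pair of edges yields $O(1)$ intersection points, the total is $\mathcal{O}(cn\varepsilon^{-1})$. The main thing to be careful about is the constant-per-pair bound for capsule–capsule boundary intersections: because a capsule boundary is piecewise smooth (two straight edges and two semicircles), two such boundaries can meet in $O(1)$ points by elementary case analysis on which pieces cross which. Everything else is a direct application of the $c$-packed free-space bound, so no additional machinery is needed.
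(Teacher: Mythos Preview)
Your proof is correct and follows essentially the same approach as the paper: partition the intersections by type (edge--edge, edge--capsule, capsule--capsule, same curve or different curves) and bound each type by invoking Corollary~\ref{approx:coro-s} with $s\in\{0,1,2\}$ on the appropriate pair of curves. You are somewhat more explicit than the paper about the $O(1)$-intersections-per-pair fact (via convexity of capsules) and about handling the trivially non-empty diagonal cells, but the core argument is identical.
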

\begin{proof}
    This lemma is proven by repeated applications of Corollary \ref{approx:coro-s}.
    Let $B'=\simpl(B,\varepsilon\delta)$ and $T'=\simpl(T,\varepsilon\delta)$.
    For $s=0$, $B$ and $T$, Corollary \ref{approx:coro-s} implies that there are only $\cO\left(cn\varepsilon^{-1}\right)$ many intersections between edges of $B'$ and edges of $T'$.
    For $s=0$, $B$ and $B$, Corollary \ref{approx:coro-s} implies that there are only $\cO\left(cn\varepsilon^{-1}\right)$ many intersections between edges of $B'$ and edges of $B'$, similarly for $T'$.
    For $s=1$, $B$ and $T$, Corollary \ref{approx:coro-s} implies that there are only $\cO\left(cn\varepsilon^{-1}\right)$ many intersections between edges of $B'$ and neighbourhoods of edges of $T'$. Similarly for edges of $B'$ and neighbourhoods $B'$, and edges $T'$ and neighbourhoods $T'$.
    And lastly for $s=2$, Corollary \ref{approx:coro-s} implies the same for intersections of neighbourhoods and neighbourhoods, implying the claim.
\end{proof}

\begin{corollary}\label{approx:intersection-algorithm}
    Let $B$ and $T$ be two polygonal $c$-packed curves in the plane and $0<\varepsilon\leq1$ and $\delta>0$ be given. Then all $\cO\left(cn\varepsilon^{-1}\right)$ non-empty cells in the $\delta$-free-space of $\simpl(B,\varepsilon\delta)$ and $\simpl(T,\varepsilon\delta)$ can be found in $\cO\left(cn\varepsilon^{-1}\log\left(cn\varepsilon^{-1}\right)\right)$ time.
\end{corollary}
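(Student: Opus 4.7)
The plan is to invoke Lemma~\ref{approx:bounded-intersection} to bound the number of pairwise intersections, and then to apply an output-sensitive plane-sweep to find them in near-linear time. Writing $B' = \simpl(B,\varepsilon\delta)$ and $T' = \simpl(T,\varepsilon\delta)$, we first form the collection $\mathcal{O}$ consisting of the $\cO(n)$ edges of $B'$ and $T'$ together with the boundaries of the $\delta$-neighbourhoods (capsules) of these edges. A capsule boundary consists of two line segments and two circular arcs and has constant complexity, so $|\mathcal{O}| \in \cO(n)$ and computing $\mathcal{O}$ itself takes $\cO(n)$ time.

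The next step is to run the Bentley-Ottmann sweep-line algorithm~\cite{Bentley1979AlgorithmsfRaCGI} on $\mathcal{O}$. The standard algorithm is formulated for line segments, but it extends directly to a collection of $x$-monotone arcs of constant description complexity: one splits each capsule boundary into a constant number of $x$-monotone pieces, and each pair of pieces intersects in $\cO(1)$ points computable in constant time. The event queue and the sweep-line status structure are maintained as in the classical version, so that the running time is $\cO((|\mathcal{O}| + I)\log|\mathcal{O}|)$, where $I$ is the number of intersections actually reported. By Lemma~\ref{approx:bounded-intersection} we have $I \in \cO(cn\varepsilon^{-1})$, which yields the claimed bound of $\cO(cn\varepsilon^{-1}\log(cn\varepsilon^{-1}))$ time.

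Finally, we have to translate the set of reported intersections into the list of non-empty cells of the $\delta$-free-space diagram. A cell $C_{i,j}$ has non-empty free-space if and only if the capsule of radius $\delta$ around the $i$-th edge of $T'$ meets the $j$-th edge of $B'$ (equivalently, the capsule around the $j$-th edge of $B'$ meets the $i$-th edge of $T'$). Every such cell therefore produces at least one intersection between an edge and a capsule boundary in our sweep, except possibly when one edge lies entirely inside the capsule of the other; the latter can be detected from containment information already maintained by the sweep (by inspecting which capsule regions an edge-endpoint lies in, which is implicit in the event ordering). Bucketing the reported events by the pair $(i,j)$ of indices of the involved edge and capsule gives the list of non-empty cells in time proportional to the number of events, which is within the stated budget.

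The main obstacle here is mildly technical rather than conceptual: one must verify that the Bentley-Ottmann invariants continue to hold when the sweep-line status contains circular arcs as well as line segments, and that degeneracies (tangencies between an edge and a capsule arc, or between two capsule arcs) are handled so that each non-empty cell is reported exactly once. Both issues are standard: since each object in $\mathcal{O}$ has constant complexity, adjacent pairs on the sweep line can be tested for intersection in $\cO(1)$ time, and standard symbolic perturbation resolves degeneracies without affecting the asymptotic running time. With these details in place, the corollary follows from Lemma~\ref{approx:bounded-intersection} and the modified sweep.
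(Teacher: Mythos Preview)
Your proposal is correct and follows essentially the same approach as the paper: apply a Bentley--Ottmann sweep (adapted to handle the constant-complexity capsule arcs) to the edges and capsule boundaries, bound the number of intersections via Lemma~\ref{approx:bounded-intersection}, and then recover the non-empty cells from the reported events. The only minor difference is in how edges lying entirely inside a capsule are detected: the paper does this by tracing the edges of $B'$ in order along the curve (an edge with no boundary crossing inherits the containment status from its neighbour that did cross), whereas you appeal to containment information implicit in the sweep-line status; both are standard and equally valid.
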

\begin{proof}
    We first compute the intersections of the set of edges of $B$ and boundaries of $\delta$-neighbourhoods of the edges of $T$.
    For this we can use an output-sensitive intersection-finding algorithm, such as the Bentley-Ottman algorithm \cite{Bentley1979AlgorithmsfRaCGI}, see also Appendix~\ref{appendix:intersectionfinder}. 
    From these intersections we can then reconstruct the edge-pairs of $B$ and $T$ with non-empty $\delta$-free-space in $\cO\left(cn\varepsilon^{-1}\log\left(cn\varepsilon^{-1}\right)\right)$ time, by tracing the edges of $B$ in the arrangement of $\delta$-neighborhoods of $T$. 
    Note that we also need to compute the edges that lie completely inside the $\delta$-neighbourhoods of other edges, however these are surrounded by edges intersecting (i.e. entering and leaving) the boundary of the same neighbourhood. Hence, we can process the edges of $B$ in the order along $B$ and find all edge-pairs of $B$ and $T$ with non-empty $\delta$-free-space.
\end{proof}

\ckapproximation*

\begin{proof}
    The non-trivial steps of the algorithm are: (i) Precomputation on the curves, (ii) finding all non-empty cells, (iii) iterating over these cells, (iv) the \textsc{apxDiagonalTunnel} procedure and (v) storing and restoring the rightmost gate.
    
    As $\eps' = \frac{\eps}{20}$, we can replace $\eps'$ with $\eps$ in the running time. For the precomputation we initialize the datastructure presented by Driemel and Har-Peled~\cite{Driemel2012JaywalkingYD} from Lemma \ref{distance-oracle}. This precomputation takes $\cO\left(\varepsilon^{-4}\log^2\left(\eps^{-1}\right)n\log^2(n)\right)$ time.
    In Corollary~\ref{approx:intersection-algorithm} we showed, that finding all intersections can be done in $\cO\left(cn\varepsilon^{-1}\log\left(cn\varepsilon^{-1}\right)\right)$ time. Sorting these intersections in $\cO\left(cn\varepsilon^{-1}\log\left(cn\varepsilon^{-1}\right)\right)$ time alphanumerically by the two indices, allows us to iterate over these cells as described in the algorithm.
    In Section~\ref{sec:apx-alg} we described the \textsc{apxDiagonalTunnel} procedure. This procedure iterates over $\cO\left(\varepsilon^{-2}\right)$ gridpoints, thus querries the data structure $\cO\left(\varepsilon^{-2}\right)$ times where each querry takes $\cO\left(\varepsilon^{-2}\log n\log\log n\right)$ time. Finally, we construct a convex hull and intersect it with a line, taking $\cO\left(\varepsilon^{-2}\log \varepsilon^{-1}\right)$ time as the complexity of the convex hull is $\cO\left(\varepsilon^{-2}\right)$. Thus the overall running time of the \textsc{apxDiagonalTunnel} procedure is $\cO\left(\varepsilon^{-4}\log n\log\log n\right)$. We call this procedure $\cO\left(kcn\varepsilon^{-1}\right)$ times, $k$ times for each nonempty cell.
    
    To store the rightmost gate in the lower left quadrant we can use two dimensional range trees as described in \cite{deBerg2008}. We build this tree with $\cO\left(cn\varepsilon^{-1}\right)$ points at the end of each outer loop storing all right- and left-most points for the next iteration in $\cO\left(cn\varepsilon^{-1}\log\left(cn\varepsilon^{-1}\right)\right)$ time. As we do this $k$ times, this results in an overall running time of $\cO\left(kcn\varepsilon^{-1}\log\left(cn\varepsilon^{-1}\right)\right)$, where the space used is $\cO\left(cn\varepsilon^{-1}\log\left(cn\varepsilon^{-1}\right)\right)$.
    
    Thus the overall running time is
    
    \begin{align*}
    &\cO\left(cn\varepsilon^{-1}\log\left(cn\varepsilon^{-1}\right) + \varepsilon^{-4}\log^2\left(\varepsilon^{-1}\right)n\log^2(n) + kcn\varepsilon^{-1}\left(\varepsilon^{-4}\log n\log\log n\right)\right) \\
    =\;\;     &\cO\left(cn\varepsilon^{-1}\log\left(n\varepsilon^{-1}\right) + \varepsilon^{-5} n \log^2(n) + kcn\varepsilon^{-5}\log n\log\log n\right) \\
    =\;\; &\cO\left(kcn\eps^{-5} \log^2 \left(n \eps^{-1}\right) \right).
    \end{align*}
    
    The space follows directly from the space needed for the approximate distance data structure. All other data structures necessary for the algorithm use $\cO(n)$ or $\cO\left(\varepsilon^{-2}\right)$ space. Hence, the space is $\cO\left(n\varepsilon^{-4}\log^2\left(\varepsilon^{-1}\right)\right)$, as described in \cite{Driemel2012JaywalkingYD}. The approximation ratio is guaranteed by Theorem~\ref{approx:c-packed-apxratio}. 
\end{proof}

\section{Hardness}\label{non-fpt}

We prove that deciding whether the $k$-shortcut Fréchet distance is less than or equal to a given value can not be done in $n^{o(k)}$ time, unless ETH fails. For this we construct a $(4k+2)$-shortcut Fréchet distance instance based on a $k$-Table-SUM instance, where the distance is exactly $1$ if and only if the $k$-Table-SUM instance has a solution and more than $1$ otherwise.

\begin{restatable}[$k$-Table-SUM]{definition}{kTableSUM}

We are given $k$ lists $S_1,\ldots,S_k$ of $n$ non-negative integers $\{s_{i,1},\ldots,s_{i,n}\}$ and a non-negative integer $\sigma$. We want to decide whether there are indices $\iota_1,\ldots,\iota_k$ such that $\sum_{i=1}^{k}s_{i,\iota_i} = \sigma$. We call $\sigma_j=\sum_{i=1}^{j}s_{i,\iota_i}$ the $j$th partial sum.
\end{restatable}

\subsection{General idea}

A $k$-Table-SUM instance consists of $k$ lists of integers and a target value and asks whether the target value can be rewritten as a sum of values, one from each list.
Based on such an instance we describe how to construct a $(4k+2)$-shortcut Fréchet distance instance consisting of the target curve $T$ and the base curve $B$ with the described property, that they have a distance of $1$ if and only if the underlying instance has a solution.

The target curve $T$ will lie on a horizontal line going to the right.
The set of points in $\bR^2$ which have a distance of at most $1$ to the target curve we will call the \textit{hippodrome}.
The base curve will consist of several horizontal edges going to the left on the boundary of the hippodrome.
All other edges of the base curve will lie outside the hippodrome.
Any shortcut curve of B that has Fréchet distance of at most $1$ to T we will call \textit{feasible}.
It is easy to see that any feasible shortcut curve must lie completely in the hippodrome.
Since any edge of the base curve inside the hippodrome lies on the boundary of it and is oriented in the opposite direction of the base curve, no feasible shortcut curve consists of any subcurve of the target curve.
Hence, every shortcut on a feasible shortcut curve has to start where the previous shortcut ended.
To restrict the set of feasible shortcut curves even further, we place so called \textit{twists} on the target curve.
Twists force shortcuts traversing it to go through precisely one point, called its \textit{focal point} or \textit{projection centre}.
For a simplified structural view of the curves refer to Figure \ref{hardness:structure}.
These twists are constructed by going a distance of $2$ to the left, before continuing rightwards.
We will not place any edges of the base curve too close to twists, so that a shortcut must be taken to traverse these.

Intuitively we can think of the horizontal edges of the base curve as mirrors that disperse incoming light in all directions and focal points as a wall with a hole, like in a pinhole camera.
A shortcut curve can be thought of as the path of a photon that tries to traverse this instance.
It bounces from mirror to mirror, always passing through a focal point.
A feasible shortcut curve exists if and only if it is possible to send a photon from the beginning of the base curve to the very end.

We can transport information throughout the instance by comparing two different feasible shortcut curves.
Assume both shortcut curves traverse the instance up to some edge, with the points of contact being some distance $d$ apart.
Then the points of contact on the next edge following shortcuts through a focal point will be $d$ apart again.
We could then keep track of a shortcut curve encoding a partial sum of $0$ as a reference point, where the distance from any shortcut curve to this reference curve encodes the partial sum of the particular shortcut curve.

\begin{figure}
	\includegraphics[width=\textwidth]{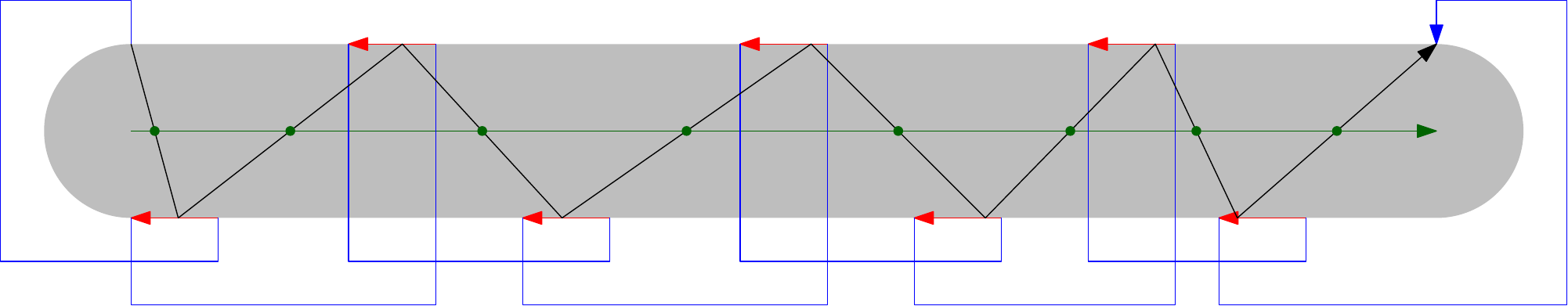}
	\caption{Simplified global layout of the target curve and its focal points in green, and the base curve consisting of red mirror edges and blue connector edges. A feasible shortcut curve is drawn in black, hippodrome in gray.}
	\label{hardness:structure}
\end{figure}
For two shortcut curves to take different paths we need to introduce a choice corresponding to taking an item from a list in the $k$-Table-SUM instance.
For this we place multiple edges, one for each item, at distances between $\frac{1}{2}$ and $1$ of the base curve instead of a single edge on the boundary of the hippodrome.
These can be thought of as semi-transparent mirrors.
Since the distance from these edges to the target curve may be less than $1$, it may happen that a feasible shortcut curve traverses the edge before taking the next shortcut.
Therefore the relative position along an edge no longer encodes precise values but approximates the partial sums.
We can introduce a scaling in the horizontal direction to contain this error.
A second problem that occurs is that edges may overlap in the vertical direction, such that photons may visit multiple edges.
We will fix this by stretching the instance even further.

\subsection{$k$-Table-SUM}
We begin by defining the $k$-Table-SUM problem and looking at equivalent variants of it which we want to work with.

\begin{definition}[$k$-SUM]
We are given a list $S$ of $n$ non-negative integers $\{s_{1},\ldots,s_{n}\}$ and a non-negative integer $\sigma$. We want to decide whether there is a subset $S'\subset S$ of size $k$, such that $\sum_{s\in S'}s = \sigma$.
\end{definition}

The following theorem is well-known. We provide a proof for the sake of completeness.

\begin{theorem}[Folklore]\label{hardness:thm1}
	Assuming the exponential time hypothesis, $k$-Table-SUM can not be solved in $2^{O(k)}\log(n)\,n^{o(k)}$ time, i.e. for fixed $k$, not in $\log(n)\, n^{o(k)}$ time.
\end{theorem}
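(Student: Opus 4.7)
The plan is to reduce from the classical $k$-SUM problem (given a list $S$ of $n$ integers and a target $\sigma$, decide whether some $k$ distinct elements of $S$ sum to $\sigma$), invoking the lower bound of Patrascu and Williams: under ETH, $k$-SUM cannot be solved in $n^{o(k)}$ time. I will show that a $2^{O(k)}\log(n)\cdot n^{o(k)}$-time algorithm for $k$-Table-SUM would yield an $n^{o(k)}$-time algorithm for $k$-SUM, producing a contradiction.

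The reduction is by standard color coding. Given a $k$-SUM instance $(S,\sigma)$, I pick a hash function $h\colon S\to [k]$ from a $k$-wise independent family and define the tables $S_i = \{s\in S : h(s)=i\}$ together with the target $\sigma$. For any fixed $k$-SUM solution $\{s^{(1)},\ldots,s^{(k)}\}$, a uniformly random $h$ assigns its elements to $k$ distinct buckets with probability at least $k!/k^k \geq e^{-k}$; conditional on this event, the solution corresponds (up to permutation of the list labels) to a valid $k$-Table-SUM tuple. Conversely, an arbitrary $k$-Table-SUM solution $(\iota_1,\ldots,\iota_k)$ of the constructed instance automatically selects $k$ distinct elements of $S$, since the $S_i$ are pairwise disjoint. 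Derandomizing the choice of $h$ via an explicit $k$-perfect hash family (Alon--Yuster--Zwick) of size $2^{O(k)}\log n$ yields a deterministic reduction producing $2^{O(k)}\log(n)$ $k$-Table-SUM instances whose disjunction captures the $k$-SUM answer exactly.

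Plugging the hypothesized algorithm into this reduction, the total running time for solving $k$-SUM becomes $2^{O(k)}\log(n)\cdot\bigl(2^{O(k)}\log(n)\cdot n^{o(k)}\bigr)$, which is $n^{o(k)}$ for fixed $k$ after absorbing the factor depending only on $k$. This contradicts the Patrascu--Williams lower bound and hence ETH, completing the proof. The main technical ingredient is the explicit $k$-perfect hash family of size $2^{O(k)}\log n$; its existence and polynomial-time construction is standard in parameterized complexity, and it is precisely this that accounts for the $2^{O(k)}\log(n)$ overhead appearing in the theorem statement. The arithmetic and the bookkeeping required to verify that the disjunction of the produced instances faithfully encodes the original $k$-SUM instance are routine.
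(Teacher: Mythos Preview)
Your proposal is correct and follows essentially the same approach as the paper: both reduce from $k$-SUM via the P\u{a}tra\c{s}cu--Williams lower bound, use random partitioning (color coding) with success probability $k!/k^k\ge e^{-k}$, and derandomize via an explicit $k$-perfect hash family of size $2^{O(k)}\log n$ (Alon--Yuster--Zwick). The only cosmetic difference is that you spell out the converse direction and the final running-time arithmetic a bit more explicitly than the paper does.
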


\begin{proof}
The \textit{exponential time hypothesis} states that the well known $3$-SAT problem in $n$ variables can not be solved in $2^{o(n)}$ time \cite{Impagliazzo1999}.
Assuming the exponential time hypothesis, P\u{a}tra\c{s}cu and Williams in \cite{Patrascu2010OnPFSA} showed, that $k$-SUM cannot be solved in $n^{o(k)}$ time.

To reduce a $k$-SUM instance to a $k$-Table-SUM instance, we begin by randomly partitioning the original integer list into $k$ non-empty parts. With probability $k!/k^k>e^{-k}$ any given solution is then split, with one item in each of the $k$ lists. This can be derandomized, by computing a $k$-perfect family of hash functions, introduced by Schmidt and Siegel in \cite{Schmidt1990SpatialCoOk}. A family of such derandomizations has been introduced by Alon, Yuster and Zwick \cite{Alon1995ColorCoding}. The derandomization introduces a factor of $2^{O(k)}\log n$.
\end{proof}

In the construction step we are interested in a slight variation of the $k$-Table-SUM, where each table has a minimum value of $0$.
This is equivalent to the above stated $k$-Table-SUM problem by subtracting the minimum value of each list from every value of that list as well as the sum of all minimum values from $\sigma$.
Another slight modification we need to introduce is that all $k$ lists need to be sorted.
This reduction takes $\cO(kn\log n)$ time by for example $k$ applications of a suitable sorting algorithm.
During the construction we will always refer to this sorted version of $k$-Table-SUM.

\subsection{Construction}\label{hardness:construction}

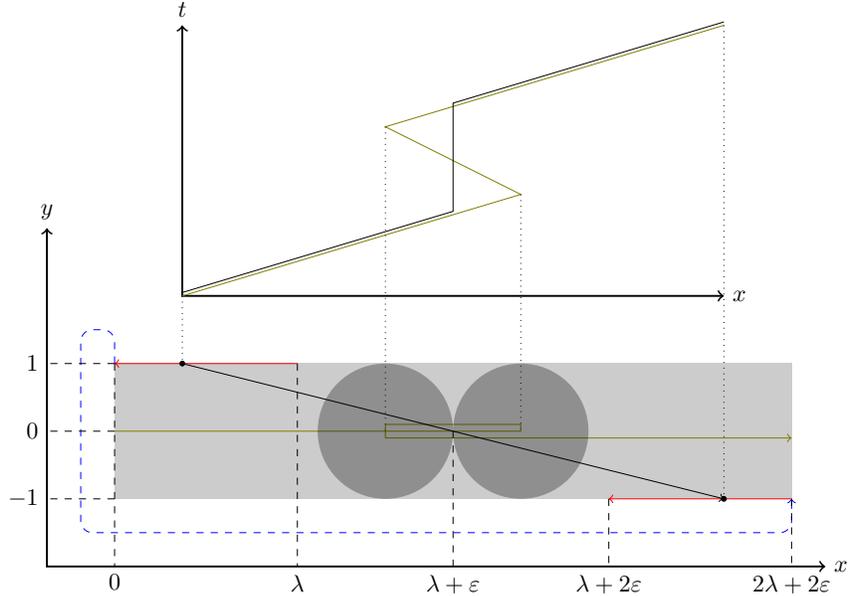
\begin{figure}
\centering
\scalebox{0.9}{
	\begin{tikzpicture}
	\begin{scope}[shift={(0,2)}]
	\draw [<->,thick] (-4,4) node (yaxis) [above] {$t$}
	|- (4,0) node (xaxis) [right] {$x$};
	\draw[green!50!red] (-4,0) -- (1,1.5) -- (-1,2.5) -- (4,4);
	\draw[black] (-4,0.05) -- (0,1.25) -- (0,2.85) -- (4,4.05);
	\draw[dotted] (1,1.5) -- (1,-2);
	\draw[dotted] (-1,2.5) -- (-1,-2);
	\draw[dotted] (-4,0) -- (-4,-1);
	\draw[dotted] (4,4) -- (4,-3);
	\end{scope}
	
	\fill[opacity = 0.2] (-5,-1) rectangle ++(10,2);
	\draw[->,green!50!red] (-5,0) -- (1,0) -- (1,0.1) -- (-1,0.1) -- (-1,-0.1) -- (5,-0.1);
	\draw[->,red] (-2.3,1) -- (-5,1);
	\draw[->,blue,dashed,rounded corners=5pt] (-5,1) -- (-5,1.5) -- (-5.5,1.5) -- (-5.5,-1.5) -- (5,-1.5) -- (5,-1);
	\draw[->,red] (5,-1) -- (2.3,-1);
	\fill[opacity=0.3](1,0) circle (1);
	\fill[opacity=0.3](-1,0) circle (1);
	\filldraw (-4,1) circle (1pt);
	\draw[->] (-4,1) -- (4,-1);
	\filldraw (4,-1) circle (1pt);
	
	\draw [<->,thick] (-6,3) node (yaxis) [above] {$y$}
	|- (5.5,-2) node (xaxis) [right] {$x$};
	\draw[dashed] (-5,1) -- (-5,-2);
	\draw[dashed] (-2.3,1) -- (-2.3,-2);
	\draw[dashed] (0,0) -- (0,-2);
	\draw[dashed] (5,-1) -- (5,-2);
	\draw[dashed] (2.3,-1) -- (2.3,-2);
	
	\draw[dashed] (-5,1) -- (-6,1);
	\draw[dashed] (-5,0) -- (-6,0);
	\draw[dashed] (-5,-1) -- (-6,-1);

	\node[below] at (-5,-2) {$0$};
	\node[below] at (-2.3,-2) {$\lambda$};
	\node[below] at (-0,-2) {$\lambda+\varepsilon$};
	\node[below] at (2.3,-2) {$\lambda+2\varepsilon$};
	\node[below] at (5,-2) {$2\lambda+2\varepsilon$};

	\node[left] at (-6,1) {$1$};
	\node[left] at (-6,0) {$0$};
	\node[left] at (-6,-1) {$-1$};
	\end{tikzpicture}
	}
	
	\caption{Traversal of a shortcut through a twist. The twist is not drawn correctly, to emphasize the structure of the twist.}
	\label{hardness:twist}
\end{figure}

In this section we describe the construction of the curves $T$ and $B$ given a $k$-Table-SUM instance.

We first describe the overall layout of the instance.
We will construct $k+2$ 'gadgets', an initialization gadget $g_0$, $k$ encoding-gadgets $g_1,\ldots,g_k$ that encode the individual lists of the $k$-Table-SUM instance and a terminal gadget $g_{k+1}$ used to verify that the target value $\sigma$ has been reached.
Each gadget $g_i$ will consist of two curves $T_i$ and $B_i$, which we concatenate to get $T$ and $B$ in the end.
We denote by $H_y$ the horizontal line at $y$ in $\bR^2$ and by $H_{> y}$ all points above $H_y$.
Similarly for $\geq,\leq$ and $<$.
And finally $H_{\geq a}^{<b} = H_{\geq a} \cap H_{<b}$.
The target curve $T$ will lie in $H_0$.

The base curve will have leftwards horizontal edges in $H_{\geq 1/2}^{\leq 1}$ and $H_{\geq -1}^{\leq -1/2}$, we will call \textit{mirror edges}.
All other edges of $B_i$ that connect these mirror edges we will call \textit{connector edges}.
The connector edges will mostly lie outside of the hippodrome.
The placement for connector edges that lie outside of the hippodrome is irrelevant.
We have to carefully look at any exception, since we want any feasible shortcut curve to only interact with the mirror edges.
Since all points lie on a small set of horizontal lines, we will occasionally denote the $x$-coordinate of a point and the point itself with the same variable but in different fonts.
For example the point $\textrm{x}^i_j$ has $x$-coordinate $x^i_j$.

The edges of the target curves $T_i$ will, with the exception of twists, be oriented in positive $x$-direction.
A twist centred at the focal point $(p,0)$ is a subcurve defined by the points $(p-1,0),(p+1,0)$ and $(p-1,0)$ connected by straight lines.
Around each focal point we introduce a buffer rectangle of length $2\varepsilon = 5$ and height $3$, where we let $\varepsilon$ be a global constant for the construction.
The base curve will never intersect these buffer zones, which is important for the twists to restrict the feasible shortcut curves as intended.

The instance will have two more global parameters. The first parameter $\gamma\geq1$ is a global scaling factor in $y$-direction, which ensures that feasible shortcut curves will never enter connector edges.
Furthermore it will ensure that the approximate encoding of two different partial sums will stay disjoint. The parameter $\gamma$ will be in $\cO(k)$. Lastly $\beta$ is a spacing parameter ensuring that edges are far enough apart from one another.

Before we look at the precise construction, let us convince ourselves of the correctness of twists.
For the following paragraph refer to Figure~\ref{hardness:twist}.
Assume we have two mirror edges of length $\lambda$, one placed from $(\lambda,1)$ to $(0,1)$, the other from $(2\lambda + 2\varepsilon,-1)$ to $(\lambda + 2\varepsilon,-1)$, which are connected by connector edges. 
We have a twist centred at $(\lambda + \varepsilon,0)$ on an otherwise rightwards facing target curve.
Assume furthermore that we have a partial feasible shortcut curve, which reaches some point $(p,1)$ on the first mirror edge.
Since the distance to the target curve is precisely $1$, any reparametrization with a distance at most $1$ for the shortcut Fréchet distance has to pair the point $(p,1)$ to $(p,0)$.
Since the target curve is oriented in the opposite direction to the mirror edge, the only way to continue the feasible shortcut curve is by a shortcut to the right.
It can not jump to any point on the first mirror edge, since all those points lie left of $(p,1)$.
The shortcut has to traverse the buffer zone of the twist.
And since there are no edges of the base curve in the buffer zone, the shortcut has to traverse it completely.
To analyse all shortcuts at a distance of at most $1$, we place two auxiliary disks centred at $(\lambda + \varepsilon\pm1,0)$ of radius $1$.
Any feasible shortcut curve traversing the buffer zone must traverse both of these disks, since otherwise no reparametrization can pair to the points $(\lambda + \varepsilon\pm1,0)$ at distance at most $1$, which are part of the target curve.

Since the twists first goes to $(\lambda + \varepsilon+1,0)$ and then to $(\lambda + \varepsilon-1,0)$, any feasible shortcut curve must also traverse the disks in this order.
The first disk lies to the right of the second disk, and we try to traverse these disks from the left.
The only possible way to traverse them with a straight line is through the intersection of the disks. 
And the only point in the intersection is exactly the focal point.
So any shortcut of a feasible shortcut curve that traverses the buffer zone of a twist must traverse its focal point.
A possible partial traversal is given in the upper plot in Figure~\ref{hardness:twist}.
Note that it is in $t$-$x$-space, corresponding to how the two points paired by the reparametrizations traverse the curves in the $x$-direction.

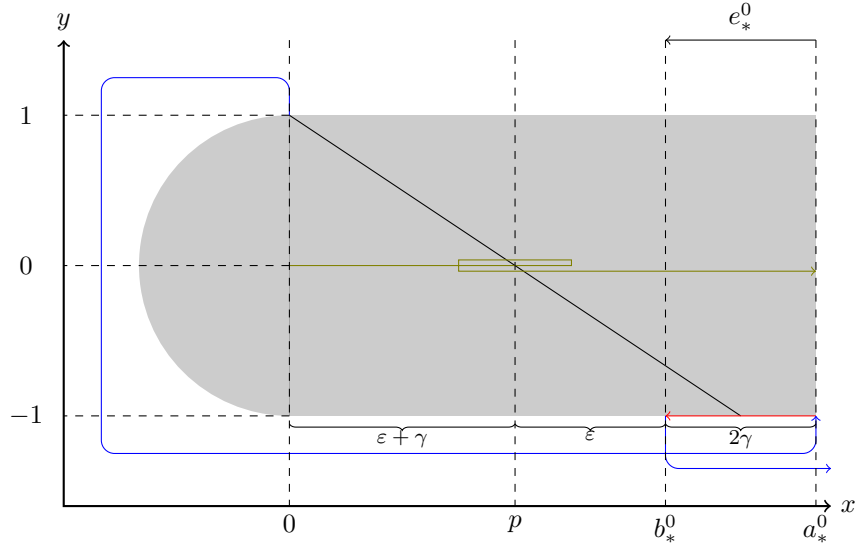
\begin{figure}
	\centering
	\begin{tikzpicture}
\def\one{2.0}
\def\g{1.0}
\def\eps{2.0}

\def\poreps{0.5}
\def\porinc{0.2}

\def\twisth{0.075}
\def\twistl{0.75}

\begin{scope}
\clip  ({-\eps-\g},\one) rectangle (-3*\one,-2*\one);
\fill [opacity = 0.2] ({-\eps-\g},0) circle(\one);
\end{scope}

\coordinate (P) at (0,0);
\coordinate (s) at ({-\eps-\g},\one);
\coordinate (t) at ({\eps+\g},-\one);
\coordinate (a1) at ({\eps},-\one);
\coordinate (b1) at ({\eps+2*\g},-\one);

\fill [opacity = 0.2] (s) rectangle (b1);

\draw (s) -- (t);
\draw[->,red] (b1) -- (a1);
\draw[->,blue,rounded corners=5pt] (s) -- ($(s) + (0,\poreps)$) -- ($(s) + (-\one-\poreps,\poreps)$) -- ($(s) + (-\one-\poreps,-2*\one - \poreps)$) -- ($(b1) - (0,\poreps)$) -- (b1);

\draw[->,blue,rounded corners=5pt] (a1) -- ($(a1) + (0,-\poreps - \porinc)$) -- ($(b1) + (\porinc,-\poreps-\porinc)$);

\draw[decoration={brace,raise=3pt,mirror},decorate] ({-\eps-\g},-\one) -- node[midway,below=2pt] {\footnotesize $\varepsilon+\gamma$} (0,-\one);

\draw[decoration={brace,raise=3pt,mirror},decorate] (0,-\one) -- node[midway,below=2pt] {\footnotesize $\varepsilon$} (a1);

\draw[decoration={brace,raise=3pt,mirror},decorate] (a1) -- node[midway,below=2pt] {\footnotesize $2\gamma$} (b1);


\draw[->,green!50!red] ($(s) - (0,\one)$) -- ($(P) + (\twistl,0)$) --  ($(P) + (\twistl,\twisth)$) -- ($(P) + (-\twistl,\twisth)$) -- ($(P) - (\twistl,\twisth)$) --  ($(b1) + (0,\one-\twisth)$);

\draw [<->,thick] ($(s) + (-\one-2*\poreps,2*\poreps)$) node (yaxis) [above] {$y$}
|- ($(b1) + (\porinc,-2*\poreps-\porinc)$) node (xaxis) [right] {$x$};

\node at (-3*\one - \poreps, 0) {$0$};
\node at (-3*\one - \poreps, -\one) {$-1$};
\node at (-3*\one - \poreps, \one) {$1$};

\draw [dashed] (-1.5*\one,0) -- (-3*\one,0);
\draw [dashed] (-1.5*\one,\one) -- (-3*\one,\one);
\draw [dashed] (-1.5*\one,-\one) -- (-3*\one,-\one);

\draw [dashed] (s |- 52,\one+2*\poreps) -- (s |- 52,-\one-2*\poreps-\porinc);
\draw [dashed] (P |- 52,\one+2*\poreps) -- (P |- 52,-\one-2*\poreps-\porinc);
\draw [dashed] (a1 |- 52,\one+2*\poreps) -- (a1 |- 52,-\one-2*\poreps-\porinc);
\draw [dashed] (b1 |- 52,\one+2*\poreps) -- (b1 |- 52,-\one-2*\poreps-\porinc);
\draw [->] (b1 |- 52,\one+2*\poreps) -- (a1 |- 52,\one+2*\poreps)  node [midway, above] {$e^0_*$};

\node[below] at (s |- 52,-\one-2*\poreps-\porinc) {$0$};
\node[below] at (P |- 52,-\one-2*\poreps-\porinc) {$p$};
\node[below] at (a1 |- 52,-\one-2*\poreps-\porinc) {$b^0_*$};
\node[below] at (b1 |- 52,-\one-2*\poreps-\porinc) {$a^0_*$};
\end{tikzpicture}
	\caption{Construction of the Initialization gadget. The first forced shortcut is drawn in black. Mirror edges are red, connector edges are blue, and the target curve is green.}
	\label{hardness:init}
\end{figure}

\subsubsection{Initialization gadget $g_0$}
For the construction refer to Figure~\ref{hardness:init}.
Both curves $T_0$ and $B_0$ will start at $x$-coordinate $0$ placing the start point for the base curve at $(0,1)$, and the start point for the target curve at $(0,0)$.
The target curve will go rightwards, up to the first twist centred at $(\varepsilon + \gamma,0)$ and continue rightwards after that.
The base curve will immediately leave the hippodrome to the left and connect to the first mirror edge from $\textrm{a}^0_* = (3\gamma+2\varepsilon,-1)$ to $\textrm{b}^0_* = (\gamma+2\varepsilon,-1)$.

\subsubsection{Encoding gadget $g_i$}
\begin{figure}
	\centering
	\makebox[\textwidth][c]{	\begin{tikzpicture}
	
	\def\one{2.5}
	\def\l{0.75}
	\def\b{2.2}
	\def\n{3} 
	\def\eps{1}
	\def\d{\l+\eps}
	\def\del{(\n-1)*(\l+\b)-(\d)}
	\def\roun{2mm}
	\def\coordoffset{0.5}
	
	\def\desoffset{1.5}
	
	\def\twisth{0.075}
	\def\twistl{0.75}

	\pgfmathdeclarefunction{a}{1}{%
		\pgfmathparse{\del-(#1-1.0)*(\b+\l)}%
	}
	\pgfmathdeclarefunction{b}{1}{%
		\pgfmathparse{\del-(#1-1.0)*(\b+\l)+\l}%
	} 
	\pgfmathdeclarefunction{x}{1}{%
		\pgfmathparse{(\del)*(\del+\d)/(\del+\d+(#1-1.0)*(\l+\b))}%
	}
	\pgfmathdeclarefunction{y}{1}{%
		\pgfmathparse{(\del+\l)*(\del+\d)/(\del+\d+(#1-1.0)*(\l+\b))}%
	}
	
	\coordinate (P) at (0,0);
	\coordinate (Q) at ({\del+\d},0);
	
	\draw[opacity=0.5,dashed](P |- 52,\one+1.5*\desoffset) -- (P |- 52,-\one-\desoffset);
	\node[below,text height=0.7em] at (P |- 52,-\one-\desoffset) {$p_1$};
	\draw[opacity=0.5,dashed](Q |- 52,\one+1.5*\desoffset) -- (Q |- 52,-\one-\desoffset);
	\node[below,text height=0.7em] at (Q |- 52,-\one-\desoffset) {$p_2$};
	
	\fill[opacity = 0.2] ($(\eps-\coordoffset,-\one) - (Q)$) rectangle ($(Q) + (0,\one)$);
	
	\coordinate (null) at ($(\eps-\coordoffset,-\one) - (Q)$);

	\draw[->](null |- 52,-\one-\desoffset) -- ($(null |- 52,\one+\desoffset) + (0,\coordoffset)$);
	\draw[->](null |- 52,-\one-\desoffset) -- ($(Q |- 52,-\one-\desoffset) + (\coordoffset,0)$);

	\foreach \i in {1,...,\n}
	{
		\coordinate (a\i) at ({a(\i)},\one);
		\coordinate (b\i) at ({b(\i)},\one);
		\draw[dotted] (a\i) -- (b\i);
	}
	
	\foreach \i in {1,...,\n}
	{
		\draw[decoration={brace,raise=3pt},decorate] (a\i) -- node[midway,above=2pt] {\footnotesize $\lambda$} (b\i);
	}
	
	\foreach \i in {2,...,\n}
	{
		\draw[decoration={brace,raise=3pt},decorate] (b\i) -- node[midway,above=2pt] {\footnotesize $\beta$}($ (b\i) + (\b,0) $);
	}
	
	\draw[decoration={brace,raise=3pt},decorate] (b1) -- node[midway,above=2pt] {\footnotesize $\varepsilon$}({\del+\d},\one);
	
	\draw[decoration={brace,raise=13pt,mirror},decorate] (P) -- node[midway,below=13pt] {\footnotesize $\delta+\delta'$}(Q);
	
	\draw[decoration={brace,raise=2pt,mirror},decorate] (P) -- node[midway,above=2pt] {\footnotesize $\delta$}($(0,\one) - (a1)$);
	
	\draw (P) -- (a1);
	\draw (P) -- (b1);
	
	\foreach \i in {1,...,\n}
	{
		
		\coordinate (x\i) at ({x(\i)},{(\one*x(\i))/(\del)});
		\coordinate (y\i) at ({y(\i)},{(\one*y(\i))/(\del+\l)});
		\draw[->,red] (y\i) -- (x\i);
		\draw[dotted] (a\i) -- (Q);
		\draw[dotted] (b\i) -- (Q);
		\draw (Q) -- (intersection of a\i--Q and P--a1);
		\draw (Q) -- (intersection of b\i--Q and P--b1);

		\draw[opacity=0.5,dashed](x\i |- 52,\one+1.5*\desoffset) -- (x\i |- 52,-\one-\desoffset);
		\node[below,text height=0.7em] at (x\i |- 52,-\one-\desoffset) {$d_\i$};
		\draw[opacity=0.5,dashed](y\i |- 52,\one+1.5*\desoffset) -- (y\i |- 52,-\one-\desoffset);
		\node[below,text height=0.7em] at (y\i |- 52,-\one-\desoffset) {$c_\i$};
		
		\draw[->](y\i |- 52,\one+1.5*\desoffset) -- (x\i |- 52,\one+1.5*\desoffset) node [midway, above] {$e_\i$};
	}
	
	\coordinate (s1) at ($(0,0) - (b1)$);
	\coordinate (s2) at ($(0,0) - (a1)$);
	
	\draw[opacity=0.5,dashed](s1 |- 52,\one+1.5*\desoffset) -- (s1 |- 52,-\one-\desoffset);
	\node[below,text height=0.7em] at (s1 |- 52,-\one-\desoffset) {$b^{i-1}_*$};
	\draw[opacity=0.5,dashed](s2 |- 52,\one+1.5*\desoffset) -- (s2 |- 52,-\one-\desoffset);
	\node[below,text height=0.7em] at (s2 |- 52,-\one-\desoffset) {$a^{i-1}_*$};
	\draw[->](s2 |- 52,\one+1.5*\desoffset) -- (s1 |- 52,\one+1.5*\desoffset) node [midway, above] {$e^{i-1}_*$};
	
	\draw (P) -- (s2);
	\draw (P) -- (s1);
	\draw[->,red] (s2) -- (s1);		
	\draw[decoration={brace,raise=3pt},decorate] (s2) -- node[midway,below=2pt] {\footnotesize $\lambda$} (s1);
	
	\draw[->,green!50!red] ($(s1) + (-\coordoffset,\one)$) -- ($(P) + (\twistl,0)$) --  ($(P) + (\twistl,\twisth)$) -- ($(P) + (-\twistl,\twisth)$) -- ($(P) - (\twistl,\twisth)$) -- ($(Q) - (0,\twisth)$);
	\draw[->,green!50!red]($(Q) + (0,\twisth)$) -- ($(Q) + (-\twistl,\twisth)$) -- ($(Q) - (\twistl,0)$)-- (Q);
	
	\def\portalx{0-(b(1))}	
	\def\porbuff{\one+0.6}
	\def\porinc{0.2}
	\def\pordotbuff{0.6}
	
	\foreach \i [count=\j from 1] in {2,...,\n}
	{
		\coordinate (p\i) at ({\portalx},{\porbuff + (\n-\j)*\porinc});
		\coordinate (q\i) at ({\portalx},{0-(\porbuff + (\n-\j+1)*\porinc)});
		\coordinate (w\i) at ($(p\i) - ({\pordotbuff+(\n-\j)*\porinc},0)$);
		\draw[->,blue,rounded corners=\roun] (x\j) -- (x\j|-p\i) -- (p\i);
		\draw[->,blue,dashed,rounded corners=\roun] (p\i) -- (w\i) -- (w\i |- q\i) -- (q\i);
		\draw[->,blue,rounded corners=\roun] (q\i) -- (y\i|-q\i) -- (y\i);
	}
	
	\coordinate (p1) at ({\portalx},{\porbuff+(\n)*\porinc)});
	\coordinate (q0) at ({(\portalx)-(\pordotbuff)-(\n*\porinc)},{0-(\porbuff+\porinc)});
	\coordinate (q-1) at ({(\portalx)-(\pordotbuff)-(\n*\porinc)},{0-(\porbuff)});
	\draw[->,blue,rounded corners=\roun] (q0) -- (s2|-q0) -- (s2);
	\draw[->,blue,rounded corners=\roun] (s1) -- (s1|-q-1) -- (y1|-q-1) -- (y1);
	\draw[->,blue,rounded corners=\roun] (x\n) -- (x\n|-p1) -- (Q|-p1);
	
	\end{tikzpicture}}
	\makebox[\textwidth][c]{\begin{tikzpicture}

	\def\one{2.5}
	\def\l{0.75}
	\def\b{2}
	\def\n{3} 
	\def\roun{2mm}
	\def\desoffset{1.5}
	\def\coordoffset{0.5}
	
	\def\twisth{0.075}
	\def\twistl{0.75}

	\def\valscale{0.3}
	
	\def\n{3} 
	\def\values{{2.0},{1.0},{0.0}} 
	\def\maxval{2} 
	
	\def\eps{1.0}
	\def\d{\l+\eps}
	\def\del{(\n-1)*(\l+\b)-(\d)}
	
	\def\poff{\del+\d}

	\pgfmathdeclarefunction{f}{1}{%
	    \pgfmathparse{\poff+\eps+(#1-1.0)*(\b+\l)}%
	}
	\pgfmathdeclarefunction{g}{1}{%
 	   \pgfmathparse{\poff+\eps+(#1-1.0)*(\b+\l)+\l}%
	}
	
	\coordinate (P) at ({\poff},0);
	\coordinate (Q) at ({\d+\valscale*\maxval+\del+\poff},0);
	\coordinate (R) at ($(Q) + ({\del + \eps + \l + \valscale*\maxval},0)$);

	\draw[opacity=0.5,dashed](P |- 52,\one+1.5*\desoffset) -- (P |- 52,-\one-\desoffset);
	\node[below,text height=0.7em] at (P |- 52,-\one-\desoffset) {$p_2$};
	\draw[opacity=0.5,dashed](Q |- 52,\one+1.5*\desoffset) -- (Q |- 52,-\one-\desoffset);
	\node[below,text height=0.7em] at (Q |- 52,-\one-\desoffset) {$p_3$};
	\draw[opacity=0.5,dashed](R |- 52,\one+1.5*\desoffset) -- (R |- 52,-\one-\desoffset);
	\node[below,text height=0.7em] at (R |- 52,-\one-\desoffset) {$p_4$};
	
	\coordinate (t1) at ($(Q) + ({\del},{\one})$);
	\coordinate (t2) at ($(Q) + ({\del+\l+\valscale*\maxval},{\one})$);

	\fill[opacity = 0.2] ($(P) + (0,\one)$) rectangle ($(R) + (R) - (t1)$);
	
	\foreach \i in {1,...,\n}
	{
		\coordinate (c\i) at ({f(\i)},-\one);
		\coordinate (d\i) at ({g(\i)},-\one);
		\draw[dotted] (c\i) -- (d\i);
		\draw[dotted] (P) -- (d\i);
		\draw[dotted] (c\i) -- (P);
	}
	
	\foreach \v [count=\i] in \values
	{
		\coordinate (r\i) at ($(c1) + ({\valscale*(\maxval-\v)},0)$);
		\coordinate (s\i) at ($(d1) + ({\valscale*(\maxval-\v)},0)$);
		\draw[dashed] (s\i) -- ($(Q) + (Q) - (s\i)$);
		\draw[dashed] (r\i) -- (Q);
	}
	
	\foreach \i in {1,...,\n}
	{
		\coordinate (xs\i) at (intersection of c\i--P and r\i--Q);
		\coordinate (ys\i) at (intersection of d\i--P and s\i--Q);
		\draw (P) -- (xs\i) -- (Q);
		\draw (P) -- (ys\i) -- (Q);
		\draw[->,red] (ys\i) -- (xs\i);

		\draw[opacity=0.5,dashed](xs\i |- 52,\one+1.5*\desoffset) -- (xs\i |- 52,-\one-\desoffset);
		\node[below,text height=0.7em] at (xs\i |- 52,-\one-\desoffset) {$d'_\i$};
		\draw[opacity=0.5,dashed](ys\i |- 52,\one+1.5*\desoffset) -- (ys\i |- 52,-\one-\desoffset);
		\node[below,text height=0.7em] at (ys\i |- 52,-\one-\desoffset) {$c'_\i$};
		
		\draw[->](ys\i |- 52,\one+1.5*\desoffset) -- (xs\i |- 52,\one+1.5*\desoffset) node [midway, above] {$e'_\i$};
	}
	
	\draw (Q) -- (t1);
	\draw (Q) -- (t2);
	\draw[->,red] (t2) -- (t1);

	\draw[opacity=0.5,dashed](t2 |- 52,\one+1.5*\desoffset) -- (t2 |- 52,-\one-\desoffset);
	\node[below,text height=0.7em] at (t2 |- 52,-\one-\desoffset) {$\overline{a}$};
	\draw[opacity=0.5,dashed](t1 |- 52,\one+1.5*\desoffset) -- (t1 |- 52,-\one-\desoffset);
	\node[below,text height=0.7em] at (t1 |- 52,-\one-\desoffset) {$\overline{b}$};
	\draw[->](t2 |- 52,\one+1.5*\desoffset) -- (t1 |- 52,\one+1.5*\desoffset) node [midway, above] {$e'$};
	
	
	\draw[decoration={brace,raise=3pt},decorate] (R |- 52,-\one) -- (t2 |- 52,-\one)  node[midway,below=2pt] {\footnotesize $\varepsilon$}({\poff},-\one);
	
	\foreach \i in {1,...,\n}
	{
		\draw[decoration={brace,raise=3pt,mirror},decorate] (c\i) -- node[midway,below=2pt] {\footnotesize $\lambda$} (d\i);
	}
	
	\foreach \i in {2,...,\n}
	{
		\draw[decoration={brace,raise=3pt},decorate] (c\i) -- node[midway,below=2pt] {\footnotesize $\beta$}($ (c\i) - (\b,0) $);
	}
	
	\draw[decoration={brace,raise=3pt},decorate] (c1) -- node[midway,below=2pt] {\footnotesize $\varepsilon$}({\poff},-\one);
	
	\draw[decoration={brace,raise=13pt},decorate] (P) -- node[midway,above=13pt] {\footnotesize $\delta'+\delta+\gamma\max S$}(Q);
	
	\draw[decoration={brace,raise=2pt},decorate] ($(Q)-(0,-\one)$) -- node[midway,above=2pt] {\footnotesize $\delta$}(t1);
	
	\draw[decoration={brace,raise=2pt},decorate] (t1) -- node[midway,above=2pt] {\footnotesize $\gamma\max S+\lambda$}(t2);
	
	
	\def\porbuff{\one+0.6}
	\def\porinc{0.2}
	\def\pordotbuff{0.6}	
	
	\coordinate (ep1) at ({\poff-\pordotbuff},{(\porbuff) + (\n)*(\porinc)});
	\coordinate (ep2) at ({\poff-\pordotbuff},{(\porbuff) + (\n-1)*(\porinc)});
	\coordinate (ep3) at ({\poff-\pordotbuff},{(\porbuff) + (\n-2)*(\porinc)});
	
	\coordinate (exitx) at ($(t2) + (\porinc,0)$);
	
	\draw[->,blue,rounded corners=\roun] (ep1) -- (ys\n|-ep1) -- (ys\n);
	\foreach\i [count=\j] in {2,...,\n}
	{
		\draw[->,blue,rounded corners=\roun] (xs\i) -- (ys\j);
	}
	
	\coordinate (qs1) at ({\poff-\pordotbuff+3*\porinc},{0-(\porbuff)});
	\coordinate (qs2) at ({\poff-\pordotbuff+\porinc},{0-(\porbuff)});
	
	\draw[->,blue,rounded corners=\roun] (xs1) -- (xs1|-qs1) -- (qs1);
	\draw[->,dashed,blue,rounded corners=\roun] (qs1) -- (qs2) -- (qs2|-ep2) -- (qs1|-ep2);
	\draw[->,blue,rounded corners=\roun] (qs1|-ep2) -- (t2|-ep2) -- (t2);
	\draw[->,blue,rounded corners=\roun] (t1) -- (t1|-ep3) -- (exitx|-ep3);
	
	\coordinate (v1) at ($(R) + (R) - (t1)$);
	\coordinate (v2) at ($(R) + (R) - (t2)$);
	\draw[->,red] (v1) -- (v2);
	\draw (v1) -- (t1);
	\draw (v2) -- (t2);
	
	\draw[->] (P |- 52,-\one-\desoffset) -- ($(v1 |- 52,-\one-\desoffset) + (\coordoffset,0)$);

	\draw[opacity=0.5,dashed](v2 |- 52,\one+1.5*\desoffset) -- (v2 |- 52,-\one-\desoffset);
	\node[below,text height=0.7em] at (v1 |- 52,-\one-\desoffset) {$a^i_*$};
	\draw[opacity=0.5,dashed](v1 |- 52,\one+1.5*\desoffset) -- (v1 |- 52,-\one-\desoffset);
	\node[below,text height=0.7em] at (v2 |- 52,-\one-\desoffset) {$b^i_*$};
	\draw[->](v1 |- 52,\one+1.5*\desoffset) -- (v2 |- 52,\one+1.5*\desoffset) node [midway, above] {$e^i_*$};
	
	\draw[->,green!50!red]($(P) - (0,\twisth)$) -- ($(P) + (\twistl,-\twisth)$) -- ($(P) + (\twistl,\twisth)$)-- ($(P) + (0,\twisth)$);
	\draw[->,green!50!red] (P) -- ($(Q) + (\twistl,0)$) -- ($(Q) + (\twistl,\twisth)$) -- ($(Q) + (-\twistl,\twisth)$) -- ($(Q) - (\twistl,\twisth)$) -- ($(R) + (\twistl,-\twisth)$) --($(R) + (\twistl,\twisth)$) -- ($(R) + (-\twistl,\twisth)$) --($(R) - (\twistl,0)$) -- ($(v1) + (0,\one)$);
	
	\draw[decoration={brace,raise=3pt},decorate] (v2 |- 52,-\one) -- (R |- 52,-\one)  node[midway,below=2pt] {\footnotesize $\varepsilon$}({\poff},-\one);

\end{tikzpicture}}
	\caption{Construction of the encoding gadget. Mirror edges are red, connector edges blue and the target curve is green. Projection cones are black.}
	\label{hardness:splitgadget}
\end{figure}

\begin{table}[h]%
	\centering
		\begin{tabular}{ |c l l| } %
			\hline
			Step 1: & $\ttt{p}_1=\ttt{a}^{i-1}_*+(\delta,1)$ &  \\ 
			& $\ttt{d}_1=\overline{\ttt{a}^{i-1}_*\ \ttt{p}_1}\cap H_1$ & \\ 
			& $\ttt{c}_1=\overline{\ttt{b}^{i-1}_*\ \ttt{p}_1}\cap H_1$ & \\ 
			\hline
			Step 2: & $\ttt{p}_2=\ttt{p}_1 + (\delta+\delta',0)$ & \\ 
			& $\ttt{d}_j=\overline{\ttt{a}^{i-1}_*\ \ttt{p}_1}\cap \overline{(\ttt{d}_1-\ttt{s}_j)\ \ttt{p}_2}$ &with $\ttt{s}_j=((j-1)(\beta+\lambda) , 0)$ \\  
			& $\ttt{c}_j=\overline{\ttt{b}\ \ttt{p}_1}\cap \overline{(\ttt{c}_1-s_j)\ \ttt{p}_1}$ &\\
			\hline
			Step 3: & $\ttt{p}_3=\ttt{p}_2 + (\delta+\delta'+\gamma\max S_i,0)$ & \\ 
			& $\ttt{c}'_1=\overline{\ttt{d}_1\ \ttt{p}_2}\cap H_{-1}$ & \\ 
			& $\ttt{d}'_1=\overline{\ttt{c}_1\ \ttt{p}_2}\cap H_{-1}$ & \\ 
			& $\overline{\ttt{b}}=(\overline{\ttt{c}'_1\ \ttt{p}_3}\cap H_{1}) - (\gamma\max S_i,0)$ & \\ 
			& $\overline{\ttt{a}}=\overline{\ttt{d}'_1\ \ttt{p}_3}\cap H_{1}$ &\\ 
			& $\ttt{c}'_j = \overline{\ttt{d}_i\ \ttt{p}_2}\cap\overline{\overline{\ttt{b}}_i\ \ttt{p}_3}$&with $\overline{\ttt{b}}_i = \overline{\ttt{b}} + \gamma(\max S_i - s_{i,j},0)$\\
			& $\ttt{d}'_j = \overline{\ttt{c}_i\ \ttt{p}_2}\cap\overline{\overline{\ttt{a}}_i\ \ttt{p}_3}$&with $\overline{\ttt{a}}_i = \overline{\ttt{a}} - \gamma(s_{i,j},0)$\\
			\hline
			Step 4: & $\ttt{p}_4=\ttt{p}_3+(\delta + \gamma\max S_i + \delta',0)$ & \\ 
			& $\ttt{b}^i_*=\overline{\overline{\ttt{a}}\ \ttt{p}_4}\cap H_{-1}$ & \\ 
			& $\ttt{a}^i_*=\overline{\overline{\ttt{b}}\ \ttt{p}_4}\cap H_{-1}$ & \\ 
			\hline
		\end{tabular}%
		\caption{Precise construction of the $i$th encoding gadget. Index $i$ is omitted in most cases.}%
		\label{hardness:thetable}%
	\end{table}%
The overall structure of a gadget $g_i$ for some $1\leq i\leq k$ is depicted in Figure~\ref{hardness:splitgadget}.
This gadget will encode the $i$th table $S_i = \{s_{i,1},\ldots,s_{i,n}\}$ of the $k$-Table-SUM instance.
The construction of the precise values is given in Table~\ref{hardness:thetable}.
As for the parameters, $\lambda^i$ is the length of the entry edge, determined by the previous gadget $g_{i-1}$, and $\beta$ is the global spacing parameter.
The parameters $\delta^i$ and $\delta'^i$ are auxiliary parameters, with ${\delta'^i = \lambda^i + \varepsilon}$ and ${\delta^i = \max(2\varepsilon +1 , (n-1)(\lambda^i + \beta) - \delta'^i)}$. 
Excluding the entry edge the base curve $B_i$ consists of $2n+2$ mirror edges and $\cO(n)$ connector edges.
For $1\leq j\leq n$ the first $n$ mirror edges $e^i_j$ are defined by $\textrm{c}^i_j$ and $\textrm{d}^i_j$, and the second $n$ mirror edges $e'^i_j$ are defined by $\textrm{c}'^i_j$ and $\textrm{d}'^i_j$.
The last two mirror edges are defined by $\bar{\textrm{a}}^i$ and $\bar{\ttt{b}}^i$, and $\ttt{a}^i_*$ and $\ttt{b}^i_*$.
All of these mirror edges lie in either $H_{\geq 1/2}^{\leq 1}$ or $H_{\geq -1}^{\leq -1/2}$ by construction.
The target curve $T_i$ has four twists centred at $\ttt{p}^i_1,\ldots,\ttt{p}^i_4$.
Since the index $i$ will not change other than for the entry and exit edge, we will omit these indices in the construction of this gadget.

The intuition behind the construction is as follows:
The first two steps place the first projection point at a distance from the entry edge, such that $n$ copies of the entry edge fit into the \textit{projection cone}.
By projection cone we denote the cone we get, when projecting the edge through a projection centre.
The edges must satisfy further constraints, namely that all of the edges lie in $H_{\geq 1/2}^{\leq 1}$ and they have sufficient distance in $x$-direction.

These edges offer the choice, which item should be taken.
Step $3$ places an edge $e'$ from $\bar{\ttt{a}}$ to $\bar{\ttt{b}}$, where all the diverging paths have to meet, and then places $n$ copies of the entry edge in the $n$ disjunct projection cones such that their projections onto $e'$ have a relative offset according to the values in the list.
Step $4$ defines the entry edge to the next gadget, since we have to mirror the data once more to not introduce sign errors, due to every focal point 'flipping' the 'image' i.e. the values, like a pinhole camera would.
The edges in Step $3$ and $4$ are used to recombine the diverging paths making sure that the offset between the paths corresponds to the value of the items in the list $S_i$.

A shortcut curve traversing this gadget will look as follows. 
A shortcut curve reaches some point in the entry edge $e^{i-1}_*$.
From here it takes a shortcut to some $e^i_j$.
The next shortcuts are forced to land on $e'^i_j$, then $e'^i$ and finally $e^i_*$.


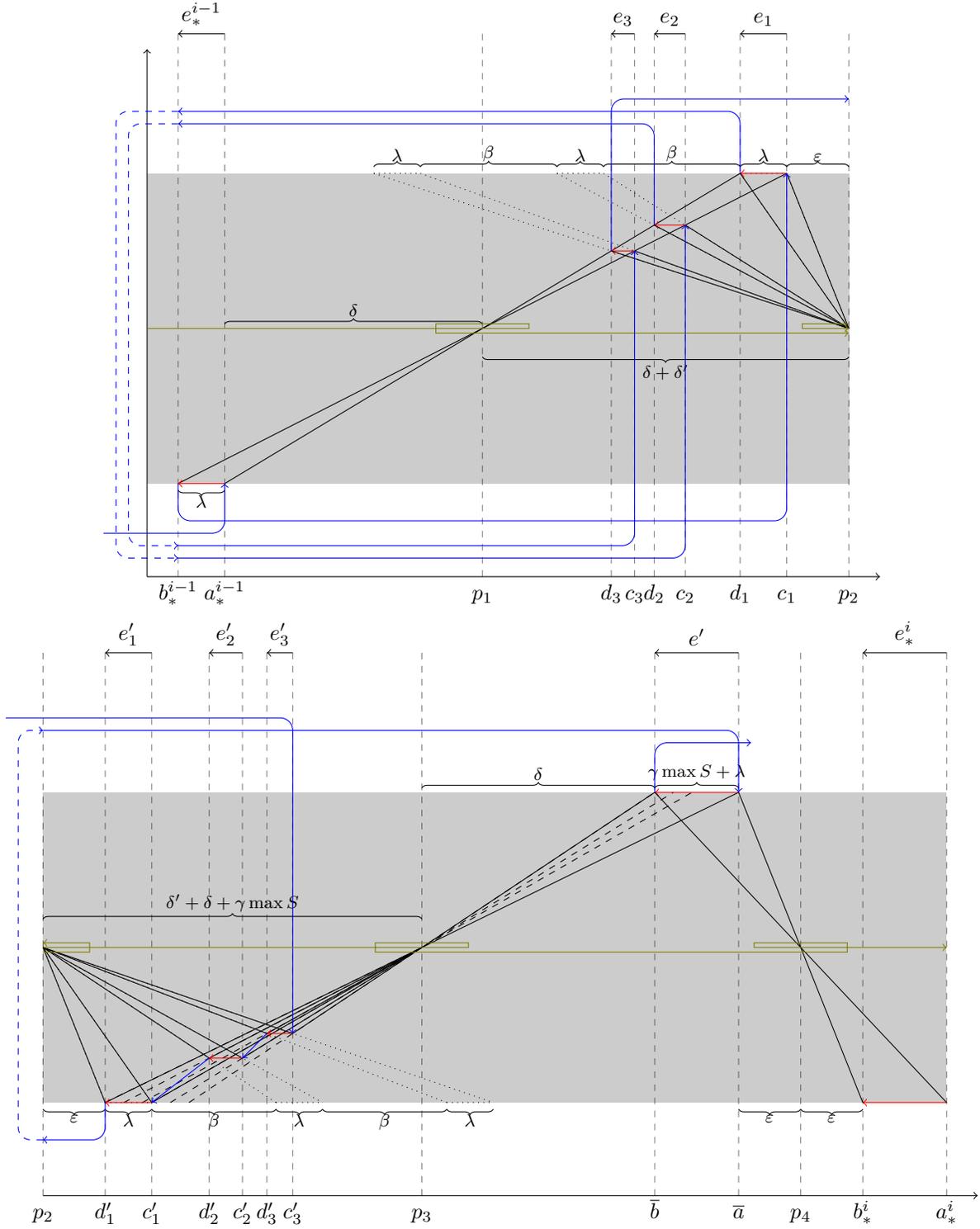
\begin{figure}
	\centering
				    \begin{tikzpicture}
\def\one{2.0}
\def\l{2.0}
\def\eps{2.0}

\def\poreps{0.5}
\def\porinc{0.2}

\def\twisth{0.075}
\def\twistl{0.75}

\coordinate (P) at ({\l+\eps},0);
\coordinate (s) at ({\l/2.0},-\one);
\coordinate (t) at ({2*\eps+1.5*\l},\one);
\coordinate (a1) at (0,-\one);
\coordinate (b1) at (\l,-\one);

\begin{scope}
\clip  (t) rectangle ($(t) + (\one,-2*\one)$);
\fill [opacity = 0.2] ($(t) - (0,\one)$) circle(\one);
\end{scope}

\fill [opacity = 0.2] (0,-\one) rectangle (t);

\draw[->,red] (b1) -- (a1);

\draw[->,blue,rounded corners=5pt] (a1) -- ($(a1) - (0,\poreps)$) -- ($(t) + (\one+\poreps,-2*\one - \poreps)$) -- ($(t) + (\one+\poreps,+ \poreps)$) -- ($(t) + (0,+ \poreps)$) -- (t);

\draw[->,blue,rounded corners=5pt] ($(a1) + (-\porinc,-\poreps - \porinc)$) -- ($(b1) + (0,-\poreps - \porinc)$) -- (b1);


\draw[->,green!50!red] (0,0) -- ($(P) + (\twistl,0)$) --  ($(P) + (\twistl,\twisth)$) -- ($(P) + (-\twistl,\twisth)$) -- ($(P) - (\twistl,\twisth)$) -- ($(t) - (0,\one+\twisth)$);

\draw[decoration={brace,raise=3pt,mirror},decorate] (0,-\one) -- node[midway,below=2pt] {\footnotesize $\lambda$} (\l,-\one);

\draw[decoration={brace,raise=3pt,mirror},decorate] (\l,-\one) -- node[midway,below=2pt] {\footnotesize $\varepsilon$} ({\l+\eps},-\one);

\draw[decoration={brace,raise=3pt,mirror},decorate] ({\l+\eps},-\one) -- node[midway,below=2pt] {\footnotesize $\varepsilon+\lambda-\gamma(\sigma+1)$} ({2*\eps+1.5*\l},-\one);

\draw [<->,thick] (0,\one+2*\poreps) node (yaxis) [above] {$y$}
|- ($(t) + (\one+\poreps,-2*\one - 2*\poreps)$) node (xaxis) [right] {$x$};

\draw [dashed] (t |- 52,\one+2*\poreps) -- (t |- 52,-\one-2*\poreps);
\draw [dashed] (P |- 52,\one+2*\poreps) -- (P |- 52,-\one-2*\poreps);
\draw [dashed] (\l,-\one |- 52,\one+2*\poreps) -- (\l,-\one |- 52,-\one-2*\poreps);
\draw[->]  (\l,-\one |- 52,\one+2*\poreps) -- (0,0 |- 52,\one+2*\poreps) node [midway, above] {$e^k_*$};

\node[below] at (t |- 52,-\one-2*\poreps) {$t$};
\node[below] at  (P |- 52,-\one-2*\poreps) {$p$};
\node[below] at  (\l,-\one |- 52,-\one-2*\poreps) {$a^k_*$};
\node[below] at  (0,0 |- 52,-\one-2*\poreps) {$b^k_*$};

\end{tikzpicture}
	\caption{Construction of the Terminal-Gadget. Mirror edges are red, connector edges are blue and the target curve is green.}
	\label{hardness:terminal}
\end{figure}

\subsubsection{Terminal gadget $g_{k+1}$}
The terminal gadget $g_{k+1}$ is the dual to the initialization gadget (refer to Figure \ref{hardness:terminal}). 
The entry edge from $(b^k_*+\lambda^k,-1)$ to $(b^k_*,-1)$ is defined by the previous gadget.
The target curve $T_{k+1}$ has a single twist at $(b^k_*+\lambda+\varepsilon,0)$ and ends at $(b^k_*+2\lambda+2\varepsilon-\gamma(\sigma + 1),0)$.
The base curve $B_{k+1}$ connects the entry edge to $(b^k_*+2\lambda+2\varepsilon-\gamma(\sigma + 1),1)$ from outside the hippodrome.
The final vertex $B(1)$ of the base curve is placed such that a shortcut from the entry edge $e^k_*$ has to start precisely at $x$-coordinate $b^k_* + \gamma(\sigma +1)$ to hit the vertex.


\subsection{Correctness}
We now want to argue that this construction is correct.
That is, there exists a feasible shortcut curve with $(4k+2)$ shortcuts if and only if the original $k$-Table-SUM instance has a solution.
We begin by showing this for a subset of shortcut curves we call \textit{one-touch}. For general shortcut curves this will be shown in Section~\ref{hardness:general}. These one-touch shortcut curves consist of only shortcuts and will never take subcurves of the base curve $B$.
In the following section we often have to argue with distances that get preserved, when getting projected through a projection point. This argument is captured in the following observation.
\begin{observation}\label{hardness:congruent}
	If an edge lies on some $H_{-\beta}$ with length $\lambda$, and some point $p$ on $H_0$ is given, we can then project the edge through $p$ onto some $H_{\alpha}$ of length $\lambda'$. This forms two congruent triangles such that $\lambda'=\frac{\alpha\lambda}{\beta}$. Refer to $e^{i-1}_*$ and $e_3$ in Figure \ref{hardness:splitgadget} as an example.
\end{observation}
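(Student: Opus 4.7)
The plan is to use elementary similar-triangle geometry, exploiting that both the original edge and its projection lie on horizontal lines parallel to the $x$-axis, while the projection center $p$ lies on the line $H_0$ between them (or to one side, if $\alpha$ and $-\beta$ have the same sign, in which case the argument is the same up to signs).

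First, I would set up notation: let $A$ and $B$ denote the endpoints of the given edge on $H_{-\beta}$, so that $\|A-B\| = \lambda$. For a point $X \in \overline{AB}$, define its projection through $p$ as the intersection of the ray from $p$ through $X$ (extended beyond $p$ if $\alpha$ and $-\beta$ have opposite signs, or restricted to one side otherwise) with the horizontal line $H_\alpha$. Let $A'$ and $B'$ denote the projections of $A$ and $B$. Since projection through a fixed point is a central projection and the source and target lines are parallel, the image $\overline{A'B'}$ is a line segment on $H_\alpha$.

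Next, I would argue that the two triangles $\triangle p A B$ and $\triangle p A' B'$ are similar (not congruent, as literally stated, unless $|\alpha|=|\beta|$; I read the paper's wording as a mild abuse meaning ``related by a similarity''). Indeed, the two triangles share the apex angle at $p$, and the opposite sides $\overline{AB}$ and $\overline{A'B'}$ lie on parallel lines (both horizontal). The perpendicular distances from $p$ to these two parallel lines are $\beta$ and $\alpha$ respectively. Hence the similarity ratio between $\triangle p A' B'$ and $\triangle p A B$ equals $\alpha / \beta$.

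Finally, applying this similarity ratio to the corresponding sides $\overline{AB}$ and $\overline{A'B'}$ yields
\[
\lambda' \;=\; \|A' - B'\| \;=\; \frac{\alpha}{\beta}\,\|A-B\| \;=\; \frac{\alpha \lambda}{\beta},
\]
which is the desired identity. There is no real obstacle here: the only subtlety is bookkeeping the sign/orientation when $\alpha$ and $-\beta$ are on opposite sides of $H_0$, so that the projected endpoints appear in reversed order on $H_\alpha$; this affects the parametrization of the projected segment but not its length, so the formula for $\lambda'$ is unaffected.
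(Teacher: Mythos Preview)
Your argument is correct: this is a straightforward similar-triangles computation, and you have also correctly flagged that ``congruent'' in the statement should really be ``similar'' (the triangles are congruent only when $|\alpha|=|\beta|$). The paper does not give a proof of this observation at all---it is stated as self-evident with only a reference to the figure---so there is nothing further to compare.
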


\begin{wrapfigure}[13]{r}{0cm}
    \includegraphics[scale=0.5]{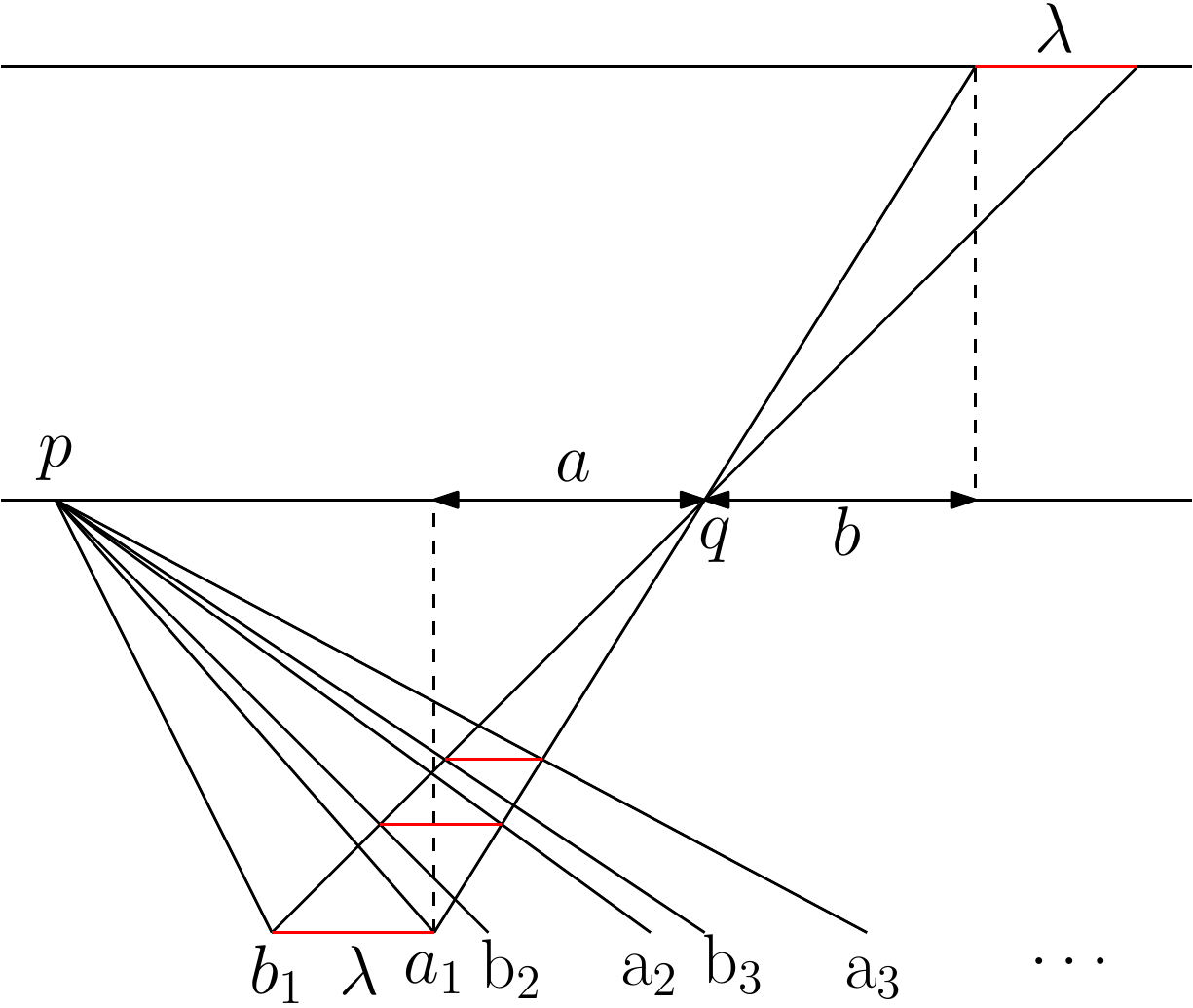}
\end{wrapfigure}
In the construction of the encoding gadgets we have many different instances of these congruent triangles.
Here we carefully adjust the distances of the edges so that the whole gadget has certain properties.
These adjustments can be thought of as shifting one triangle or cone.
These modifications are captured in the following observation.
\begin{observation}\label{hardness:coneshifter}
	Consider the set-up in the image to the right. We have $n$ cones starting in point $p$ going to points $\ttt{a}_i$ and $\ttt{b}_i$ each, which are $\lambda$ apart.
	Another cone is placed at some distance parametrized by $a=b\geq 0$ from $q$ to $\ttt{a}_1$ and $\ttt{b}_1$. This cone intersects the $n$ previous cones and forms $n$ edges (in red).
	Like in the construction of the encoding-gadget we will call the endpoints of these edges $\ttt{c}'_i$ and $\ttt{d}'_i$. There are two modifications we want to look at:
	Firstly increasing $a$ and $b$ equally at the same time.
	This modifies the edges such that for the resulting points $\ttt{c}^*_i$ and $\ttt{d}^*_i$ it holds that $d^*_i-c^*_{i-1} > d'_i-c'_{i-1}$.
	The second modification consists of increasing $a$ but not $b$.
	This simply scales the whole instance such that $d^*_i-c^*_{i-1} = s(d'_i-c'_{i-1})$, for some $s>1$.
\end{observation}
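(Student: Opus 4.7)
}

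The plan is to set up coordinates, express the points $\ttt{c}'_i, \ttt{d}'_i$ as explicit rational functions of the parameters $a$ and $b$ using similar triangles, and then read off both claims by a monotonicity / scaling argument on those expressions.

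First I would place $p$ and $q$ on a common horizontal line $H_0$ at positions $(x_p,0)$ and $(x_q,0)$, put the row of points $\ttt{a}_i,\ttt{b}_i$ on a horizontal line $H_{-h}$ at signed vertical distance $h$ from $p$, and put the horizontal line $H_{h'}$ carrying the red segments at vertical distance $h'$ on the opposite side of $p$. The parameters $a,b$ control the horizontal positions of $\ttt{a}_1,\ttt{b}_1$ (symmetric shift corresponds to $a=b$, asymmetric to $a \neq b$), while $\ttt{a}_i,\ttt{b}_i$ for $i\geq 2$ are obtained by translating $\ttt{a}_1,\ttt{b}_1$ horizontally by a fixed amount determined by the spacing $\lambda$ and inter-edge gap. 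For each $i$, the point $\ttt{c}'_i$ is the intersection of the line through $p$ and $\ttt{a}_i$ with the line through $q$ and $\ttt{a}_1$, and similarly for $\ttt{d}'_i$. By Observation \ref{hardness:congruent} each such intersection is given by a linear combination with coefficients depending only on the ratios of signed vertical distances and on the horizontal positions, so $\ttt{c}'_i$ and $\ttt{d}'_i$ are affine-linear in $a,b$.

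For the first modification, I would fix the shift $a=b=:t$ and compute the horizontal position of the intersection of the $p$-cone through $\ttt{a}_i$ (a fixed line, independent of $t$) with the $q$-cone through $\ttt{a}_1$ (a line whose slope depends on $t$). Differentiating the gap $d^*_i - c^*_{i-1}$ with respect to $t$ should give a strictly positive quantity: intuitively, increasing $t$ tilts the $q$-cone so that its projection onto $H_{h'}$ becomes shorter, which in turn means the fixed $p$-images $\ttt{c}'_i,\ttt{d}'_i$ are squeezed relative to each other only by the $q$-cone, while the absolute separation of consecutive intersection points grows because the $p$-cones are fanned out more with respect to the $q$-cone's compressed image. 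I expect this monotonicity to fall out cleanly from the closed-form expression.

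For the second modification, I would argue that increasing $a$ while fixing $b$ is essentially a horizontal shear applied to the $q$-cone, and the composition of the two cones then scales the horizontal extent of the intersection pattern by a factor $s$ that depends only on the ratio of the new and old $q$-cone spreads. Since all $\ttt{c}^*_i,\ttt{d}^*_i$ lie on the same line $H_{h'}$ and are obtained by intersecting the same fan of $p$-lines with a uniformly scaled $q$-cone, every consecutive gap $d^*_i - c^*_{i-1}$ scales by the same factor $s > 1$.

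The main obstacle will be bookkeeping the signs and directions consistently: because cones flip orientation when passing through a focal point and because $a$ and $b$ are measured in opposite senses in the figure, it is easy to get a sign wrong in the derivative computation for part 1. To guard against this, I would first verify both claims on the degenerate case $n=2$ by hand, and then extend to general $n$ by noting that the argument for any pair of consecutive edges is the same as for the first pair.
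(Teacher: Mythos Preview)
The paper does not prove this observation at all; it is stated as a self-evident geometric fact accompanying a figure and is then invoked (without further justification) inside the proof of Lemma~\ref{hardness:big4}. So there is nothing to compare against, and your plan is already strictly more than the paper provides.

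Your coordinate-and-compute strategy is the right one, and in fact the rational expressions you would obtain are exactly of the form the paper later writes down in the proof of Lemma~\ref{hardness:big4}, e.g.\ $d_j = \delta(\delta+\delta')/(\delta+\delta'+(j-1)(\lambda+\beta))$. Two small corrections to your outline. First, the intersection points $\ttt{c}'_i,\ttt{d}'_i$ are \emph{not} affine-linear in $a,b$ as you write midway through; they are genuinely rational, as you said at the top. This matters because your plan for the first modification is to differentiate the gap $d^*_i-c^*_{i-1}$ in $t=a=b$ and check a sign, and with rational expressions that derivative is a quotient-rule computation rather than a constant---still routine, but your paragraph currently leans on intuition (``I expect this monotonicity to fall out cleanly'') where an actual calculation is needed. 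Second, for the second modification, increasing $a$ while holding $b$ fixed does not shear the $q$-cone; in the figure's convention it moves the apex $q$ further from the fixed targets $\ttt{a}_1,\ttt{b}_1$, which is a central dilation of the $q$-cone. That is precisely why all gaps scale by a common factor $s>1$, so your conclusion is right even if the ``shear'' description is not.
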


\begin{definition}[One-touch encoding]\label{hardness:one-touch-encoding}
	Let $I={\iota_1,\ldots,\iota_k}$ be an index set of a $k$-Table-SUM instance. We construct a one-touch shortcut curve $B_I$ of the base curve incrementally.
	The first two vertices on the initial gadget are defined as follows.
	We choose the first vertex of the base curve $B(0)$ for $\ttt{v}_0^0$, then we project it through the first projection center $\ttt{p}^0$ onto $e^0_*$ to obtain $\ttt{v}_*^0$.
	Now for $1\leq i\leq k$ we project $\ttt{v}_*^{i-1}$ through $\ttt{p}^i_1$ to land on $e^i_{\iota_i}$ to obtain $\ttt{v}_1^i$.
	We continue by projecting $\ttt{v}_l^i$ through $\ttt{p}_l+1$ onto $B_i$ to obtain $\ttt{v}^i_{l+1}$ for $1\leq l \leq 3$ (refer to Figure~\ref{hardness:fig-bounce}).
	Since these projections are all forced, no choices have to be made.
	Let $\ttt{v}_*^i=\ttt{v}_4^i$.
	We continue this construction throughout all gadgets in order of $i$.
	Finally, we choose $B(1)$ as the last vertex of our shortcut curve.
\end{definition}
\begin{lemma}\label{hardness:lemma1}
	For any $1\leq i\leq k$ and $1\leq j\leq n$ let $\ttt{x}^i_j$ be leftmost point on $e^i_*$ reachable by projections starting on edge $e^i_j$. Then $x^i_j-b_*^i = \gamma s_{i,j}$.
\end{lemma}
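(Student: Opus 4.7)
The plan is to follow the rightmost endpoint $\ttt{c}^i_j$ of the edge $e^i_j$ through the three forced projections centered at $\ttt{p}^i_2$, $\ttt{p}^i_3$, and $\ttt{p}^i_4$ and show that its image on $e^i_*$ has $x$-coordinate $b^i_* + \gamma s_{i,j}$.

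First I would verify the underlying orientation claim. Each focal point $\ttt{p}^i_m$ for $m\in\{2,3,4\}$ lies on $H_0$ and the corresponding source and target mirror edges lie one on $H_1$ and one on $H_{-1}$. The central projection between two such horizontal lines through a point on $H_0$ is a $180^\circ$ rotation about that point, which acts on $x$-coordinates as the reflection $x\mapsto 2(p^i_m)_x-x$. The composition of three such reflections is an orientation-reversing affine map of $\bR$, so the leftmost image on $e^i_*$ is the image of the \emph{rightmost} endpoint of $e^i_j$, namely $\ttt{c}^i_j$, rather than of its leftmost endpoint $\ttt{d}^i_j$.

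The two intermediate images can then be read off directly from Table~\ref{hardness:thetable}. Step~3 defines $\ttt{d}'^i_j$ as the intersection of the lines $\overline{\ttt{c}^i_j\,\ttt{p}^i_2}$ and $\overline{\overline{\ttt{a}}^i_j\,\ttt{p}^i_3}$, where $\overline{\ttt{a}}^i_j = \overline{\ttt{a}}^i - (\gamma s_{i,j},0)$; hence the projection through $\ttt{p}^i_2$ sends $\ttt{c}^i_j$ to $\ttt{d}'^i_j$ on $e'^i_j$, and the projection through $\ttt{p}^i_3$ then sends $\ttt{d}'^i_j$ to $\overline{\ttt{a}}^i_j$ on $e'^i$. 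Step~4 defines $\ttt{b}^i_*$ as the intersection of $H_{-1}$ with the line through $\overline{\ttt{a}}^i$ and $\ttt{p}^i_4$, so $b^i_* = 2(p^i_4)_x - \overline{a}^i$. Applying the same reflection to $\overline{\ttt{a}}^i_j$ yields $2(p^i_4)_x - \overline{a}^i + \gamma s_{i,j} = b^i_* + \gamma s_{i,j}$, which is therefore the leftmost reachable point $\ttt{x}^i_j$ on $e^i_*$, proving the claim.

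I expect the only real obstacle to be bookkeeping rather than anything conceptual: the labels $\ttt{c}$ and $\ttt{d}$ swap roles as left/right endpoints after each projection, and one must confirm that the $\gamma s_{i,j}$ shift built into $\overline{\ttt{a}}^i_j$ is introduced at precisely the stage where the remaining chain of reflections transports it unchanged to $e^i_*$. Once the uniform reflection description of the projections is fixed, everything else follows by directly unfolding the definitions in Table~\ref{hardness:thetable}.
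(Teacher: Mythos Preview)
Your approach is essentially the paper's own: trace $\ttt{c}^i_j \to \ttt{d}'^i_j \to \overline{\ttt{a}}^i_j \to \ttt{x}^i_j$ via the definitions in Table~\ref{hardness:thetable}, then read off the offset $\gamma s_{i,j}$ from the last step. Your added justification for why the \emph{rightmost} endpoint $\ttt{c}^i_j$ yields the \emph{leftmost} image on $e^i_*$ is a useful detail the paper leaves implicit.

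One factual slip to fix: the intermediate mirror edges $e^i_j$ and $e'^i_j$ do \emph{not} lie on $H_1$ and $H_{-1}$; by construction (and as computed explicitly in Lemma~\ref{hardness:big4}) they lie at heights in $[\tfrac12,1]$ and $[-1,-\tfrac12]$ respectively. So the projections through $\ttt{p}^i_2$ and $\ttt{p}^i_3$ are not $180^\circ$ rotations and do not act on $x$-coordinates as $x\mapsto 2(p^i_m)_x-x$. Fortunately your argument does not actually use that formula for those two steps---you correctly appeal to the table definitions instead---and the orientation-reversing conclusion you draw still holds, since any central projection through a point of $H_0$ between horizontal lines on opposite sides reverses $x$-orientation regardless of their heights. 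The reflection formula \emph{is} valid at $\ttt{p}^i_4$, where source $e'^i\subset H_1$ and target $e^i_*\subset H_{-1}$, and that is the only place you invoke it. So the proof stands once you weaken the blanket claim about $H_{\pm1}$.
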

\begin{proof}
	This follows directly from the construction (refer to Figure~\ref{hardness:fig-bounce} and Table~\ref{hardness:thetable}) and repeated application of Observation \ref{hardness:congruent}.
	$x_j$ is determined by the projection of $\ttt{c}_j$ through $\ttt{p}_2$, which is $\ttt{d}'_j$. 
	Projecting this through $\ttt{p}_3$ lands on $\overline{\ttt{a}}_j$ which by another projection through $\ttt{p}_4$ lands on $\ttt{x}_j$.
	The offset between $\overline{\ttt{a}}_j$ and $\overline{\ttt{a}}$ is precisely the offset between $\ttt{x}_j$ and $\ttt{b}'$.
	And this offset is by construction $\gamma s_{i,j}$.
\end{proof}
\begin{lemma}\label{hardness:one-touch-induction}
	Given a shortcut curve $B_I$, which is a one-touch encoding, let $\ttt{v}^i_*$ be the vertex of $B_I$ on the entry-edges $e^i_*$ of gadgets $g_i$ for all applicable $i$. Then $||\ttt{v}^i_* - \ttt{b}^i_*|| = \gamma (\sigma_i +1)$, where $\sigma_i$ is the $i$th partial sum of the index set $I$ encoded by $B_I$.
\end{lemma}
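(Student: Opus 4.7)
My plan is to proceed by induction on $i$. For the base case $i = 0$, I would compute $v^0_*$ directly from the initialization gadget: the first shortcut leaves $B(0) = (0,1)$, is forced through the focal point $p^0 = (\gamma + \varepsilon, 0)$ by the twist at $p^0$, and lands on $H_{-1}$ at $v^0_* = (2\gamma + 2\varepsilon, -1)$. Since $b^0_* = (\gamma + 2\varepsilon, -1)$ and $\sigma_0 = 0$ by convention, this yields $\|v^0_* - b^0_*\| = \gamma = \gamma(\sigma_0 + 1)$.

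For the inductive step, assume $\|v^{i-1}_* - b^{i-1}_*\| = \gamma(\sigma_{i-1} + 1)$. The one-touch traversal of gadget $g_i$ along the $\iota_i$-th path consists of four shortcuts $v^{i-1}_* \to v^1_i \to v^2_i \to v^3_i \to v^i_*$, each forced through the corresponding focal point $p^i_1,\ldots,p^i_4$. The key observation is that each such projection is a central projection between two horizontal lines through a point on $y=0$; in $x$-coordinates it is an affine map with slope $a_{\ell+1}/a_\ell$, where $a_\ell, a_{\ell+1}$ are the $y$-heights of the source and target edges. Denoting the intermediate heights $y_1, y_2, y_3$ (with source $a_0 = -1$ and target $a_4 = -1$), the four slopes telescope to $(y_1/(-1))\cdot(y_2/y_1)\cdot(y_3/y_2)\cdot(-1/y_3) = +1$, so the composition of the four projections is a pure translation in $x$, independent of the intermediate heights.

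To identify this translation, I would observe that by Step~2 of Table~\ref{hardness:thetable}, the point $c^i_{\iota_i}$ lies on the line through $b^{i-1}_*$ and $p^i_1$, meaning $b^{i-1}_*$ projects through $p^i_1$ onto the endpoint $c^i_{\iota_i}$ of edge $e^i_{\iota_i}$. Lemma~\ref{hardness:lemma1} then tells us that the subsequent three projections through $p^i_2, p^i_3, p^i_4$ carry $c^i_{\iota_i}$ to $x^i_{\iota_i} = b^i_* + (\gamma s_{i,\iota_i}, 0)$. Hence the composite translation sends $b^{i-1}_*$ to $x^i_{\iota_i}$. Applying the same translation to $v^{i-1}_* = b^{i-1}_* + (\gamma(\sigma_{i-1}+1), 0)$ gives $v^i_* = x^i_{\iota_i} + (\gamma(\sigma_{i-1}+1), 0) = b^i_* + (\gamma(\sigma_i + 1), 0)$, completing the inductive step.

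The main obstacle I anticipate is verifying rigorously that the four shortcuts are in fact forced through the four focal points, i.e., that each shortcut's paired subcurve of $T$ traverses the relevant twist's buffer zone so that the two auxiliary disks at distance $1$ around the focal point constrain the shortcut to pass through it. This uses the explicit placement of base-curve edges (all connectors outside the hippodrome, all mirrors well-separated from buffer zones) guaranteed by the construction in Section~\ref{hardness:construction}, together with the one-touch definition which prescribes exactly where each intermediate vertex lies on its assigned mirror edge.
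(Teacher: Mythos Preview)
Your proof is correct and follows essentially the same approach as the paper: induction on $i$, with the inductive step tracking the offset through the four projections and invoking Lemma~\ref{hardness:lemma1}. Your telescoping-of-slopes argument is a clean algebraic restatement of the paper's chain of distance equalities via Observation~\ref{hardness:congruent}.

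One clarification: the obstacle you anticipate is not actually present. By Definition~\ref{hardness:one-touch-encoding}, the one-touch encoding $B_I$ is \emph{constructed} by successively projecting through the focal points $\ttt{p}^i_1,\ldots,\ttt{p}^i_4$; it is not assumed to be a feasible shortcut curve whose shortcuts are then forced through the focal points by the twist argument. So for this lemma you do not need to verify that the shortcuts pass through the focal points---that is built into the definition. The feasibility argument (and the converse, that any feasible one-touch curve must be such an encoding) is handled separately in Lemmas~\ref{hardness:projectioncenter}--\ref{hardness:itallcomestogethernow}.
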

\begin{proof}
	We prove this via induction.
	For $i=0$ this is correct by construction of the initialization gadget.
	Refer for the following argument to Figure \ref{hardness:fig-bounce} and Observation \ref{hardness:congruent}.
	For all choices of $j$ we have $||\ttt{v}^{i-1}_* - \ttt{b}^{i-1}_*|| = ||\ttt{v}^i_*-\ttt{x}^i_j||$.
	This follows immediately from following the projections:
	\[||\ttt{v}^{i-1}_* - \ttt{b}^{i-1}_*|| = \alpha_j||\ttt{v}_1^i - \ttt{c}_j|| = \alpha'_j||\ttt{v}^i_2 - \ttt{d}'_j|| = ||\overline{\ttt{v}}_3^i - \overline{\ttt{a}}_j|| = ||\ttt{v}^i_* - \ttt{x}^i_{j}||.\]
	Together with Lemma \ref{hardness:lemma1} we have
	\[||\ttt{v}^i_* - \ttt{b}^i_*||=||\ttt{v}^i_* - \ttt{x}^i_j|| + ||\ttt{x}^i_j - \ttt{b}^i_*|| = \gamma(\sigma_{i-1} +1) + \gamma s_{i,j} = \gamma(\sigma_i+1).\]
\end{proof}

\begin{figure}
	\includegraphics[width=\textwidth]{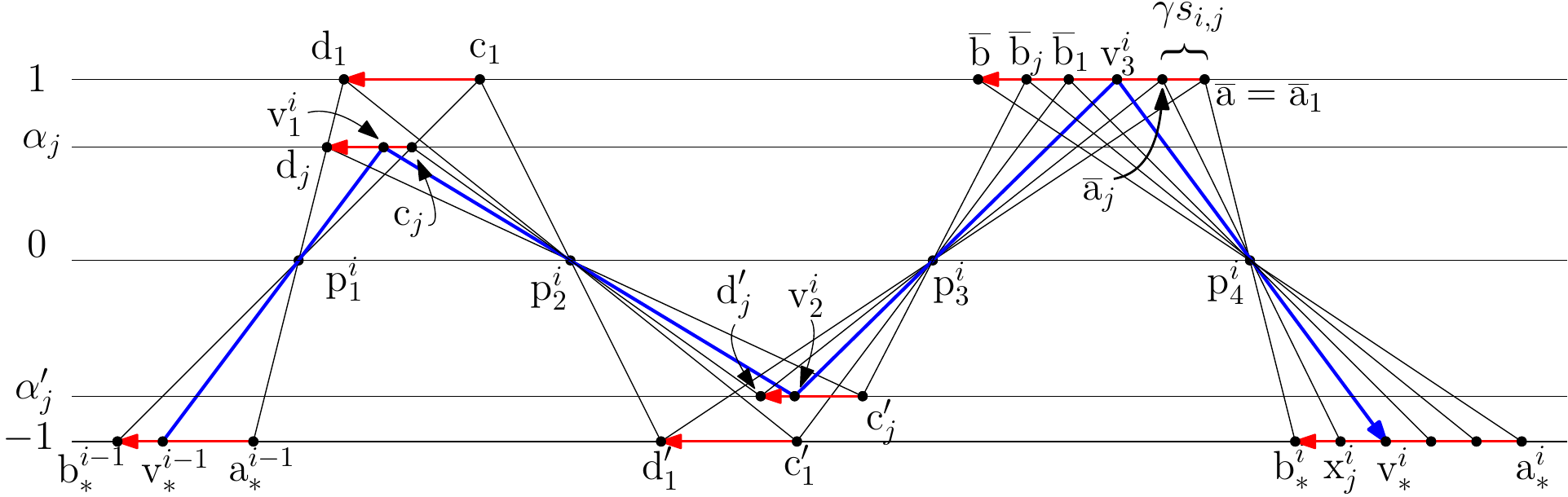}
	\caption{The path of a shortcut curve through the gadget $g_i$ in the case, where $s_{i,j}$ is selected from the $i$th list  (Lemma \ref{hardness:one-touch-induction}). Most top indices $i$ are omitted. Furthermore, for presentation the mirror edges have horizontal overlap, which in the construction they do not.}
	\label{hardness:fig-bounce}
\end{figure}


\begin{lemma}\label{hardness:bufferzone}
	The constructed base curve never enters any buffer zone centred at a projection centre.
\end{lemma}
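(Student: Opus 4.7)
My plan is to verify the lemma gadget by gadget. The key geometric observation is that every buffer zone is a horizontal rectangle of width $2\varepsilon$ and height $3$ centred at a focal point on the line $H_0$, while every mirror edge of the base curve lies in the horizontal strip $|y|\in[1/2,1]$, which is contained in $[-3/2,3/2]$. Hence a mirror edge avoids a given buffer zone if and only if its $x$-range is disjoint from that of the buffer zone, and it suffices to establish a horizontal separation of at least $\varepsilon$ between each mirror edge and each focal point. Connector edges will be handled by a separate routing argument.

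For the initialization gadget the single focal point sits at $(\varepsilon+\gamma,0)$ and the only mirror edge $e^0_*$ has $x$-range $[\gamma+2\varepsilon,3\gamma+2\varepsilon]$, so its left endpoint is at horizontal distance exactly $\varepsilon$ from the focal point. The terminal gadget is symmetric. For an encoding gadget $g_i$ with focal points $\ttt{p}^i_1,\dots,\ttt{p}^i_4$ defined in Table~\ref{hardness:thetable}, I would verify the separation at each focal point in turn. At $\ttt{p}^i_1$ the entry edge is at distance $\delta^i \geq 2\varepsilon+1$, while the closest endpoint of an edge in $\{e^i_j\}_j$ (lying above and to the right of $\ttt{p}^i_1$) can be shown to be at distance at least $\delta^i/2 \geq \varepsilon$ using the lower bound $\delta^i \geq (n-1)(\lambda^i+\beta)-\delta'^i$ built into the construction. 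At $\ttt{p}^i_2$, a direct computation from Step~2 of Table~\ref{hardness:thetable} shows that the rightmost endpoint $\ttt{c}^i_1$ is exactly $\delta'^i-\lambda^i = \varepsilon$ to the left of $\ttt{p}^i_2$, which is precisely why $\delta'^i$ was set to $\lambda^i+\varepsilon$. Symmetric arguments, reusing Observation~\ref{hardness:congruent}, handle $\ttt{p}^i_3$ and $\ttt{p}^i_4$ together with the recombining edge $e'^i$ and the exit edge $e^i_*$.

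It remains to argue that the connector edges can be routed to avoid all buffer zones. The construction so far fixes only the endpoints of each connector edge (which sit on mirror edges) and requires that the edge lie outside the hippodrome; the internal routing is otherwise free. Since each gadget contains only finitely many buffer zones of bounded vertical extent ($|y|\leq 3/2$) and the region $|y|>3/2$ lies outside both the hippodrome and all buffer zones and is path-connected, every pair of connector endpoints can be joined by a polygonal path that first leaves its mirror edge vertically until $|y|>3/2$, then travels horizontally, and finally descends to the next endpoint, avoiding all buffer zones. The main technical obstacle I anticipate is the bound at $\ttt{p}^i_1$: because the edges $e^i_j$ are defined implicitly as intersections of projection cones, their $(x,y)$-coordinates depend rationally rather than linearly on $j$, and tracking this dependence to verify the extremal case $j=n$ is precisely the calculation that forces the somewhat unwieldy form of the $\delta^i$ bound. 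Once this extremal case is settled, monotonicity of the endpoint positions in $j$ disposes of the intermediate edges, and the rest of the verifications are routine.
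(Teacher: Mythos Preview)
Your plan is essentially the same as the paper's: handle connector edges by the freedom of their routing, then verify horizontal separation of the mirror edges from each focal point, gadget by gadget. Your treatment of the initialization and terminal gadgets, and of $\ttt p^i_1$, $\ttt p^i_2$, $\ttt p^i_4$, matches the paper's (the paper dispatches $\ttt p^i_2$ and $\ttt p^i_4$ as ``by construction'', while you make the $\ttt p^i_2$ computation explicit).

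One point of divergence worth flagging: you identify $\ttt p^i_1$ as the main technical obstacle, but the paper actually settles $\ttt p^i_1$ in one line (height $\geq 1/2$ and slope $1/\delta$ give horizontal distance $\geq \delta/2 > \varepsilon$) and spends its real effort on $\ttt p^i_3$. There the closest mirror endpoint is $\ttt c'^i_n$, and the argument is not a symmetry but a comparison of two triangles: the triangle $\ttt p^i_1\,\ttt p^i_2\,\ttt d^i_n$ (height $\geq 1/2$, already known) is congruent to the triangle $\ttt p^i_2\,\ttt p^i_3\,\ttt c'^i_n$ because the line $\overline{\ttt d^i_n\,\ttt p^i_2}$ extends to $\overline{\ttt p^i_2\,\ttt c'^i_n}$ and the opposite sides both have slope $1/\delta$; since the base $\overline{\ttt p^i_2\,\ttt p^i_3}$ is at least as long as $\overline{\ttt p^i_1\,\ttt p^i_2}$, the height of $\ttt c'^i_n$ is at most $-1/2$, and the horizontal separation follows. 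This is in the spirit of the Observation~\ref{hardness:congruent} reasoning you cite, but it is a genuine extra step beyond ``symmetric arguments'', so you should expect to carry it out explicitly rather than wave it through.
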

\begin{proof}
	We will not consider the connector edges outside the hippodrome, since they can easily be placed such that they do not enter buffer zones.
	
	We first consider the encoding gadget $g_i$.
	For this we omit the top index $i$, as we only look at a single gadget at a time.
	For the buffer zones centred at $\ttt{p}_2$ and $\ttt{p}_4$ for $1\leq i\leq k$ the claim is implied by construction.	
	For the buffer zone centred at $\ttt{p}_1$ the closest edge in $x$-direction inside the hippodrome is by construction $e_n$. And of this edge, $\ttt{d}_3$ is the closest point. By construction this edge is at a height of $\geq\frac{1}{2}$, since $\delta + \delta' \geq (n-1)(\lambda + \beta)$ (this will again be observed in Lemma \ref{hardness:big4}). Hence, $d_n \geq p_1 + \frac{\delta}{2} > p_1 + \varepsilon$, implying the claim for this projection centre as well.
	
	For the buffer zone centred at $\ttt{p}_3$ the closest edge in $x$-direction inside the hippodrome is by construction $e'_n$. And of this edge $\ttt{c}_n'$ is the closest point. We begin by bounding the $y$-coordinate of $\ttt{c}_n'$. For this we compare two triangles. The first triangle consists of the points $\ttt{p}_1$, $\ttt{p}_2$ and $\ttt{d}_n$, for which the height of $\geq\frac{1}{2}$ is already known from the first part of the proof. The second triangle is given by the points $\ttt{p}_2$, $\ttt{p}_3$ and $\ttt{c}'_n$. Note that the slope of $\overline{\ttt{p}_2\ \ttt{c}'_n}$ is the slope of $\overline{\ttt{d}_n\ \ttt{p}_2}$, since they lie on a common line by construction. The slope of $\overline{\ttt{c}'_n\ \ttt{p}_3}$ and the slope of $\overline{\ttt{p}_1\ \ttt{d}_n}$ is by construction $\frac{1}{\delta}$. Hence, the two triangles are congruent. And since the last edge of the second triangle is larger than that of the first triangle, the height must be larger.
	Hence, the $y$-coordinate of $\ttt{c}'_n$ is at most $-\frac{1}{2}$. Now again, since the slope of $\overline{\ttt{c}'_n\ \ttt{p}_3}$ is $\frac{1}{\delta}$, $p_3-c'_n\geq\frac{\delta}{2}>\varepsilon$.
	This implies the claim for the projection centre $\ttt{p}_3$.
	
	The last two buffer zones left to analyse are around the projection centre in the initialization and end gadget, however the claim follows directly from construction.
\end{proof}

\begin{lemma}[4-monotonicity \cite{Buchin2013ComputingTF}]\label{hardness:monotone}
	Any feasible shortcut curve is rightwards 4-monotone. That is, if $x_1$ and $x_2$ are the $x$-coordinates of two points that appear on the shortcut curve in that order, then $x_2+4\geq x_1$. Furthermore, it lies inside or on the boundary of the hippodrome.
\end{lemma}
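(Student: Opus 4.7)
The plan is to derive both claims from the defining property of a feasible shortcut curve: every one of its points is paired, under some monotone continuous reparametrization, with a point of the target curve $T$ at Euclidean distance at most $1$.

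For the hippodrome containment, I would simply observe that the target curve lies entirely on the horizontal line $H_0$. Therefore the set of points in $\bR^2$ at Euclidean distance at most $1$ from any single point of $T$ is a closed unit disk centered on $H_0$, and the union of all such disks over all points of $T$ is precisely the hippodrome. Since every point of a feasible shortcut curve lies in at least one such disk (namely the one centered at its paired point), it lies in the hippodrome.

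For the $4$-monotonicity, the first step is a structural analysis of $T$. By construction of the initialization, encoding and terminal gadgets, every edge of the target curve is oriented in the positive $x$-direction, with a single exception per twist: the middle edge of a twist centered at $(p,0)$ goes from $(p+1,0)$ to $(p-1,0)$, i.e., leftward by exactly $2$. I would then argue that for any two points of $T$ with $x$-coordinates $a_1,a_2$ appearing in that order along the parametrization of $T$, one has $a_2\geq a_1-2$: the only way to move leftward is to traverse part of a twist's middle edge, whose total leftward displacement is $2$, and moving into or out of such a twist cannot make $a_2-a_1$ smaller than $-2$ because everything outside the twists' middle edges pushes strictly to the right.

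Putting the two ingredients together, consider two points on the feasible shortcut curve with $x$-coordinates $x_1,x_2$ appearing in that order. Under a monotone reparametrization realising Fréchet distance at most $1$ between the shortcut curve and $T$, these points are paired with points of $T$ whose $x$-coordinates $a_1,a_2$ occur in that order along $T$. Since $T$ is on $H_0$, the distance bound yields $|x_i-a_i|\leq 1$ for $i\in\{1,2\}$. Combining the inequalities gives
\[
x_2 \;\geq\; a_2 - 1 \;\geq\; a_1 - 2 - 1 \;\geq\; x_1 - 4,
\]
i.e.\ $x_2+4\geq x_1$, as claimed. The only real subtlety, and thus the main obstacle, is the first step of the structural analysis: one must carefully use the twist definition and verify that no other part of the construction introduces a leftward edge on $T$, so that the worst-case leftward displacement on $T$ between two successive reparametrized points is indeed bounded by $2$.
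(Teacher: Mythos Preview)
Your proof is correct and follows essentially the same approach as the paper: both arguments use that the Fr\'echet matching forces $|x_i-a_i|\leq 1$ (since $T$ lies on $H_0$) together with the fact that $T$ is $2$-monotone by construction of the twists. The only cosmetic difference is that the paper phrases the $4$-monotonicity step by contradiction whereas you give the direct chain of inequalities.
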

\begin{proof}
	Any point on the feasible shortcut curve has to lie within distance $1$ to some point of the target curve, thus the curve cannot leave the hippodrome. As for the monotonicity, assume for the sake of contradiction that there exist two such points with $x_2 + 4 < x_1$. Let $\hat{x}_1$ be the $x$-coordinate of the point on the target curve matched to $x_1$ and let $\hat{x}_2$ be the one for $x_2$. By the Fréchet matching it follows that $\hat{x}_2-1+4<\hat{x}_1 + 1$. This would imply that the target curve is not $2$-monotone, which contradicts the way we constructed it.
\end{proof}

\begin{lemma}\label{hardness:big4}
	If $\beta \geq 32 + 4\lambda^{i-1}$, $\varepsilon < 16$ and $n\geq 2$, then any two mirror edges of the gadget $g_i$ are at least $4$ apart. Additionally, if $\beta \geq \lambda^i$, then all mirror edges lie inside the hippodrome.
\end{lemma}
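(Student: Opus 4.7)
The plan is a case analysis over pairs of mirror edges in $g_i$. Every mirror edge lies in one of the two horizontal strips $H^{\leq 1}_{\geq 1/2}$ or $H^{\leq -1}_{\geq -1/2}$, and the mirror edges of $g_i$ are the $n$ edges $e_j$ together with the edge $e'$ (upper band), and the $n$ edges $e'_j$ together with the exit edge $e^i_*$ (lower band). Throughout the argument I will track edge lengths via Observation~\ref{hardness:congruent} and the evolution of $x$-spacings under each projection via Observation~\ref{hardness:coneshifter}.

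For two edges lying in opposite bands the vertical distance is already at least $1$, so the only thing to verify is enough $x$-separation. Each group of edges sits in an $x$-interval around one of the projection centres $p_1, p_2, p_3, p_4$, and consecutive centres are at $x$-distance at least $\delta + \delta' \geq 2\varepsilon + 1$. The hypothesis $\varepsilon < 16$ is used to bound the $\gamma\max S_i$ contribution between $p_2$ and $p_3$ so that it fits inside the $\delta$-budget. Since the $x$-extent of each group of edges is at most $(n-1)(\beta+\lambda^{i-1}) \leq \delta + \delta'$ by the very definition of $\delta$, cross-band edges are $x$-separated by more than $4$.

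For two edges in the same band the separation must come entirely from $x$-coordinates. Consecutive upper-band edges $e_j$ and $e_{j+1}$ are built by projecting the entry edge through $p_2$ with the explicit shift $s_j = ((j-1)(\beta+\lambda^{i-1}),0)$ and then through $p_1$; Observation~\ref{hardness:coneshifter} implies that the $x$-spacing can only grow under the second projection, so consecutive $e_j$ are $x$-separated by at least $\beta$. The lower-band edges $e'_j$ receive an additional $\gamma s_{i,j}$-shift at Step~3 that shrinks the spacing by at most $\gamma\max S_i$; this loss is absorbed into the slack $\beta - 4 \geq 28 + 4\lambda^{i-1}$. The edge $e'$ and the exit edge $e^i_*$ live in distinct columns and are $x$-separated from the other same-band edges by at least $\delta+\delta'+\varepsilon > 4$.

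For the second claim, that every mirror edge lies inside the hippodrome, I reduce to showing the $y$-coordinates of all interior vertices ($d_j, c_j, c'_j, d'_j, \bar a, \bar b, a^i_*, b^i_*$) lie in $[1/2, 1]$ or $[-1, -1/2]$. Each such vertex is the intersection of two cones with slopes $1/\delta$ or $1/\delta'$ emanating from projection centres on $H_0$; combined with horizontal offsets of at most $(n-1)(\beta+\lambda^{i-1}) \leq \delta+\delta'$, the induced $y$-drop is bounded by $1/2$. The extra hypothesis $\beta \geq \lambda^i$ is needed precisely to control the length $\lambda^i$ of the outgoing edge $e^i_*$ so that it, too, fits in the strip. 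The main obstacle throughout is the bookkeeping for chained projections combined with $\gamma s_{i,j}$-shifts: several scaling factors accumulate, and all of them must be tracked before the $\beta \geq 32 + 4\lambda^{i-1}$ hypothesis can be cashed in.
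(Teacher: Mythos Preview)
Your central claim---that consecutive edges $e_j$ are $x$-separated by at least $\beta$---is false, and this is precisely where the constant $32+4\lambda^{i-1}$ comes from. The edges $e_j$ are not the height-$1$ copies spaced $\beta$ apart; they are the intersections of the cone from $p_1$ with the cones from $p_2$, and these intersections sit at heights ranging from $1$ down to roughly $1/2$. The projection therefore \emph{compresses} the horizontal spacing, not preserves or enlarges it. The paper computes the coordinates explicitly,
\[
d_j=\frac{\delta(\delta+\delta')}{\delta+\delta'+(j-1)(\lambda+\beta)},\qquad
c_j=\frac{(\delta+\lambda)(\delta+\delta')}{\delta+\delta'+(j-1)(\lambda+\beta)},
\]
and the inequality $c_j+4<d_{j-1}$ only reduces to $\beta>32+4\lambda$ after nontrivial algebra; it does not follow from any monotonicity of Observation~\ref{hardness:coneshifter}. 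That observation compares what happens as the \emph{second} apex $q$ is moved farther away, and the paper invokes it only to transfer the $e_j$ result to the $e'_j$ edges, not to establish the $e_j$ separation in the first place.

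Two smaller misattributions compound the problem. The hypothesis $\varepsilon<16$ is not used to bound any $\gamma\max S_i$ term; it ensures $\lambda+2\varepsilon<32+4\lambda$ so that the single condition $\beta>32+4\lambda$ subsumes both branches of the $\max$. And the hypothesis $\beta\geq\lambda^i$ is not about fitting the exit edge $e^i_*$ (which lies on $H_{-1}$ by construction); it is needed to show the $y$-coordinate of $e'_j$ stays above $-1$ despite the additional $\gamma s_{i,j}$ shift in Step~3, via the bound $o_j\geq\beta\geq\lambda^i=\lambda^{i-1}+\gamma\max S_i\geq\gamma\max S_i$. Without the explicit coordinate computation, your outline does not establish the lemma.
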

\begin{proof}
	In this proof we omit the top index $i-1$ from $\lambda$ unless stated otherwise.
	The top index $i$ from $c_j,c_j',d_j,d_j'$ are omitted as well.
	We begin by computing $\ttt{c}_j$ and $\ttt{d}_j$ for $1\leq j\leq n$. To not worry about offsets, we translate the instance such that $\ttt{p}_1$ coincides with the origin. Then $\ttt{d}_j$ is defined as the intersection of a line $l_1$ from $(0,0)$ to $(\delta,1)$ and $l_2$ from $(\delta - (j-1)(\lambda + \beta),1)$ to $(\delta + \delta',0)$. Note that $l_1$ has a slope of $\frac{1}{\delta}$, and $l_2$ has a slope of $\frac{1}{(j-1)(\lambda+\beta)+\delta'}$. Thus the $x$-coordinate of the intersection point satisfies	
	\[\frac{d_j}{\delta}=1-\frac{d_j-(\delta - (j-1)(\lambda + \beta))}{\delta' + (j-1)(\lambda + \beta)}.\]	
	From this follows
	\[d_j = \frac{\delta(\delta' + \delta)}{\delta + \delta' + (j-1)(\lambda + \beta)}.\]	
	Since $l_1$ has slope $\frac{1}{\delta}$, we thus have	
	\[\ttt{d}_j = \frac{1}{\delta + \delta' + (j-1)(\lambda + \beta)}\bigg(\delta(\delta' + \delta),(\delta' + \delta)\bigg).\]
	Via similar calculations we get 	
	\[\ttt{c}_j = \frac{1}{\delta + \delta' + (j-1)(\lambda + \beta)}\bigg((\delta+\lambda)(\delta' + \delta),(\delta' + \delta)\bigg).\]	
	Since $\delta+\delta'\geq(n-1)(\lambda + \beta)$, and $j\leq n$, the $y$-coordinate of these edges is at least $\frac{1}{2}$.	
	Now we want to prove that $c_j+4<d_{j-1}$ holds for $2\leq j\leq n$. We will define auxiliary variables $o_j=(j-1)(\lambda+\beta)$ and $a_j = \delta + \delta' + o_j$. Thus we need to show	
	\[\frac{(\delta + \lambda)(\delta + \delta ')}{ \delta + \delta' + o_j} + 4 < \frac{\delta(\delta + \delta ')}{ \delta + \delta' + o_{j-1}}.\]	
	Multiplying both sides with $a_ja_{j-1}$ we get	
	\[(\delta + \lambda)(\delta + \delta')a_{j-1} + 4a_ja_{j-1} < \delta(\delta + \delta')a_j.\]
	Since $a_j = a_{j-1} + \lambda + \beta$ this is equivalent to	
	\[\lambda(\delta + \delta')(\delta + \delta' + o_{j-1})+ 4a_ja_{j-1} < \delta(\delta+\delta')(\lambda + \beta),\]	
	simplifying to	
	\[\lambda(\delta + \delta')(\delta' + o_{j-1})+ 4a_ja_{j-1} < \beta\delta(\delta+\delta'). \]
	Since $o_j \leq o_n = (n-1)(\lambda + \beta) \leq \delta + \delta '$ and thus $a_j \leq 2(\delta + \delta ')$ we get	
	\[2\lambda(\delta+\delta')^2 + 16(\delta+\delta')^2 < \beta\delta(\delta+\delta'),\]	
	which is equivalent to 	
	\[(16 + 2\lambda)\left(1+\frac{\delta'}{\delta}\right) < \beta.\]	
	Assuming $n\geq 2$ and $\beta > \lambda + 2 \varepsilon$ we have $\frac{\delta'}{\delta}<\frac{\varepsilon + \lambda}{(n-1)\beta + (n-2)\lambda - \varepsilon}<\frac{\varepsilon + \lambda}{\beta - \varepsilon} \leq 1$.	
	Then
	\[(32 + 4\lambda) < \beta.\]	
	Hence, any two mirror edges $e_j$ are at least $4$ apart from each other, if $\beta>\max(32+4\lambda,\lambda + 2\varepsilon) = 32+4\lambda$, assuming $\varepsilon < 16$.
	
	From the previous argument together with Observation \ref{hardness:coneshifter} it follows that the edges $e_j'$ would be at least $4$ apart, with $\overline{b}_j = \overline{b}_i$ and $\overline{a}_j = \overline{a}_i$ for all $1\leq j\leq n$ and any fixed $1\leq i\leq n$.
	If we now imagine $\overline{b}_j$ and $\overline{a}_j$ sweeping from $\overline{b}_{j-1}$ and $\overline{a}_{j-1}$ to its actual value defined in the table, we see that all the edge-end points move in positive $x$ direction, where each $e_j'$ moves further 'right' than $e_{j-1}'$ since $s_{i,j-1} < s_{i,j}$, thus the distance in $x$-direction only ever increases proving the claim.
	
	The last thing to prove is that all edges lie inside the hippodrome. For $e_j$ this follows immediately from construction. Similarly for $e_1'$, since $s_{i,1} = 0$. For $e_j'$ with $j\geq 2$ we need to look at the construction a bit more carefully.
	The $y$-coordinate of the edge $e_j'$ equals	
	\[-\frac{\delta + \delta' + \gamma\max S_i}{\delta+ \delta' + o_j + \gamma(\max S_i - s_{i,j})}.\]	
	As $s_{i,j}<\max S_i$, the $x$-coordinate is bounded from below by 	
	\[-\frac{\delta + \delta' + \gamma\max S_i}{\delta+ \delta' + o_j},\]	
	which is at least $-1$, if $o_j \geq \gamma\max S_i$. This holds, since $o_j\geq \beta \geq \lambda^i = \lambda^{i-1}+\gamma\max S_i$.
\end{proof}

\begin{lemma}\label{hardness:projectioncenter}
	If $16 > \varepsilon > 2$ and $\beta > \max(32+4\lambda^k,\lambda^{k+1})$, then a feasible shortcut curve passes through every buffer zone of the target curve via its projection centre and furthermore it does so from left to right.
\end{lemma}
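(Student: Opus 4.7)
The plan is to establish two things about any feasible shortcut curve $B'$: that in each buffer zone the curve must traverse the focal point via a single shortcut, and that this shortcut is oriented from left to right. Fix a focal point $p=(p_x,0)$ and its twist, whose three consecutive vertices are $(p_x-1,0)$, $(p_x+1,0)$, $(p_x-1,0)$. In any Fr\'echet matching witnessing $d_\mathcal{F}(T,B')\le 1$, these are sent, in order, to points $q_0,q_1,q_2$ on $B'$, each within distance $1$ of its target vertex. Since $\varepsilon>2$, the unit disks around the twist vertices and the entire $1$-neighbourhood capsule of the twist segment are contained in the buffer zone. Hence the subcurve of $B'$ matched to the twist portion of $T$ lies inside the buffer zone. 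By Lemma~\ref{hardness:bufferzone}, $B$ has no point inside this buffer zone, so this subcurve cannot contain any piece of $B$ nor a single-point transition between consecutive shortcuts; it therefore lies on a single straight-line shortcut. The three collinear points $q_0,q_1,q_2$ lie in the left disk, the right disk, and the left disk in that order along the shortcut. A line disjoint from $p$ would intersect the two tangent disks in two disjoint intervals of its parametrisation, which is incompatible with the left-right-left pattern for three monotonically ordered points; hence the line passes through $p$, and so does the shortcut.

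Next, for the direction, I argue by contradiction using Lemma~\ref{hardness:monotone}. The shortcut's endpoints $B(s_1)$ (earlier) and $B(s_2)$ (later) lie outside the buffer zone by Lemma~\ref{hardness:bufferzone}, while the shortcut's interior contains $p$. If both endpoints exit through the left and right sides of the buffer zone, then the horizontal extent of the shortcut satisfies $|B(s_1)_x - B(s_2)_x|\ge 2\varepsilon>4$, so $4$-monotonicity forces $B(s_1)_x < B(s_2)_x$, i.e., a left-to-right traversal. The only alternative would be a near-vertical shortcut exiting through the top or bottom of the buffer zone, which would require $B$ to have edges directly above and below $p$ at $|y|>3/2$. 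Mirror edges sit in the strips $y\in[1/2,1]\cup[-1,-1/2]$ and, by Lemma~\ref{hardness:big4} together with the explicit coordinates of the gadgets, are kept well outside the buffer zone's $x$-range; connector edges can be routed outside every buffer zone (as noted in the proof of Lemma~\ref{hardness:bufferzone}) without producing vertical alignments through any focal point. Hence no near-vertical shortcut through $p$ can be realised, and the shortcut is left-to-right.

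The main obstacle I expect is the final case of the direction argument: ruling out near-vertical shortcuts depends on the concrete layout of $B$ (in particular the routing of connector edges) rather than on the abstract statements of the earlier lemmas, so this must be pinned down carefully to make the argument airtight. The rest of the proof is a clean combination of the twist/disk geometry with Lemma~\ref{hardness:bufferzone} and Lemma~\ref{hardness:monotone}, closely mirroring the informal description of twists given earlier in the section.
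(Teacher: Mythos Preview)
Your argument is correct and follows essentially the same route as the paper: use Lemma~\ref{hardness:bufferzone} to force the buffer-zone crossing onto a single shortcut, use the twist/disk geometry to pin that shortcut to the focal point, and then invoke $4$-monotonicity (Lemma~\ref{hardness:monotone}) together with $2\varepsilon>4$ for the direction. The paper's own proof is considerably terser and simply refers back to the informal discussion of twists, so your more explicit treatment of the disk-intersection step is fine and arguably clearer.

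The one place where you work harder than necessary is precisely the ``near-vertical shortcut'' case you flag as the main obstacle. You resolve it by appealing to the concrete placement of mirror edges and the routing freedom of connector edges, but none of that is needed. Lemma~\ref{hardness:monotone} already states that any feasible shortcut curve lies inside (or on the boundary of) the hippodrome; in particular the shortcut endpoints satisfy $|y|\le 1$. The buffer zone has height $3$ (it extends to $|y|=3/2$), so near the focal point the set ``in the hippodrome but outside the buffer zone'' is exactly $\{|y|\le 1\}\cap\{|x-p_x|>\varepsilon\}$. Both endpoints of the shortcut lie in this set, and since the straight segment passes through $(p_x,0)$ they must lie on opposite sides of $x=p_x$, each at horizontal distance greater than $\varepsilon$. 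Now $4$-monotonicity and $2\varepsilon>4$ finish the direction argument directly, with no case analysis and no dependence on the detailed layout of $B$. This is also what the paper's proof is implicitly using when it opens with ``all of its vertices must lie in the hippodrome or on its boundary.''
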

\begin{proof}
	Any feasible shortcut curve has to start at $B(0)$ and end at $B(1)$, and all of its vertices must lie in the hippodrome or on its boundary. By Lemma \ref{hardness:bufferzone} the base curve does not enter any of the buffer zones and therefore the feasible shortcut curve has to pass through the buffer zone by using a shortcut. 
	If we choose the width of a buffer zone $2\varepsilon > 4$, then the only way to do this, while matching to the two associated vertices of the target curve in their respective order, is to go through the intersection of their unit disks.
	The intersection lies at the centre of the buffer zone, as we saw in the beginning.
\end{proof}

\begin{lemma}\label{hardness:one-touch}
	If $16 > \varepsilon > 2$ and $\beta > \max(32+4\lambda^k,\lambda^{k+1})$, then a feasible shortcut curve that is one-touch visits exactly one of the edges $e^i_j$ and exactly one of the edges $e'^i_j$ for $1\leq j \leq n$ in every gadget $g_i$ for $1\leq i\leq k$.
	Furthermore it visits all edges $e_*^i$ for $0\leq i\leq k$.
\end{lemma}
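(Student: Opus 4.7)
The plan is to combine a counting argument on focal points with a per-gadget geometric analysis. Counting twists, the initialization gadget $g_0$ and the terminal gadget $g_{k+1}$ each contribute one focal point, and each encoding gadget $g_i$ ($1\le i\le k$) contributes four, namely $\ttt{p}^i_1,\ldots,\ttt{p}^i_4$, for a total of exactly $4k+2$. Since a one-touch curve consists \emph{only} of shortcuts and has at most $4k+2$ of them, and Lemma~\ref{hardness:projectioncenter} forces each shortcut to pass through a distinct focal point (and always left-to-right), a simple counting argument yields a bijection: each focal point is used by exactly one shortcut of the curve.

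Next I would fix the order in which these focal points are used. By inspecting Table~\ref{hardness:thetable} together with the parameter choices $\delta\ge 2\varepsilon+1$ and $\varepsilon>2$, the pairwise $x$-distances between any two focal points of the construction strictly exceed $4$. The $4$-monotonicity of Lemma~\ref{hardness:monotone} therefore forces the shortcuts to traverse the focal points in strictly increasing $x$-order, which is $p^0,\ \ttt{p}^1_1,\ldots,\ttt{p}^1_4,\ \ttt{p}^2_1,\ldots,\ttt{p}^k_4,\ p^{k+1}$.

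With the ordering fixed, identifying the mirror edge on which each intermediate vertex of the shortcut curve lies completes the proof. Every such vertex lies on the base curve (one-touch) and inside the hippodrome (Lemma~\ref{hardness:monotone}), hence on some mirror edge. Moreover, since each shortcut is a straight line from $y=\pm 1$ through a focal point at $y=0$ to $y=\mp 1$, it is $x$-monotone, so the vertex between two consecutive shortcuts using focal points $p$ and $p'$ has $x$-coordinate strictly between $p$ and $p'$. A direct inspection of Table~\ref{hardness:thetable} shows that the $x$-range of $\{e^i_j\}_j$ is contained in $(\ttt{p}^i_1,\ttt{p}^i_2)$, that of $\{e'^i_j\}_j$ in $(\ttt{p}^i_2,\ttt{p}^i_3)$, that of the edge with endpoints $\bar{\ttt{a}}$ and $\bar{\ttt{b}}$ in $(\ttt{p}^i_3,\ttt{p}^i_4)$, that of $e^i_*$ in $(\ttt{p}^i_4,\ttt{p}^{i+1}_1)$ for $1\le i\le k$, and that of $e^0_*$ in $(p^0,\ttt{p}^1_1)$, with \emph{no other mirror edges} falling in any of these ranges. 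Combined with Lemma~\ref{hardness:big4}, which gives pairwise $x$-separation greater than $4$ within each of the families $\{e^i_j\}_j$ and $\{e'^i_j\}_j$, this forces the vertex produced by the shortcut through $\ttt{p}^i_l$ to lie on exactly one edge of the appropriate family, and this proves every assertion of the lemma (including the visit to every $e^i_*$ for $0\le i\le k$). The main obstacle is verifying that the shortcut through $\ttt{p}^i_1$ from a vertex on $e^{i-1}_*$ actually lands on one of the $e^i_j$ rather than escaping the hippodrome first; this is exactly what Step~2 of Table~\ref{hardness:thetable} is designed to guarantee, by positioning the $n$ edges of $\{e^i_j\}_j$ so as to cover the projection cone of $e^{i-1}_*$ through $\ttt{p}^i_1$ within the strip $H^{\le 1}_{\ge 1/2}$.
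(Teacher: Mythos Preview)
Your counting approach is appealing but has a genuine gap in step~1. You assert that Lemma~\ref{hardness:projectioncenter} ``forces each shortcut to pass through a distinct focal point'' and then invoke pigeonhole. But Lemma~\ref{hardness:projectioncenter} only says the \emph{curve} passes through every focal point; it does not say a single shortcut passes through at most one. All focal points lie on $H_0$, so a shortcut whose supporting line is $H_0$ could pass through several of them at once. Ruling this out is precisely the ``skipping'' case the paper's proof handles explicitly: a shortcut on $H_0$ would have to start and end on connector edges (the mirror edges live in $H_{\ge 1/2}^{\le 1}\cup H_{\ge -1}^{\le -1/2}$), and the paper argues, via the careful placement of connector edges, that no such configuration is feasible. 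Without this argument your bijection does not follow.

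The same issue reappears in step~3: ``Every such vertex lies on the base curve and inside the hippodrome, hence on some mirror edge.'' This inference is unjustified, since connector edges also enter the hippodrome (at the very least at their junctions with mirror edges, and possibly where they cross $H_0$). You need to exclude connector edges as landing points, which again is the connector-edge case analysis. A smaller slip: you write ``from $y=\pm 1$ through a focal point to $y=\mp 1$'', but the edges $e^i_j$ and $e'^i_j$ sit at heights in $[1/2,1]$ and $[-1,-1/2]$, not at $\pm 1$; the relevant fact is only that they are off $H_0$. In short, your outline becomes correct once you insert the paper's argument that no shortcut can lie on $H_0$ (equivalently, that no shortcut can start or end on a connector edge inside the hippodrome); without it the counting step and the ``hence mirror edge'' step both fail.
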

\begin{proof}
	By Lemma \ref{hardness:monotone} any feasible shortcut curve is $4$-monotone. Furthermore, it starts at $B(0)$ and ends at $B(1)$. By Lemma \ref{hardness:projectioncenter} it goes through all projection centres of the target curve from left to right. We first want to argue that it visits at least one mirror edge between two projection centres, i.e. that it cannot 'skip' such a mirror edge by matching to two twists in one shortcut. Such a shortcut would have to lie on $H_0$, since it has to go through the two corresponding projection centres lying on $H_0$. By construction, the only possible endpoints of such a shortcut lie on the connector edges that connect to mirror edges. Assume such a shortcut could be taken by a shortcut curve starting from $B(0)$. Then there must be a connector edge, which intersects a line from a point on a mirror edge through the projection centre. In particular, since the curve has to go through all projection centres, one or more of the following must be true for some $1\leq i\leq k$:
\begin{compactenum}
		\item[-] there exists a line through $\ttt{p}^i_1$ intersecting a mirror edge $e_*^{i-1}$ and a connector edge of $e^i_j$,
		\item[-] there exists a line through $\ttt{p}^i_2$ intersecting a mirror edge $e_j^i$ and a connector edge of $e'^i_l$ for some $l$, or
		\item[-] there exists a line through $\ttt{p}^i_3$ intersecting a mirror edge $e'^i_j$ and a connector edge of $e'^i_l$ for some $l>j$.
\end{compactenum}
	However, this was prevented by the careful placement of these connector edges.
	
	It remains to prove that the shortcut curve can not visit more than one $e^i_j$ and cannot visit more than one $e'^i_j$ and therefore visits exactly one mirror edge between two projection centres.
	The shortcut curve has to lie inside or on the boundary of the hippodrome and is $4$-monotone (Lemma \ref{hardness:monotone}).
	At the same time, we constructed the gadget such that the mirror edges between two consecutive projection centres have distance at least $4$ to one another by Lemma \ref{hardness:big4}.
	Furthermore, inside the projection cone from $e'^i_j$ to $\ttt{p}^i_3$ all mirror edges come before (as parametrized by the base curve) $e'^i_j$, implying the claim.
\end{proof}
\begin{corollary}
	A feasible shortcut curve consists of exactly $4k+2$ shortcuts. One shortcut for the initialization and end gadgets, and $4$ shortcuts in each encoding gadget.
\end{corollary}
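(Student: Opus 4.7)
The plan is to show that the number of shortcuts in any (one-touch) feasible shortcut curve equals the total number of projection centers in the construction. Counting directly from the description of the gadgets, there are $1 + 4k + 1 = 4k+2$ projection centers in total: one in the initialization gadget $g_0$, four (namely $\ttt{p}^i_1, \ttt{p}^i_2, \ttt{p}^i_3, \ttt{p}^i_4$) in each of the $k$ encoding gadgets, and one in the terminal gadget $g_{k+1}$.

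For the lower bound, Lemma~\ref{hardness:projectioncenter} guarantees that every projection center is traversed by some shortcut of the feasible shortcut curve. I would then argue that no single shortcut can traverse two projection centers: all projection centers lie on the horizontal line $H_0$, so a shortcut passing through two of them would have to be a subsegment of $H_0$. But by Lemma~\ref{hardness:one-touch} combined with the one-touch property, every shortcut endpoint lies on a mirror edge contained in $H_{\geq 1/2}^{\leq 1} \cup H_{\geq -1}^{\leq -1/2}$, or equals $B(0)$ or $B(1)$; all of these locations are off $H_0$, giving a contradiction. Hence at least $4k+2$ shortcuts are used.

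For the matching upper bound, I would invoke Lemma~\ref{hardness:one-touch} once more: between any two consecutive projection centers the one-touch curve visits exactly one mirror edge. The argument in that lemma, based on the $4$-monotonicity of Lemma~\ref{hardness:monotone} and the spacing established in Lemma~\ref{hardness:big4}, applies analogously to the large edge $e'^i$ lying strictly between $\ttt{p}^i_3$ and $\ttt{p}^i_4$. Since the curve is one-touch, each visited mirror edge contributes exactly one vertex, so together with the endpoints $B(0)$ and $B(1)$ the shortcut curve has $4k+3$ vertices and therefore exactly $4k+2$ shortcuts connecting them. The only remaining work is the purely bookkeeping step of verifying that every gap between consecutive projection centers contains exactly one visited mirror edge; the geometric content is already carried by the preceding lemmas.
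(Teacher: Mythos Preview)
Your proposal is correct and matches the paper's approach; the paper states this corollary without proof, treating it as an immediate count once Lemma~\ref{hardness:one-touch} is in hand, and you have simply spelled out that count explicitly. One small remark: your lower-bound argument (no shortcut hits two projection centers) is already contained in the proof of Lemma~\ref{hardness:one-touch} itself (the ``no skipping'' paragraph), so you may cite that directly rather than re-deriving the $H_0$ argument; also, as you note, the single edge $e'^i$ between $\ttt{p}^i_3$ and $\ttt{p}^i_4$ is not named in the statement of Lemma~\ref{hardness:one-touch}, but it is the only mirror edge in that gap so the same reasoning covers it trivially.
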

Putting the above lemmas together implies the correctness of the reduction for shortcut curves that are one-touch i.e., which visit every edge in at most one point.
\begin{lemma}\label{hardness:itallcomestogethernow}
	If $16 > \varepsilon > 2$ and $\beta > \max(32+4\lambda^k,\lambda^{k+1})$, then for any feasible one-touch shortcut curve $B_\diamond$, it holds that the index set $I$ encoded by $B_\diamond$ sums to $\sigma$. Furthermore, for any index set $I$ that solves the $k$-Table-SUM instance there is a feasible one-touch shortcut curve that encodes it.
\end{lemma}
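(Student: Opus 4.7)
My plan is to prove both directions by leveraging the structural constraints already established. The key observation is that once a one-touch feasible shortcut curve commits to visiting a particular edge $e^i_{\iota_i}$, all subsequent vertices inside gadget $g_i$ are \emph{uniquely determined} by the forced projections through the twist centers $\ttt{p}^i_1,\ttt{p}^i_2,\ttt{p}^i_3,\ttt{p}^i_4$ (by Lemma~\ref{hardness:projectioncenter}, each shortcut through a buffer zone is a straight line through the corresponding focal point). In particular, the $\iota_i$th edge $e^i_{\iota_i}$ is forced (via $\ttt{p}^i_2$) to project to the $\iota_i$th edge $e'^i_{\iota_i}$, and similarly on through $e^i_*$. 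Thus the entire feasible one-touch shortcut curve $B_\diamond$ coincides with the canonical curve $B_I$ of Definition~\ref{hardness:one-touch-encoding} for the index set $I=(\iota_1,\ldots,\iota_k)$ read off from the edges visited (Lemma~\ref{hardness:one-touch}).

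For the forward direction, I apply Lemma~\ref{hardness:one-touch-induction} to $B_\diamond = B_I$ to conclude that the vertex $\ttt{v}^k_*$ on the final entry edge $e^k_*$ satisfies $\|\ttt{v}^k_* - \ttt{b}^k_*\| = \gamma(\sigma_k+1)$, where $\sigma_k=\sum_{i=1}^k s_{i,\iota_i}$. In the terminal gadget $g_{k+1}$, the final vertex $B(1)$ is placed, by construction, exactly so that a shortcut from $e^k_*$ through the terminal twist center reaches $B(1)$ iff it starts at the point of $e^k_*$ at distance $\gamma(\sigma+1)$ from $\ttt{b}^k_*$. Since $B_\diamond$ is feasible, its last shortcut must pass through this twist center (Lemma~\ref{hardness:projectioncenter}) and terminate at $B(1)$, forcing $\gamma(\sigma_k+1)=\gamma(\sigma+1)$, i.e.\ $\sigma_k=\sigma$. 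Hence $I$ solves the $k$-Table-SUM instance.

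For the reverse direction, given an index set $I$ with $\sum_{i=1}^k s_{i,\iota_i}=\sigma$, I take the curve $B_I$ constructed in Definition~\ref{hardness:one-touch-encoding}. Its vertices lie on the mirror edges of $B$ (hence on $\partial$hippodrome) and the consecutive vertices are connected by shortcuts that, by construction, pass through the successive twist centers. One then checks that every such shortcut has Fréchet distance at most $1$ to the corresponding subcurve of $T$: within each buffer zone this is immediate because the shortcut passes through the focal point, and between buffer zones it follows from the fact that the shortcut's endpoint on a mirror edge is matched to the corresponding point on $T$ at vertical distance exactly $1$ (the mirror edges of the initial and terminal gadgets lie on the boundary of the hippodrome, and for encoding gadgets the construction of $\delta,\delta'$ ensures the edges sit inside the hippodrome by Lemma~\ref{hardness:big4}). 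Finally, Lemma~\ref{hardness:one-touch-induction} applied to $B_I$ yields $\|\ttt{v}^k_*-\ttt{b}^k_*\|=\gamma(\sigma+1)$, which is exactly the starting point of the final shortcut that hits $B(1)$ through the terminal twist center. Hence $B_I$ is a feasible one-touch shortcut curve.

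The main obstacle will be the first step of the argument, namely showing rigorously that the intermediate edges visited in gadget $g_i$ must all carry the \emph{same} index $\iota_i$; this relies on the fact that the projection cones $e^i_j\to\ttt{p}^i_2\to e'^i_\ell$ are pairwise disjoint (so a shortcut through $\ttt{p}^i_2$ from $e^i_j$ can only land on $e'^i_j$), which follows from Lemma~\ref{hardness:big4} together with Observation~\ref{hardness:congruent}. Once this is in place, both directions collapse to a straightforward application of Lemmas~\ref{hardness:one-touch-induction} and~\ref{hardness:one-touch} combined with the explicit placement of $B(1)$ in the terminal gadget.
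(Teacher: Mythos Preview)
Your proposal is correct and follows essentially the same route as the paper: combine Lemmas~\ref{hardness:projectioncenter} and~\ref{hardness:one-touch} to force $B_\diamond$ to be a one-touch encoding $B_I$, apply Lemma~\ref{hardness:one-touch-induction} to pin down the position of $\ttt{v}^k_*$ on $e^k_*$, and use the placement of $B(1)$ in the terminal gadget to equate $\sigma_k$ with $\sigma$; the converse is the explicit construction of $B_I$. You are in fact slightly more careful than the paper in isolating the step that the shortcut through $\ttt{p}^i_2$ from $e^i_j$ must land on $e'^i_j$ with the \emph{same} index $j$; the paper absorbs this into the statement that $B_\diamond$ ``must be a one-touch encoding'' without spelling it out, relying implicitly on the fact that each $e'^i_j$ sits in the projection cone of $e^i_j$ through $\ttt{p}^i_2$ by construction.
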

\begin{proof}
	Lemma \ref{hardness:projectioncenter} and Lemma \ref{hardness:one-touch} imply that $B_\diamond$ must be a one-touch encoding as defined in Definition \ref{hardness:one-touch-encoding}. By Lemma \ref{hardness:one-touch-induction} the second last vertex of $B_\diamond$ is the point on the edge $e^k_*$ which is at distance $\gamma(\sigma_\diamond + 1)$ to $\ttt{b}^k_*$, where $\sigma_\diamond$ is the sum encoded by the subset selected by $B_\diamond$. The last vertex of $B_\diamond$ is equal to $B(1)$, which we placed in distance $\gamma(\sigma + 1)$ to the projection of $\ttt{b}^k_*$ through $\ttt{p}^{k+1}_1$. Thus the last shortcut of $B_\diamond$ passes through the last projection centre of the target curve, if and only if $\sigma_\diamond=\sigma$. It follows that if $\sigma_\diamond\neq\sigma$, then $B_\diamond$ cannot be feasible. For the second part of the claim, we construct a one-touch encoding as defined in Definition \ref{hardness:one-touch-encoding}. By the above analysis it will be feasible, if the subset sums to $\sigma$, since the curve visits every edge of $B$ in at most one point and in between uses shortcuts, which pass through every buffer zone from left to right and via the buffer zones projection centre.
\end{proof}

\subsection{Size of coordinates}

\begin{lemma}\label{hardness:size}
	Let $\lambda^i,n$ and $S_i=\{\sigma_1,\ldots,\sigma_n\}$ be given, and $\varepsilon < 16$. Then the length of an encoding gadget is in $\cO(\lambda^i n + \gamma\max S)$, and $\lambda^{i+1} = \lambda^i + \gamma\max S_i$.
\end{lemma}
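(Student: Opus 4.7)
\proof[Proof sketch]
The plan is to read both claims directly off the explicit coordinates given in Table~\ref{hardness:thetable}, invoking Observation~\ref{hardness:congruent} to track how lengths behave under projection through a focal point on $H_0$.

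For the second claim (computing $\lambda^{i+1}$), I first determine the length $|\overline{\bar{\ttt{a}}\,\bar{\ttt{b}}}|$. Step~3 of the table defines $\bar{\ttt{a}} = \overline{\ttt{d}'_1\,\ttt{p}_3}\cap H_1$ and $\bar{\ttt{b}} = (\overline{\ttt{c}'_1\,\ttt{p}_3}\cap H_1)-(\gamma\max S_i,0)$. Since $\ttt{c}'_1$ and $\ttt{d}'_1$ both lie on $H_{-1}$ with $|\overline{\ttt{c}'_1\,\ttt{d}'_1}| = \lambda^i$ (which follows from Observation~\ref{hardness:congruent} applied to the projection of $e^{i-1}_*$ through $\ttt{p}_1$ and then through $\ttt{p}_2$), another application of Observation~\ref{hardness:congruent} through $\ttt{p}_3$ shows that the projection onto $H_1$ is again of length $\lambda^i$. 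Adding the horizontal shift $\gamma\max S_i$ yields $|\overline{\bar{\ttt{a}}\,\bar{\ttt{b}}}| = \lambda^i + \gamma\max S_i$. Step~4 then defines $\ttt{a}^i_* = \overline{\bar{\ttt{b}}\,\ttt{p}_4}\cap H_{-1}$ and $\ttt{b}^i_* = \overline{\bar{\ttt{a}}\,\ttt{p}_4}\cap H_{-1}$, and since $\ttt{p}_4 \in H_0$ and the source and target lines $H_1,H_{-1}$ are equidistant from $H_0$, Observation~\ref{hardness:congruent} gives $\lambda^{i+1} := |\overline{\ttt{a}^i_*\,\ttt{b}^i_*}| = |\overline{\bar{\ttt{a}}\,\bar{\ttt{b}}}| = \lambda^i + \gamma\max S_i$.

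For the length bound, I sum the horizontal distances between consecutive landmarks on the gadget, namely between $\ttt{a}^{i-1}_*$, the four focal points $\ttt{p}_1,\ldots,\ttt{p}_4$, and $\ttt{a}^i_*$. From the table these offsets are, respectively, $\delta$, $\delta+\delta'$, $\delta+\delta'+\gamma\max S_i$, $\delta+\delta'+\gamma\max S_i$, and a final $O(\delta + \lambda^{i+1})$ term for the tail past $\ttt{p}_4$. Substituting $\delta' = \lambda^i + \varepsilon = O(\lambda^i)$ (using $\varepsilon < 16$) and $\delta = \max(2\varepsilon+1,\,(n-1)(\lambda^i+\beta)-\delta') = O(n\lambda^i)$ (since $\beta$ is bounded by a constant depending on $\lambda^i$), the total horizontal span is $O(n\lambda^i + \gamma\max S_i)$, which together with the first part gives the claimed bound.

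The only mildly subtle step is bookkeeping the contribution of $\beta$ inside $\delta$; this is where one must be careful, but it reduces to recalling that $\beta$ is a spacing parameter fixed once for the whole construction and can be absorbed into the $O(n\lambda^i)$ factor. Everything else is a direct substitution into the closed-form expressions of Table~\ref{hardness:thetable}. \qed
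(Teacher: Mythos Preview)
Your proof is correct and follows essentially the same approach as the paper: read off the horizontal offsets between the landmarks $\ttt{a}^{i-1}_*,\ttt{p}_1,\ldots,\ttt{p}_4,\ttt{b}^i_*$ from Table~\ref{hardness:thetable}, substitute $\delta' = \lambda^i+\varepsilon$ and $\delta = \max(2\varepsilon+1,(n-1)(\lambda^i+\beta)-\delta')$, and sum. The paper simply asserts $\lambda^{i+1}=\lambda^i+\gamma\max S_i$ ``from the construction,'' whereas you actually trace it out via Observation~\ref{hardness:congruent} through $\ttt{p}_2$, $\ttt{p}_3$, and $\ttt{p}_4$; that extra care is fine and arguably clearer. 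Your flagged caveat about $\beta$ is exactly the point where the paper is also loose (it cites only the lower bound $\beta\geq\max(4\lambda^i+32,\lambda^{i+1})$ and then silently absorbs $n\beta$ into $\cO(\lambda^i n)$), so you are matching the paper's level of rigor there.
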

\begin{proof}
	From the construction we get $\delta^i=\max(2\varepsilon + 1,(n-1)(\lambda^i+\beta)-\delta'^i)$ and $\delta'^i = \lambda^i + \varepsilon$.
	From Lemma \ref{hardness:itallcomestogethernow} we get $\beta\geq\max(4\lambda^i+32,\lambda^{i+1})$.
	From the construction we see that $p_1 - b_*^{i-1}=\lambda^i+\delta^i$, $p_2-p_1=\delta^i+\delta'^i$, $p_3-p_2 = \gamma\max S_i + \delta^i + \delta'^i$, $p_4-p_3=\delta'^i+\gamma\max S_i + \delta^i$ and $b_*^{i+1} - p_4 = \varepsilon$, with the length of the next edge being $\lambda^i + \gamma\max S_i$.
	With all these values we get
	\begin{align*}
	b_*^{i+1}- b_*^{i-1} &= \lambda^i + 4\delta + 3\delta'+2\gamma\max S_i+\varepsilon\\
	&< 4\delta + 4\delta'+2\gamma\max S_i\\
	&< 4(n-1)(\beta + \lambda ^i) + 8\varepsilon + 4 +2\gamma\max S_i\in \cO(\lambda^i n + \gamma\max S_i).
	\end{align*}
\end{proof}

\begin{lemma}
	The curves can be constructed in $\cO(kn)$ time. Furthermore, if we choose $\varepsilon =\frac{5}{2}$ and $\beta = \max(32+4\lambda^k,\lambda^{k+1}) + 1$, then the coordinates used are in $\cO(kn\gamma\sum_{i=0}^k\max S_i)$.
\end{lemma}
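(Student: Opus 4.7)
The plan splits into two independent parts: a running time bound for the construction, and a magnitude bound for the coordinates. I would prove them in this order because the coordinate bound only needs to reason about extents, whereas the running time argument must account for the dependencies between the parameters $\lambda^i$, $\beta$ and the individual gadgets.

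For the construction time, I would first compute the sequence $\lambda^0,\ldots,\lambda^{k+1}$ using the recurrence $\lambda^{i+1}=\lambda^i+\gamma\max S_i$ from Lemma~\ref{hardness:size}, starting from $\lambda^0=2\gamma$ as dictated by the initialization gadget (the edge from $\ttt{a}^0_*=(3\gamma+2\varepsilon,-1)$ to $\ttt{b}^0_*=(\gamma+2\varepsilon,-1)$). This precomputation takes $\cO(k)$ time and immediately fixes the global $\beta$ as well as the auxiliary parameters $\delta^i,\delta'^i$ for every gadget. Each encoding gadget $g_i$ then consists of $\cO(n)$ mirror edges together with $\cO(n)$ connector edges; by Table~\ref{hardness:thetable} every vertex is a constant-size arithmetic expression or a single line intersection whose inputs are already known, so $g_i$ is built in $\cO(n)$ time. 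The initialization and terminal gadgets are of constant size, giving a total of $\cO(kn)$.

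For the coordinate magnitudes, I would argue that the horizontal extent of the construction dominates: every $y$-coordinate lies in a strip of height $\cO(1)$ around the hippodrome, since the connector-edge routings can be placed within constant distance of $[-1,1]$ while still avoiding the buffer zones (by Lemma~\ref{hardness:bufferzone}). By Lemma~\ref{hardness:size}, gadget $g_i$ has horizontal length $\cO(\lambda^i n + \gamma\max S_i)$. Unrolling the recurrence gives $\lambda^i=\lambda^0+\gamma\sum_{j<i}\max S_j=\cO\bigl(\gamma\sum_{j=0}^k\max S_j\bigr)$, so summing over $i=0,\ldots,k+1$ yields a total extent of $\cO\bigl(kn\gamma\sum_{i=0}^k\max S_i\bigr)$, which bounds the magnitudes of all coordinates.

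Finally, I would verify the chosen parameters: $\varepsilon=5/2$ lies in $(2,16)$, subsuming the condition $\varepsilon<16$ of Lemma~\ref{hardness:big4} and the condition $16>\varepsilon>2$ of Lemma~\ref{hardness:projectioncenter}; and $\beta=\max(32+4\lambda^k,\lambda^{k+1})+1$ strictly exceeds both $32+4\lambda^k$ and $\lambda^{k+1}$, so the hypothesis of Lemma~\ref{hardness:itallcomestogethernow} is met. No substantive obstacle arises; the only subtlety is ensuring that $\beta$ and all $\lambda^i$ are precomputed \emph{before} any vertex is emitted, so that every vertex construction really does run in constant time.
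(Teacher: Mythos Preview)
Your proposal is correct and follows essentially the same approach as the paper: count $\cO(n)$ vertices per gadget over $k+2$ gadgets for the time bound, then invoke Lemma~\ref{hardness:size} together with the unrolled recurrence $\lambda^i=\lambda^0+\gamma\sum_{j<i}\max S_j$ to bound the total horizontal extent. Your version is in fact more careful than the paper's, explicitly noting that $\beta$ (which depends on $\lambda^k,\lambda^{k+1}$) must be precomputed before any vertex is emitted, and verifying that the chosen $\varepsilon$ and $\beta$ satisfy the hypotheses of the earlier lemmas.
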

\begin{proof}
	Each of the constructed gadgets uses $\cO(n)$ vertices, since we need to place $\cO(n)$ mirror and connector edges. Because we construct $k+2$ gadgets, the overall number of vertices used is in $\cO(kn)$. The curves $T$ and $B$ can be constructed using a single iteration from left to right, therefore the overall construction takes $\cO(kn)$ time.	
	By Lemma \ref{hardness:size} the length of a gadget $g_i$ is in $\cO(\lambda^i n + \gamma\max S_i)$. Since $\lambda^i = \lambda^{i-1} + \gamma\max S_{i-1}$ and $\lambda^0 = 2\gamma$, the maximum length of any $\lambda^i$ is in $\cO(\gamma\sum_{i=0}^k\max S_i)$. Hence, the claim follows.
\end{proof}

\subsection{Correctness for general shortcut curves}\label{hardness:general}

When we consider general feasible shortcut curves that might not necessarily be one-touch, they might follow a mirror edge for a short while instead of immediately taking the next shortcut.
This results in a small error when comparing the shortcut curve with a one-touch curve encoding the same index set.
We now want to contain this incremental error introduced with the control parameter $\gamma$.

\begin{lemma}\label{hardness:partial-sum-error}
	Choose $16 > \varepsilon > 2$ and $\beta > \max(32+4\lambda^k,\lambda^{k+1})$. Given a feasible shortcut curve $B_{\diamond}$, let $v_i$ be any point of $B_\diamond$ on the exit-edge $e^{i}_*$ of the gadget $g_i$. For all $i$ let $\sigma_i$ be the partial sum encoded by $B_\diamond$. If we choose $\gamma>\xi_i$, then
	
	\[b^i_*+\gamma(\sigma_i +1) - \xi_{i} \leq v_i \leq b^i_*+\gamma(\sigma_i +1) + \xi_{i}\]	
	holds, where $\xi_i = 16i+5$ is an upper bound of the maximum error possible for any shortcut curve traversing up to gadget $g_i$.
\end{lemma}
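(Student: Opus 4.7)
The plan is to proceed by induction on $i$, tracking how the deviation between a general feasible shortcut curve and its idealized one-touch counterpart accumulates across gadgets. For the base case $i = 0$, any feasible shortcut curve starts at $B(0) = (0,1)$ and must pass through the single projection center $\ttt{p}^0$ (Lemma~\ref{hardness:projectioncenter}). The only freedom is where along $e^0_*$ the curve lands; analysing the two shortcuts that form the initialization gadget together with the option of walking a small distance along $e^0_*$ before continuing bounds the deviation from the ideal position $b^0_*+\gamma$ by the constant $5$ encoded in $\xi_0$.

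For the inductive step, assume $|v_{i-1} - (b^{i-1}_* + \gamma(\sigma_{i-1}+1))| \leq \xi_{i-1}$ on the entry edge $e^{i-1}_*$. In gadget $g_i$, the feasible shortcut curve takes exactly four shortcuts through the projection centers $\ttt{p}^i_1,\ldots,\ttt{p}^i_4$ by Lemma~\ref{hardness:projectioncenter}, visiting in between one mirror edge $e^i_j$, one mirror edge $e'^i_j$, and the edge $e'^i$, as enforced by Lemma~\ref{hardness:one-touch}'s $4$-separation argument (here the condition $\gamma > \xi_i$ ensures the accumulated error is smaller than the spacing between the encoded edges and that $j$ is uniquely determined). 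I would now track the deviation through each of the four projections using Observation~\ref{hardness:congruent} for the scaling of positions and Observation~\ref{hardness:coneshifter} for comparing offsets between the actual and ideal paths. At each projection centre the deviation is rescaled by the ratio of the distances from the centre to the two edges involved; by the explicit values in Table~\ref{hardness:thetable} these ratios are bounded, and the parameters chosen in Lemma~\ref{hardness:big4} guarantee no loss of control.

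In addition to the rescaling at each projection, the shortcut curve may slide along the visited mirror edge for a short distance before taking the next shortcut; this adds an additive error bounded by the length of the mirror edge (which itself is bounded in terms of $\lambda^i$ and the gadget parameters). The crux is to combine these four multiplicative and four additive contributions and show they sum to at most $16$, so that $\xi_i = \xi_{i-1} + 16$. I would also verify that the analogue of Lemma~\ref{hardness:one-touch-induction} survives the perturbation, i.e. that after the four projections the point $v_i$ on $e^i_*$ still satisfies $v_i = b^i_* + \gamma(\sigma_{i-1}+1) + \gamma s_{i,j} \pm (\text{accumulated error}) = b^i_* + \gamma(\sigma_i + 1) \pm \xi_i$.

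The main obstacle will be the bookkeeping for how a deviation on an incoming edge propagates through a projection center to an outgoing edge, including the slack from walking along the mirror edge between two consecutive shortcuts. Specifically, one must verify that each of the four projection steps in the gadget contributes at most a bounded additive error (at most $4$ each, say), which requires showing that the relevant distance ratios in the congruent-triangle picture are at most some constant. This in turn follows from the relations $\delta^i = \Theta((n-1)(\lambda^i+\beta))$ and $\delta'^i = \lambda^i + \varepsilon$, together with the parameter regime from Lemma~\ref{hardness:big4}. Once these bounds are in place, the induction closes and the bound $|v_i - (b^i_* + \gamma(\sigma_i+1))| \leq \xi_i$ follows, with $\gamma > \xi_i$ ensuring that the index $j$ of the visited mirror edge at each projection was indeed unambiguously determined.
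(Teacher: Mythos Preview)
Your inductive skeleton matches the paper's proof, but you misidentify the two numerical ingredients that force the per-gadget error increment to be a fixed constant.

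First, the additive error from sliding along a mirror edge is \emph{not} bounded by the length of that edge. The edges $e^i_j$ have length $\alpha_j\lambda^{i-1}$, which grows with $i$; if the walk contributed an error of that order, the bound $\xi_i=16i+5$ would fail. The actual bound is $4$ and comes from Lemma~\ref{hardness:monotone}: every feasible shortcut curve is rightwards $4$-monotone, while each mirror edge is oriented leftwards, so the curve can traverse at most distance $4$ along the edge before it must take the next shortcut.

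Second, the multiplicative factor at each projection is not controlled via the parameters $\delta^i,\delta'^i$. What the paper uses is that all mirror edges $e^i_j$ and $e'^i_j$ lie at heights $\alpha_j,\alpha'_j\in[\tfrac12,1]$ (this is established in Lemma~\ref{hardness:big4} and the proof of Lemma~\ref{hardness:bufferzone}), while $e'$ and $e^i_*$ sit at height exactly $1$. Hence each projection rescales a horizontal offset by at most a factor $2$. Tracking the interval of possible $x$-coordinates explicitly through the four projections, inserting a walk of $\le 4$ on each intermediate edge, yields a final offset from $b^i_*+\gamma(\sigma_i+1)$ in $[-\xi_{i-1}-4/\alpha'_j-4,\;\xi_{i-1}+4/\alpha_j+4]$, whose width is at most $\xi_{i-1}+12$; the paper records the looser increment $16$.

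Two smaller corrections: the initialization gadget contains a single shortcut, not two, and the landing point on $e^0_*$ is exact --- only the subsequent leftward walk of length $\le 4$ produces the base-case error. Also, do not invoke Lemma~\ref{hardness:one-touch} here, since it is stated for one-touch curves; what you actually need is only the height bound $\alpha_j\ge\tfrac12$ together with $4$-monotonicity to conclude that the curve cannot leave $e^i_j$ by walking and must shortcut through $\ttt{p}^i_2$.
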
	

\begin{proof}
	We prove this claim by induction on $i$. For $i=0$ the claim follows by construction of the initialization gadget: As $B_\diamond$ has to start at $B(0)$ and it has to lie completely in the hippodrome, it has to take a shortcut, and since it has to pass a twist, it must traverse its projection centre. The only point, where this shortcut can end is on the entry edge of $g_1$. By construction this point is at a distance of $\gamma$ from $b^0_*$. Since the edge is oriented leftward, $B_\diamond$ can only walk in that direction. However, $B_\diamond$ is rightwards $4$-monotone. It follows that	
	\[b^0_*+\gamma-4\leq v_1\leq b^0_*+\gamma.\]
	Since $\xi_0=5>4$ and $\sigma_0 = 0$, this implies the claim for $i=0$.
	
	For $i>0$ the curve $B_\diamond$ entering gadget $g_i$ from edge $e^{i-1}_*$ has to pass the first twist, and has to do so through the projection point. By induction	
	\[b^{i-1}_*+\gamma(\sigma_{i-1} +1) - \xi_{i-1} \leq v_{i-1} \leq b^{i-1}_*+\gamma(\sigma_{i-1} +1) + \xi_{i-1}.\]	
	Since $\gamma>\xi_i=\xi_{i-1}+16$, it follows that the distance of $v_i$ to the endpoints of the edge is	
	\[\gamma(\sigma_{i-1} +1) - \xi_{i-1} \geq \gamma - \xi_{i-1}> 16\]	
	and	
	\[\gamma(\sigma_{i-1} +1) + \xi_{i-1} \leq (\lambda^{i-1}-\gamma)+\xi_{i-1} \leq \lambda^{i-1}-16,\]
	thus $v_{i-1}$ lies at a distance greater than $4$ from the endpoints of the entry-edge of gadget $g_i$.
	
	Therefore, the only edges that can be hit through the projection point $p_1^i$ are $e_j$. 
	Denote by $o_{\max}=\gamma(\sigma_{i-1} +1) + \xi_{i-1}$ and $o_{\min} = \gamma(\sigma_{i-1} +1) - \xi_{i-1}$ the maximal and minimal offset $v_i$ may have from $b^{i-1}_*$.
	Furthermore, let $\alpha_j$ be the $y$-coordinate of the edge $e^i_j$, and similarly $\alpha_j'$ for the edge $e'^i_j$.
	We will again omit the top index of $i$, since it is fixed for the gadget $g_i$ from now on.
	Then the interval of $x$-coordinates, where the shortcut may end on $e_j$ is 	
	\[[c_j - \alpha_j o_{\max}\,,\,\,c_j - \alpha_j o_{\min}].\]	
	The length of the edge $e_j$ is $\alpha_j\lambda^{i-1}$. Thus the endpoint lies inside the edge. Now $B_\diamond$ may walk on this edge as well. Again, it can do so only leftwards. As the curve is rightwards $4$-monotone, it may do so a distance of at most 4. But since $\alpha_j\geq\frac{1}{2}$, and we already saw that $o_{\max}>\lambda+16$, the shortcut curve can not leave this edge by walking. Thus all possible points for $B_\diamond$ are determined by the interval	
	\[[c_j - \alpha_j o_{\max} - 4\,,\,\,c_j - \alpha_j o_{\min}].\]	
	Hence, the shortcut curve must leave this edge via a shortcut through $\ttt{p}_2$. It then may again walk up to $4$ to the left resulting in the interval	
	\[\left[d'_j + \alpha'_j o_{\min} -4\,,\,\,d'_j + \alpha'_j \left(o_{\max} +\frac{4}{\alpha_j}\right)\right].\]	
	Repeated application for the next two edges results in the interval for the edge $e'$	
	\[\left[\bar{a} - \gamma s_{i,j} -  \left(o_{\max} +\frac{4}{\alpha_j}\right) - 4\,,\,\,\bar{a} - \gamma s_{i,j} - \left(o_{\min} -\frac{4}{\alpha_j'}\right)\right].\]	
	Note that $\bar{a}_j = \bar{a} - \gamma s_i$ by construction. And for $e_*$ it lands in the interval	
	\[\left[b_* + o_{\min} + \gamma s_{i,j} - \frac{4}{\alpha_j'} - 4\,,\,\,b_* +  o_{\max} + \gamma s_{i,j} +\frac{4}{\alpha_j} + 4\right].\]	
	Since $\alpha_j\geq\frac{1}{2}$ and $\alpha'_j\geq\frac{1}{2}$, we get for the item $s_{i,j}$ taken by the shortcut curve	
	\[b^i_* + \gamma (\sigma_{i} +1) - \xi_{i} = b^i_*+\gamma s_{i,j} + \gamma (\sigma_{i-1} +1) - \xi_{i-1} - 16 \leq v_i \]	
	as well as 	
	\[v_i \leq b^i_*+\gamma s_{i,j} + \gamma (\sigma_{i-1} +1) + \xi_{i-1} + 16 =b^i_* + \gamma (\sigma_{i} +1) + \xi_{i},\]	
	implying the claim.
\end{proof}	

\hardness*
\begin{proof}
    Let some $k$-Table-SUM instance be given.
    Let $16>\eps>2$ and $\beta>\max(32+4\lambda^k,\lambda^{k+1})$, as well as $\gamma\geq16(k+1) + 5$.
    Let $B_\diamond$ be any feasible shortcut curve of the constructed instance for the $k$-Table-SUM instance.
	Since $B_\diamond$ is feasible, it must visit the exit edge of the last gadget $g_k$ at distance $\gamma(\sigma + 1)$ to $b^k_*$, since this is the only point that connects to $B(1)$ via a shortcut. Let $v_k=b^k_*+\gamma(\sigma + 1)$ be the $x$-coordinate of this visiting point, and let $\sigma_k$ be the sum of the subset encoded by $B_\diamond$. Lemma \ref{hardness:partial-sum-error} implies that 	
	\[b^k_* + \gamma(\sigma_k+1)-\xi_k\leq v_i = b^k_* + \gamma(\sigma_k+1) \leq b^k_* + \gamma(\sigma_k+1)+\xi_k,\]	
	since $\gamma = 16(k+1)+5>\xi_k$. Therefore,	
	\[\sigma_k - \frac{\xi_k}{\gamma}\leq\sigma\leq\sigma_k + \frac{\xi_k}{\gamma}.\]	
	Since $\gamma >\xi_k$ it follows that $\sigma_k$ must be $\sigma$, since both are integers.
	Hence, any feasible shortcut curve solves the $k$-Table-SUM instance, implying the claim.
\end{proof}

\bibliography{mybib}{}
\bibliographystyle{plain}

\newpage
\appendix

\section{Intersection Finder}\label{appendix:intersectionfinder}
\begin{algorithm}
    \caption{Intersection Finder}
	\label{appendix:alg_intersectionfinder}
    \begin{algorithmic}[1]
        \Procedure{IntersectionFinder}{$\varepsilon\delta$-simplifications $X'$ and $Y'$, $\delta$}
        
        \State Insert every start- and end-point of edges on $X'$, and every start- and end-points of arcs of $\delta$-capsules like in Figure~\ref{appendix:fig:sweeper} into $E$, a priority queue
        
        \State Sort $E$ by firstly the $x$-coordinate, and secondly by the $y$-coordinate
        
        \State Let $A$ be a self-balancing empty binary tree, and $I$ an empty array
        
        \While{$E\neq\emptyset$}
        
            \State pop the head object off $E$ into $x$
            
            \If{$x$ is the first vertex to be inserted of an object $o$}
            
                \State sorted insert $o$ into $A$ by its current $y$-coordinate
                
                \State compute the point of intersection of $x$ and its at most two neighbours
                
                \State sorted insert this point by its $x$-coordinate into $E$
            
            \EndIf
            \If{$x$ is the second vertex to be inserted of an object $o$}
                \State remove $o$ from $A$
            \EndIf
            \If{$x$ corresponds to an intersection between $o$ and $o'$}
                \State insert the intersection to $I$
                \State swap $o$ and $o'$ in $A$, and compute the point of intersection with their new neighbours updating $E$
            \EndIf
        \EndWhile
        
        \State Return $I$
        
        \EndProcedure
    \end{algorithmic}
\end{algorithm}

For completeness sake we provide a detailed description of the modified version of the Bentley-Ottman sweep-line algorithm \cite{Bentley1979AlgorithmsfRaCGI}.
\begin{lemma}\label{appendix:lemma-intersection}
	Given two polygonal curves $X$ and $Y$ in $\bR^2$, a parameter $\delta\geq0$, and let $X'$ and $Y'$ be their $\varepsilon\delta$-simplifications. One can find all $\mathcal{N}_{\leq\delta}(X',Y')$ cells in the free-space diagram that have non-empty $\delta$-free-space in $\cO(\frac{cn}{\varepsilon}\log(\frac{cn}{\varepsilon}))$ time.
\end{lemma}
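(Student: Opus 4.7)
The plan is to adapt the classical Bentley--Ottmann sweep-line algorithm~\cite{Bentley1979AlgorithmsfRaCGI} to handle the geometric objects that arise here, namely straight line segments (the edges of $X'$ and $Y'$) and the boundaries of their $\delta$-capsules, i.e.\ Minkowski sums of edges with a disk of radius $\delta$. A $\delta$-capsule boundary consists of two parallel straight segments joined by two semicircular arcs, so the sweep line encounters only constantly many smooth pieces per object and the standard event-scheduling machinery (an event priority queue $E$ and a status structure $A$ holding the objects currently intersecting the sweep line in order of $y$-coordinate) goes through with only cosmetic modifications: for each active object we need an operation that, given a sweep-line $x$-coordinate, returns the current $y$-coordinate, and an operation that, given two adjacent objects, returns the next intersection of their boundaries. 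Both take constant time since the objects have constant description complexity.

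First I would precompute $X' = \simpl(X,\varepsilon\delta)$ and $Y' = \simpl(Y,\varepsilon\delta)$ in $\mathcal{O}(n)$ time, as well as the $\mathcal{O}(n)$ capsule boundaries, and initialize $E$ with the $\mathcal{O}(n)$ left- and right-endpoint events of all these objects. Then I would run the sweep as in Algorithm~\ref{appendix:alg_intersectionfinder}: pop the leftmost event, insert/remove/swap in $A$ accordingly, schedule new neighbour-pair intersection events, and record every detected intersection. After the sweep terminates, I would sort the reported intersections in lexicographic order of the two parameter indices (edge of $B$, edge of $T$) and trace the edges of $B$ through the arrangement of $T$-capsules to identify all cells $C_{i,j}$ with $\dfree{i}{j} \neq \emptyset$: a cell is non-empty iff either boundary of the capsule around $T$-edge $i$ intersects $B$-edge $j$, or $B$-edge $j$ lies entirely inside that capsule; the latter case is detected by looking at the capsule-membership status between two consecutive intersection events along $B$.

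The key ingredient that drives the running time is Lemma~\ref{approx:bounded-intersection}, which bounds the total number of pairwise intersections among edges and $\delta$-capsule boundaries of $X'$ and $Y'$ by $\mathcal{O}(cn\varepsilon^{-1})$. Feeding this into the standard Bentley--Ottmann analysis, the sweep processes $\mathcal{O}(n + cn\varepsilon^{-1}) = \mathcal{O}(cn\varepsilon^{-1})$ events, each of which performs $\mathcal{O}(\log(cn\varepsilon^{-1}))$ work on $E$ and $A$, for a total of $\mathcal{O}(cn\varepsilon^{-1}\log(cn\varepsilon^{-1}))$. The post-processing (sorting and tracing along $B$) fits in the same bound, yielding the claimed complexity.

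The main obstacle I expect is making the sweep-line invariant robust with respect to the circular-arc parts of the capsules: two capsule boundaries can touch tangentially or cross twice within a narrow $x$-interval, so the usual lemma that an intersection is found immediately before it happens only holds once the swap-and-reschedule step is argued correctly for this mixed family of segments and arcs. The standard fix, which I would spell out, is to keep at most $\mathcal{O}(1)$ pieces per object in the status structure simultaneously and to insert a new piece of the same capsule when the sweep line reaches the arc--segment junction; since each capsule contributes only a constant number of such internal events, the event count and overall running time are unaffected.
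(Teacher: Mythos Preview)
Your proposal is correct and follows essentially the same approach as the paper: a Bentley--Ottmann sweep over the edges and their $\delta$-capsule boundaries (split into constantly many monotone pieces), followed by tracing along the base curve to pick up cells whose edge lies entirely inside a capsule, with the event count bounded via Lemma~\ref{approx:bounded-intersection}. The paper's own proof is terser and omits your discussion of arc--arc tangencies, but the algorithm and the analysis are the same.
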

\begin{wrapfigure}[13]{R}{0cm}
	\includegraphics[scale=0.6]{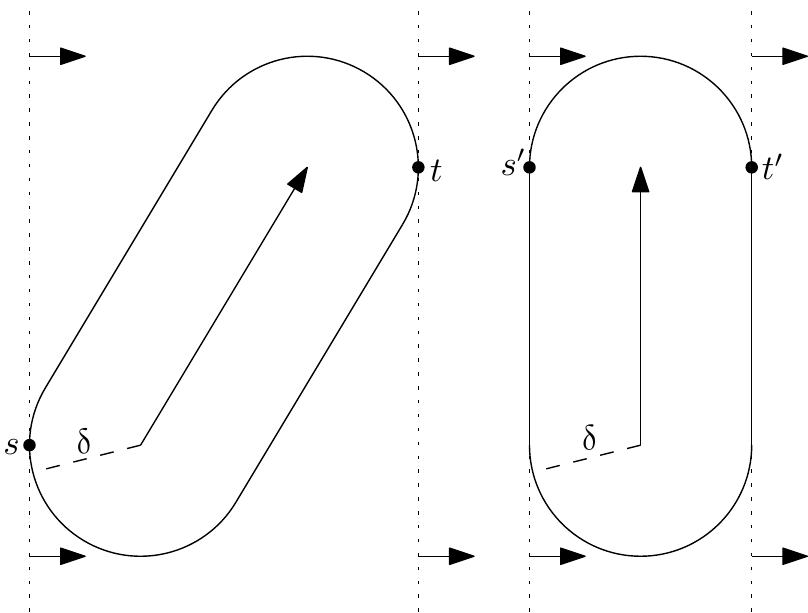}
	\caption{Arc definition for the intersection procedure.}
	\label{appendix:fig:sweeper}
\end{wrapfigure}
\begin{proof}
    Without loss of generality it suffices to find all edges of the curve $X'$ that enter and exit a $\delta$-neighbourhood of any edge of $Y'$, since any edge that is completely contained
    in this neighbourhood lies between two edges entering and leaving the neighbourhood.
	In the special case that the start or end vertex of $X'$ lies in such a neighbourhood it is easily checked by looking whether the first (resp. last) such edge is entering or leaving the neighbourhood. Entering and exiting such a neighbourhood is the same as intersecting its boundary. Thus we can modify for example the Bentley-Ottmann algorithm \cite{Bentley1979AlgorithmsfRaCGI} to find all intersections in a set of edges (refer to Algorithm~\ref{appendix:alg_intersectionfinder}).
	The main idea is to sweep along the $x$-axis and keep track of all objects that cross the sweeping line in an array of size $\cO(n)$.
	Every time a new object enters the array it checks with its at most two neighbours how far the sweeping line would have to sweep to get to the intersection point of the new object. 
	If an intersection occurs at some time in the future, we add this event to the event queue of the sweeping line.
	If the sweeping line is at an intersection event, it swaps the two objects in question and updates all new $\cO(1)$ neighbours.
	We can modify this easily to work with capsules (the geometric shape of the $\delta$-neighbourhood of an edge) by introducing two sections of the capsule into the array instead of a single line, as can be seen in Figure~\ref{appendix:fig:sweeper}.
	Intersections with its neighbours can still be checked and updated in $\cO(1)$.
	The algorithm runs in $\cO((n+k)\log(n+k))$ time, for $k$ intersecting objects.
	From Lemma \ref{approx:bounded-intersection} we know, that the number of intersections of the described objects is bound in $\cO(\frac{cn}{\eps})$.
	Hence, Lemma \ref{approx:c-pac-lin-comp} implies the claim.
\end{proof}
\section{Modified Algorithm}\label{appendix:buchin}

Here we restate the algorithm presented by Buchin, Driemel and Speckmann~\cite{Buchin2013ComputingTF}, with our modifications, resulting in an improved running time.

\begin{algorithm}[htp]
\caption{Modified Algorithm}
\label{appendix:buchinalgorydm}
\begin{algorithmic}[1]
    \Procedure{ModifiedDecider}{curves $T$ and $B$, $\delta > 0$ and $0<\varepsilon\leq 1$}
    \State Let $\eps' = \frac{\eps}{20}$
    \State Assert that $||T(0)-B(0)||\leq\delta$ and $||T(1)-B(1)||\leq\delta$
    \State Let $\cA,\overline{\cA},g_r,\overline{g}_r,g_l,\overline{g}_l$ be arrays of size $n_1$
    \For{$j=1,\ldots,n_2$}
        \State Update $\overline{\cA}\leftarrow\cA,\overline{g}_l\leftarrow g_l,\overline{g}_r\leftarrow g_r$
        \For{$i=1,\ldots,n_1$}
            \If{$i=1$ and $j=1$}
                \State $P_{i,j}=\dfree{i}{j}$
            \Else 
                \State Retrieve $VR_{(i,j-1)}$ and $HR_{(i-1,j)}$ from $\overline{\cA}[i]$ and $\cA[i-1]$
                \State Compute $N_{i,j}$ from $V\,R_{(i,j-1)}$ and $H\,R_{(i,j-1)}$
    		    \State Let $V_{i,j}=\textsc{verticalTunnel}(\overline{g}_l[i],C_{i,j},\delta)$
    		    \State Let $D_{i,j}=\textsc{apxDiagonalTunnel}(\overline{g}_r[i-1],C_{i,j},\eps',3\delta)$
    		    \State Let $P_{i,j}=Q(N_{i,j} \cup D_{i,j}) \cup V_{i,j}\cap\dfree{i}{j}$
            \EndIf
            \If{$P_{i,j}\neq\emptyset$}
                \State Update $g_l[i]$ and $g_r[i]$ using $P_{i,j}$
    		    \State Compute $V\,R_{(i,j)}$ and $H\,R_{(i,j)}$ and store them in $\cA[i]$
            \Else
    		    \State Update $g_r[i]$ using $g_r[i-1]$
            \EndIf
        \EndFor
    \EndFor
    \If{$(1,1)\in\cA[n_1]$}
    	\State Return '$d_\cS(T,B)\leq(3+\eps)\delta$'
	\Else
		\State Return '$d_\cS(T,B)>\delta$'
	\EndIf
    \EndProcedure
\end{algorithmic}
\end{algorithm}

\end{document}